\documentclass[11pt]{article}
\usepackage[utf8]{inputenc}
\usepackage{hyperref}

\usepackage{amsmath}
\usepackage{amssymb}
\usepackage{amsthm}
\usepackage{fullpage}
\usepackage{verbatim}
\usepackage{cleveref}

\usepackage{times}
\usepackage{tikz}
\usetikzlibrary{fadings, patterns, decorations}

\newtheorem{theorem}{Theorem}
\newtheorem{lemma}[theorem]{Lemma}
\newtheorem{lem}[theorem]{Lemma}
\newtheorem{defn}[theorem]{Definition}

\usepackage{tikz}
\usetikzlibrary{arrows.meta}
\usetikzlibrary{patterns}
\usetikzlibrary{decorations.pathreplacing, calligraphy}

\renewcommand{\epsilon}{\varepsilon}

\usepackage[colorinlistoftodos,textsize=tiny,textwidth=2cm,color=green!50!gray]{todonotes} 

\title{A $(2+\varepsilon)$-approximation algorithm for the general scheduling problem in quasipolynomial time}
 \author{Alexander Armbruster \and Lars Rohwedder \and Andreas Wiese}
\date{}
\author{Alexander Armbruster\thanks{Technical University of Munich (\href{mailto:alexander.armbruster@tum.de}{alexander.armbruster@tum.de}). Funded by the Deutsche Forschungsgemeinschaft (DFG, German Research Foundation) – Project Number 551896423.}
	\and Lars Rohwedder\thanks{University of Southern Denmark (\href{mailto:rohwedder@sdu.dk}{rohwedder@sdu.dk}). Supported by Dutch Research Council
		(NWO) project ``The Twilight Zone of Efficiency: Optimality of Quasi-Polynomial Time Algorithms" [grant number OCEN.W.21.268]}    
	\and Andreas Wiese\thanks{Technical University of Munich (\href{mailto:andreas.wiese@tum.de}{andreas.wiese@tum.de}). Funded by the Deutsche Forschungsgemeinschaft (DFG, German Research Foundation) – Project Number 551896423.}}

\global\long\def\lef{\mathrm{left}}%
\global\long\def\mi{\mathrm{mid}}%
\global\long\def\ri{\mathrm{right}}%
\global\long\def\OPT{\mathrm{OPT}}%
\global\long\def\APX{\mathrm{APX}}%
\global\long\def\filled{\mathrm{filled}}%
\global\long\def\N{\mathbb{N}}%
\global\long\def\R{\mathcal{R}}%
\global\long\def\rows{\mathcal{W}}%
\global\long\def\rsub{\mathrm{right}}%
\global\long\def\lsub{\mathrm{left}}%
\global\long\def\W{\mathcal{W}}%
\global\long\def\ce{\mathrm{center}}%
\global\long\def\add{\mathrm{add}}%
\global\long\def\Q{\mathcal{Q}}%
\global\long\def\spn{\mathrm{span}}%
\global\long\def\cost{\mathrm{cost}}%
\global\long\def\RR{\mathbb{R}}%
\global\long\def\Z{\mathbb{Z}}%
\global\long\def\L{\mathcal{L}}%

\global\long\def\cross{\mathrm{cross}}%
\global\long\def\done{\mathrm{done}}%
\global\long\def\G{\mathcal{G}}%

\DeclareMathOperator{\projy}{proj_y}

\global\long\def\cost{\mathrm{cost}}%


\definecolor{fillcolor}{RGB}{235,235,235}
\begin{tikzfadingfrompicture}[name=myfading]   
	\clip (0,0) rectangle (2,2);   
	\shade [top color=transparent!50, bottom color=transparent!50] (0,0) rectangle (2,1.5);   
	\shade [top color=transparent!100, bottom color=transparent!50] (0,1.5) rectangle (2,1.85);                 
	\shade [top color=transparent!100, bottom color=transparent!100] (0,1.85) rectangle (2,2);                                  
\end{tikzfadingfrompicture}

\begin{document}

\maketitle

\begin{abstract}
We study the general scheduling problem (GSP) which generalizes and
unifies several well-studied preemptive single-machine scheduling problems, such as weighted flow
time, weighted sum of completion time, and minimizing the total weight
of tardy jobs. We are given a set of jobs with their processing times
and release times and seek to compute a (possibly preemptive) schedule
for them on one machine. Each job incurs a cost that depends
on its completion time in the computed schedule, as given
by a separate job-dependent cost function for each job, and our objective is
to minimize the total resulting cost of all jobs. The best known result for
GSP is a polynomial time $O(\log\log P)$-approximation algorithm
[Bansal and Pruhs, FOCS 2010, SICOMP 2014].

We give a quasi-polynomial time $(2+\epsilon)$-approximation algorithm
for GSP, assuming that the jobs' processing times are quasi-polynomially
bounded integers. For the special case of the weighted tardiness objective,
we even obtain an improved approximation ratio of $1+\epsilon$. For this
case, no better result had been known than the mentioned $O(\log\log P)$-approximation
for the general case of GSP.
	Our algorithms use a reduction to an auxiliary geometric covering problem.
	In contrast to a related reduction for the special case of weighted flow time
	[Rohwedder, Wiese, STOC~2021][Armbruster, Rohwedder, Wiese, STOC~2023]
	for GSP it seems no longer possible to establish a tree-like structure for the rectangles to guide an algorithm that solves this geometric problem.
	Despite the lack of structure due to the problem itself,
	we show that an optimal solution can be transformed into a near-optimal solution that has certain structural properties. Due to those we can guess a substantial part of the solution quickly and partition the remaining problem in an intricate way, such that we can independently solve each part recursively.

\end{abstract}

\thispagestyle{empty}
\newpage
\setcounter{page}{1}

\section{Introduction}

We consider the following general and fundamental scheduling problem:
we are given one machine and a set of jobs $J$ where each job $j\in J$
is characterized by a processing time $p_{j}\in\N$ and a release
time $r_{j}\in\N$. We seek to compute a (possibly preemptive) schedule
for the jobs in $J$, i.e., each job $j\in J$ is processed for $p_{j}$
time units in total such that $j$ is not processed before time $r_{j}$
and the machine works on at most one job at a time. Ideally, we would like to
finish each job as early as possible. To quantify this,
it is natural to associate a cost to each job $j\in J$ in the computed
schedule which depends on the completion time of $j$; the later $j$
completes, the higher should be the cost for $j$.

In practical settings, the given jobs might be highly heterogeneous:
some jobs might be very important or urgent and, hence, need to be
finished quickly. They may even have a hard deadline before which
they must finish in \emph{any} feasible schedule. Other jobs might
be less critical and even large delays may be tolerable for them at
(almost) no cost. Therefore, it makes sense to allow each job $j$
to have a separate function that defines its cost, depending
on its completion time. For each job $j$ we denote this function by
$\cost_{j}:[r_j, \infty)\rightarrow\RR_{\geq 0} \cup \{\infty\}$
and we assume it (only) to be non-decreasing; given a schedule we denote by $C_{j}$ the completion time of $j$ which then
yields a cost of $\cost_{j}(C_{j})$.
We assume that we have access the jobs' cost functions via suitable oracles
 (similarly as in \cite{DBLP:journals/siamcomp/BansalP14}, see Section~\ref{sec:algorithmic-framework} for details).
Overall, we seek to
minimize the total resulting cost of our jobs, i.e., we want to minimize
$\sum_{j\in J}\cost_{j}(C_{j})$. This defines the General Scheduling
Problem~(GSP) as introduced by Bansal and Pruhs~\cite{DBLP:journals/siamcomp/BansalP14}, who
gave a polynomial time
$O(\log\log P)$-approximation algorithm for it. Here, $P$ denotes the ratio
between the largest and the smallest job processing times in the input.
It is open whether GSP admits a constant factor approximation algorithm, or
even a PTAS. For the approach used in~\cite{DBLP:journals/siamcomp/BansalP14}
it seems unlikely that it can yield a constant factor approximation, as argued in \cite{Batra0K18-journal}.
There has been no progress for the general case of GSP since the work by Bansal and Pruhs~\cite{DBLP:journals/siamcomp/BansalP14}.

Instead, most research in the context of GSP has focused on special cases of the problem in which the jobs' cost
functions are very structured. For example, in the weighted sum of
completion times objective each job $j$ has a weight $w_{j}>0$
and we seek to minimize $\sum_{j\in J}w_{j}C_{j}$. Hence, $\cost_{j}(t)=w_{j}\cdot t$
for each job $j$. This setting admits a polynomial time approximation
scheme (PTAS) as shown by Afrati et~al.~\cite{afrati1999approximation}. Another well-studied
special case is the weighted flow time objective~\cite{Batra0K18-journal, feige2019polynomial, RohwedderW21, armbruster2023ptas}.
For each job $j$ its flow time in a computed schedule is the time
between its release and its completion, i.e., $C_{j}-r_{j}$, and
we seek to minimize $\sum_{j\in J}w_{j}(C_{j}-r_{j})$.
The difference
to the weighted sum of completion times objective is ``only'' the
fixed term $\sum_{j\in J}w_{j}r_{j}$ and, therefore, the optimal solutions are the
same for both objectives. However, approximation ratios of non-optimal solutions are not preserved and
constructing approximation algorithms for the weighted flow time objective is much more challenging. For example, it had been a long-standing
open question to find even a constant factor approximation algorithm
with polynomial running time (quasi-polynomial time algorithms for
bounded input data had been known earlier though~\cite{chekuri2002approximation}). In a breakthrough
result Batra, Kumar, and Garg \cite{Batra0K18-journal} presented an $O(1)$-approximation
algorithm with pseudopolynomial running time which was subsequently
improved to polynomial time by Feige et al.~\cite{feige2019polynomial}. Finally, the approximation
ratio was improved to $1+\epsilon$ by Armbruster, Rohwedder,
and Wiese~\cite{armbruster2023ptas}.

Intuitively, the mentioned algorithms for weighted flow time work with discretized candidate completion times for the jobs. They crucially
use that slight changes in a job's flow time change its cost only by a small factor and that this change is proportional for any two jobs with (almost) the same release times.
These properties extend to the objective of minimizing the weighted $\ell_p$-norm of the jobs' flow times for $p\ge 1$~\cite{Batra0K18-journal, ArmbrusterRW24}. However, they do not extend to arbitrary instances of GSP,
since there the cost function $\cost_j$ of a job $j$ can increase abruptly by large factors when
its flow time only increases marginally.
Already the weighted tardiness objective suffers from this issue since a job's cost can stay zero for some time (which may be different for two jobs with the same release time)
and only then start to increase. Therefore, it forms an interesting case to study; moreover, it is well-motivated in its own right.
Formally, in this objective we are given a due date $d_{j}\in \N_0$ for each job $j$ which is intuitively
a soft deadline for $j$. Since we might not be able to finish each
job $j$ before its due date, for each job $j$ we consider the time
between the completion of $j$ and $d_{j}$, i.e., $C_{j}-d_{j}$,
and seek to minimize $\sum_{j\in J}w_{j}\max\{C_{j}-d_{j},0\}$. No improvements over
the general result of GSP are known for weighted tardiness.

There is another limitation of the mentioned results for special cases of GSP: they assume that
\emph{all} jobs' cost functions have the \emph{same} structure.
However, they no longer work for heterogeneous sets of jobs, e.g., when
each job $j$ either has a hard deadline (the cost is zero if $j$ is completed before the deadline and~$\infty$ otherwise)
or its cost in the computed schedule equals its weighted flow time.
Even though both cases individually can be approximated well or even solved exactly in polynomial time, this is
not clear for the combination of both. This motivates searching for algorithms for the general case of GSP which can thus handle
such settings.

\subsection{Our contribution}
In this paper, we present a quasi-polynomial time $(2+\epsilon)$-approximation
algorithm for (the general case of) GSP, assuming that the jobs' processing times are quasi-polynomially bounded integers. We denote by $p_{\max} := \max_{j\in J}p_j$ the maximum job processing time of a given job.
\begin{theorem}\label{thm:main}
	For each $\epsilon>0$ there is a $(2 + \epsilon)$-approximation algorithm for the general scheduling problem with a running time of~$2^{\mathrm{poly}((1/\epsilon)^{1/\epsilon}\log(n + p_{\max}))}$.
\end{theorem}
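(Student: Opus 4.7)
The plan is to lift GSP to an auxiliary geometric covering problem and solve the latter by a structured recursion. First, I would discretize time and the set of admissible completion times at a $(1+\epsilon)$-loss, so that every job $j$ is represented by a small family of ``slot rectangles'': one per admissible completion time $t \ge r_j$, of width $p_j$, with associated cost $\cost_{j}(t)$. Selecting one slot per job subject to the natural area/non-overlap constraint (in every time window the total width of selected slots completing in the window is at most the window's length) corresponds to a feasible preemptive schedule via an exchange argument of Earliest-Deadline-First type. This reduction is inherited from the framework developed for weighted flow time in Rohwedder--Wiese and Armbruster--Rohwedder--Wiese, with the crucial difference that the rectangle costs are now essentially unconstrained non-decreasing functions of $t$.

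Second, I would prove the main structural theorem: there is a $(2+\epsilon)$-approximate solution to this geometric instance that is compatible with a hierarchical dyadic-style decomposition of time into $O(\log(n+p_{\max})/\epsilon)$ nested levels. At every node $I$ of the hierarchy the solution specifies only a small, quasi-polynomially enumerable set of ``crossing'' jobs whose slot straddles the split point of $I$; the remaining jobs assigned to $I$ lie entirely in one child. In contrast to the weighted flow time setting, one cannot expect each job to have a natural single hierarchy level, because $\cost_{j}$ can jump wildly under tiny perturbations of the completion time; instead, we let the transformation shift completion times \emph{forward} to canonical breakpoints of the hierarchy. The factor $2$ enters precisely here: in decoupling the children of $I$ we may duplicate boundary-crossing processing time and round completion times upward, each bounded using the monotonicity of $\cost_{j}$ against the original optimum, summing to at most $(2+\epsilon)\cdot \OPT$.

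Third, I would design a recursive algorithm parameterized by a hierarchy node $I$, a subset of jobs currently allocated to the subtree rooted at $I$, and a ``reservation profile'' describing how much machine time inside $I$ is already committed to crossing jobs contributed by ancestors. At $I$ we enumerate all consistent choices of crossing jobs together with their slot placements; the structure theorem ensures that only $2^{\mathrm{poly}((1/\epsilon)^{1/\epsilon})}$ options matter. For each such guess we partition the remaining jobs among the two children according to the implied residual machine capacity, recurse independently on each child, and take the minimum total cost. Since the hierarchy has depth $O(\log(n+p_{\max})/\epsilon)$, the product of per-level branching yields the running time claimed in the theorem.

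The main obstacle is the structure theorem itself. The transformation must be bounded by $(2+\epsilon)\cdot\OPT$ \emph{simultaneously} across all levels of the hierarchy while the cost functions are arbitrary non-decreasing maps, so standard ``local move'' arguments that worked for weighted flow time are no longer available. I expect the key to be a charging scheme that (i) exploits monotonicity to upper-bound the cost of a rounded-up completion time by the cost at a nearby discretization breakpoint, and (ii) uses a randomly shifted family of hierarchies (later derandomized by enumerating a polynomial list of candidates) so that each job is expected to straddle only $O(1)$ hierarchy nodes and hence consumes only an $O(1)$-bounded amount of the duplication budget. Making this argument tight enough to yield $2+\epsilon$ rather than a larger constant, together with a rigorous bookkeeping of the per-node enumeration size, is the technical heart of the proof.
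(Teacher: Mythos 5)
Your first step (discretizing completion times and reformulating feasibility via the EDF window condition) matches the paper's reduction in spirit, and your recursive dyadic splitting with guessed crossing structure is also the right high-level shape. The genuine gap is in the structure theorem, and specifically in the mechanism you propose for where the factor $2$ comes from. You plan to shift completion times \emph{forward} to canonical breakpoints of the hierarchy and bound the resulting increase ``using the monotonicity of $\cost_j$ against the original optimum.'' Monotonicity only says the cost does not decrease when you delay a job; for a general non-decreasing $\cost_j$ the cost at the rounded-up time can exceed the original cost by an \emph{arbitrary} factor (the function may jump at any point), so this rounding step is not chargeable to $\OPT$ at all. This is precisely the obstruction the paper identifies: the rounding/tree-structure tricks from the weighted flow time algorithms are unavailable for GSP, and in the paper's argument the factor $2$ is never incurred by delaying a job. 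Instead, it is incurred by \emph{paying twice} for the rectangles of a row at the single recursion level where that row is first cut by the split line (a ``centered'' row), after which the row is left-sticking-in, right-sticking-in or spanning in every deeper subproblem and is charged only a $(1+\epsilon/\log T)$-per-level loss, accounted for by an auxiliary cost function that distinguishes these row types.

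A second, related gap is your claim that at each node only $2^{\mathrm{poly}((1/\epsilon)^{1/\epsilon})}$ crossing configurations matter. A split line can be straddled by $\Omega(n)$ jobs' slots, and the interface between the two children (your ``reservation profile'') a priori has exponentially many relevant states. The paper makes this enumerable by first perturbing the reference solution into a structured solution $S^{+}$: for each (cost, value) class of rows crossing the split it adds all rectangles of a small $\Theta(\epsilon/(K\log T))$ fraction of extra rows, which both makes the selected-from-below part of each class guessable by a single threshold row and oversatisfies the rays enough that the demand exchanged between the two subproblems can be summarized by a step function with only polylogarithmically many steps (the artificial rays), which is then guessed. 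Your randomly shifted hierarchy, intended so that each job straddles $O(1)$ nodes in expectation, addresses neither issue: it does not bound the cost of upward rounding (unbounded even for one straddle), and it does not compress the crossing interface to quasi-polynomially many candidates. Without substitutes for the row-type accounting and the slack-plus-step-function decoupling, the recursion either loses a factor $2$ per level (superconstant overall) or cannot be carried out in quasi-polynomial time.
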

For the special case of weighted tardiness, our approximation ratio improves to $1+\epsilon$.
\begin{theorem}\label{thm:main-tardiness}
	For each $\epsilon>0$ there is a $(1 + \epsilon)$-approximation algorithm for the weighted tardiness problem with a running time of~$2^{\mathrm{poly}((1/\epsilon)^{1/\epsilon}\log(n + p_{\max}))}$.
\end{theorem}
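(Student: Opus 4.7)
The plan is to reuse the algorithmic framework that proves Theorem~\ref{thm:main}, and to exploit the special structure of the tardiness cost function $\cost_j(t) = w_j\max\{t-d_j,0\}$ in order to sharpen the approximation loss from $(2+\epsilon)$ to $(1+\epsilon)$. The key property is that once $t > d_j$, $\cost_j$ is linear in $t-d_j$, so a $(1+\epsilon)$-multiplicative perturbation of $C_j - d_j$ translates into a $(1+\epsilon)$-multiplicative perturbation of $\cost_j(C_j)$. This is exactly the ``flow-time-like'' property exploited in the PTAS for weighted flow time~\cite{armbruster2023ptas} and which the general setting of Theorem~\ref{thm:main} lacks.

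After geometrically rounding release times and due dates to powers of $1+\epsilon$ (at $(1+\epsilon)$ cost), I would perform the same reduction to the auxiliary geometric covering problem and run the same recursive guessing algorithm as in Theorem~\ref{thm:main}. The algorithm itself is unchanged; what changes is the analysis of the structured near-optimal solution. I would revisit each step of the structural transformation in the proof of Theorem~\ref{thm:main} in which a factor of $2$ is incurred. In the general case this factor arises because moving a job's completion time across a geometric cost-class boundary can double its cost, since an arbitrary $\cost_j$ may be nearly constant on a class and then jump. For tardiness, the linearity of $\cost_j$ past $d_j$ enables a standard geometric charging argument in which the shift is paid for by an $\epsilon$-fraction of the job's own cost, reducing the loss to $(1+\epsilon)$ per step.

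The main obstacle is handling jobs whose completion time lies on or near the breakpoint $d_j$: if $C_j \le d_j$ in $\OPT$, then $\cost_j(C_j) = 0$, so a small perturbation to the wrong side of $d_j$ would introduce an unbounded relative error and immediately destroy the $(1+\epsilon)$ guarantee. I would address this by splitting the guessing phase into two parts. First, enumerate within the existing quasi-polynomial guessing budget which jobs are non-tardy in $\OPT$ and encode for them a hard feasibility constraint $C_j \le d_j$ (with zero contribution to the objective) in the geometric covering instance. Second, for the remaining tardy jobs, whose effective cost function on $[d_j,\infty)$ is linear, apply the sharpened $(1+\epsilon)$ analysis throughout the structural transformation and the covering algorithm. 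Combining these ingredients with the guessing and recursion machinery of Theorem~\ref{thm:main} yields Theorem~\ref{thm:main-tardiness} within the same quasi-polynomial running time.
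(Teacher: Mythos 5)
There is a genuine gap, and it starts with a misdiagnosis of where the factor $2$ is lost in the proof of Theorem~\ref{thm:main}. The reduction from GSP to RCP loses only $1+O(\epsilon)$ even for arbitrary cost functions (Lemmas~\ref{lem:reduction1}--\ref{lem:reduction5}); the factor $2$ is \emph{not} caused by jobs crossing cost-class boundaries. It is incurred inside the algorithm for the geometric covering problem: when the plane is split at $\mi(A)$, a centered row crossing the split has its left part assigned to the left subproblem while its cost is \emph{also} added to the leftmost right-hand rectangle (the redefinition $c(R_{\mathrm{leftmost}}):=c(R_{\mathrm{leftmost}})+\sum_{R\in W_{\lef}}c(R)$, reflected in the coefficient $2$ on $\R_{\ce}$ in $c_{\APX}$). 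Because the algorithm genuinely pays this duplicated cost, no sharpened analysis of the \emph{unchanged} algorithm -- and in particular no charging argument based on linearity of $\cost_j$ past $d_j$ -- can bring the ratio down to $1+\epsilon$. The paper instead uses the linearity in a different place: it re-does the milestone construction (Lemma~\ref{lem:milestones-tardiness}) so that the rectangles' $x$-coordinates become $\delta$-well-structured, and then replaces the covering algorithm altogether. In the new algorithm the rows crossing the split line fall into only quasi-polynomially many groups (Lemma~\ref{lem:tardinessGroupsNumber}), and for each group one guesses how many rows contribute $i$ rectangles in the optimum and selects greedily from the bottom, losing only $1+O(\epsilon)$ (Lemmas~\ref{lem:tardinessAPXgCost}, \ref{lem:tardinessAPXgRays}). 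So the special structure buys a \emph{different algorithm} for RCP, not a tighter analysis of the old one; your proposal is missing this ingredient entirely.

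A second concrete problem is your treatment of the breakpoint at $d_j$. Guessing ``which jobs are non-tardy in $\OPT$'' means enumerating a subset of the $n$ jobs, which takes $2^{n}$ time and does not fit any quasi-polynomial guessing budget; you would need to argue this information can be encoded compactly, and you do not. In the paper this issue simply does not arise: the interval $[r_j,d_j)=[m_0(j),m_1(j))$ becomes a rectangle of cost zero, which the preprocessing of Section~\ref{sec:Reduction} removes (its value is folded into the ray demands), so whether a job finishes before its due date is decided implicitly by the covering solution at no extra cost and with no enumeration. Without repairing both points -- the source of the factor $2$ and the handling of the non-tardy jobs -- the proposal does not yield Theorem~\ref{thm:main-tardiness}.
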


Since weighted tardiness is strongly $\mathsf{NP}$-hard,
this is the best possible approximation ratio in quasi-polynomial running time,
unless $\mathsf{NP}\subseteq\mathrm{DTIME}(2^{\mathrm{poly}(\log n)})$.
Note that our result also implies that the problem cannot be $\mathsf{APX}$-hard (with bounded input data), unless $\mathsf{NP}\subseteq\mathrm{DTIME}(2^{\mathrm{poly}(\log n)})$.

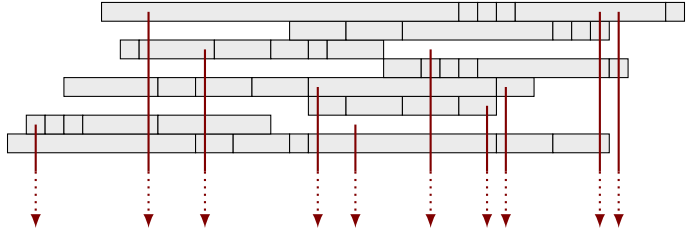
\begin{figure}
	\centering
	\begin{tikzpicture}[scale=0.25] 
		\draw[fill=fillcolor] (0,1) rectangle (10,2);
\draw[fill=fillcolor] (10,1) rectangle (12,2);
\draw[fill=fillcolor] (12,1) rectangle (15,2);
\draw[fill=fillcolor] (15,1) rectangle (16,2);
\draw[fill=fillcolor] (16,1) rectangle (26,2);
\draw[fill=fillcolor] (26,1) rectangle (29,2);
\draw[fill=fillcolor] (29,1) rectangle (32,2);
\draw[fill=fillcolor] (1,2) rectangle (2,3);
\draw[fill=fillcolor] (2,2) rectangle (3,3);
\draw[fill=fillcolor] (3,2) rectangle (4,3);
\draw[fill=fillcolor] (4,2) rectangle (8,3);
\draw[fill=fillcolor] (8,2) rectangle (14,3);
\draw[fill=fillcolor] (16,3) rectangle (18,4);
\draw[fill=fillcolor] (18,3) rectangle (21,4);
\draw[fill=fillcolor] (21,3) rectangle (24,4);
\draw[fill=fillcolor] (24,3) rectangle (26,4);
\draw[fill=fillcolor] (3,4) rectangle (8,5);
\draw[fill=fillcolor] (8,4) rectangle (10,5);
\draw[fill=fillcolor] (10,4) rectangle (13,5);
\draw[fill=fillcolor] (13,4) rectangle (16,5);
\draw[fill=fillcolor] (16,4) rectangle (26,5);
\draw[fill=fillcolor] (26,4) rectangle (28,5);
\draw[fill=fillcolor] (20,5) rectangle (22,6);
\draw[fill=fillcolor] (22,5) rectangle (23,6);
\draw[fill=fillcolor] (23,5) rectangle (24,6);
\draw[fill=fillcolor] (24,5) rectangle (25,6);
\draw[fill=fillcolor] (25,5) rectangle (32,6);
\draw[fill=fillcolor] (32,5) rectangle (33,6);
\draw[fill=fillcolor] (6,6) rectangle (7,7);
\draw[fill=fillcolor] (7,6) rectangle (11,7);
\draw[fill=fillcolor] (11,6) rectangle (14,7);
\draw[fill=fillcolor] (14,6) rectangle (16,7);
\draw[fill=fillcolor] (16,6) rectangle (17,7);
\draw[fill=fillcolor] (17,6) rectangle (20,7);
\draw[fill=fillcolor] (15,7) rectangle (18,8);
\draw[fill=fillcolor] (18,7) rectangle (21,8);
\draw[fill=fillcolor] (21,7) rectangle (29,8);
\draw[fill=fillcolor] (29,7) rectangle (30,8);
\draw[fill=fillcolor] (30,7) rectangle (31,8);
\draw[fill=fillcolor] (31,7) rectangle (32,8);
\draw[fill=fillcolor] (5,8) rectangle (24,9);
\draw[fill=fillcolor] (24,8) rectangle (25,9);
\draw[fill=fillcolor] (25,8) rectangle (26,9);
\draw[fill=fillcolor] (26,8) rectangle (27,9);
\draw[fill=fillcolor] (27,8) rectangle (35,9);
\draw[fill=fillcolor] (35,8) rectangle (36,9);

\draw[red!50!black, thick] (1.5, 0) -- (1.5, 2.5);
\draw[-latex, red!50!black, dotted, thick] (1.5, 0) -- (1.5, -3);

\draw[red!50!black, thick] (7.5, 0) -- (7.5, 8.5);
\draw[-latex, red!50!black, dotted, thick] (7.5, 0) -- (7.5, -3);

\draw[red!50!black, thick] (10.5, 0) -- (10.5, 6.5);
\draw[-latex, red!50!black, dotted, thick] (10.5, 0) -- (10.5, -3);

\draw[red!50!black, thick] (16.5, 0) -- (16.5, 4.5);
\draw[-latex, red!50!black, dotted, thick] (16.5, 0) -- (16.5, -3);

\draw[red!50!black, thick] (18.5, 0) -- (18.5, 2.5);
\draw[-latex, red!50!black, dotted, thick] (18.5, 0) -- (18.5, -3);

\draw[red!50!black, thick] (22.5, 0) -- (22.5, 6.5);
\draw[-latex, red!50!black, dotted, thick] (22.5, 0) -- (22.5, -3);

\draw[red!50!black, thick] (25.5, 0) -- (25.5, 3.5);
\draw[-latex, red!50!black, dotted, thick] (25.5, 0) -- (25.5, -3);

\draw[red!50!black, thick] (26.5, 0) -- (26.5, 4.5);
\draw[-latex, red!50!black, dotted, thick] (26.5, 0) -- (26.5, -3);

\draw[red!50!black, thick] (31.5, 0) -- (31.5, 8.5);
\draw[-latex, red!50!black, dotted, thick] (31.5, 0) -- (31.5, -3);

\draw[red!50!black, thick] (32.5, 0) -- (32.5, 8.5);
\draw[-latex, red!50!black, dotted, thick] (32.5, 0) -- (32.5, -3);
	\end{tikzpicture}
	\caption{Rectangles and rays in an instance of the rectangle covering problem to which we reduce GSP.}
	\label{fig:RCPinstance}
\end{figure}

Inspired by the PTAS for weighted flow time~\cite{armbruster2023ptas}, we reduce GSP and weighted tardiness
to a geometric covering problem in which we need to select a subset
of some given axis-parallel rectangles in the plane in order to cover
the demand of a given set of rays (see Figure~\ref{fig:RCPinstance}). All given rays
are oriented vertically downwards. Each rectangle $R$ has a cost
and a value that it contributes to satisfy the demands of the rays
that it intersects with (in case $R$ is selected). The rectangles
are pairwise non-overlapping and organized in rows such that from
each row we must select a prefix of its rectangles (or none of them).
Our objective is to satisfy the demand of each ray while minimizing
the total cost of the selected rectangles.
Our reduction loses only a factor of $1+\epsilon$ in the objective function value.

Due to the properties of the weighted flow time objective,
in the algorithm for that case in \cite{armbruster2023ptas} it was possible to slightly round the constructed rectangles so that they
have a tree-like structure. This structure then guided a dynamic program~(DP) for the geometric covering problem.
However, in GSP the job's cost function may not have these properties and
it seems impossible to ensure a similar structure for the rectangles.
Despite this, we construct a quasi-polynomial time algorithm that
computes a $(2+\epsilon)$-approximation for this geometric covering
problem which, due to our reduction, yields a $(2+\epsilon)$-approximation for GSP as well.
Our strategy is to recursively split the plane vertically in two parts, i.e., into a left and a right subproblem. All rays are vertical and, hence, with
respect to them the two subproblems are independent. However, there can be rows whose rectangles intersect with both subproblems (see Figure~\ref{fig:RowTypes}).
\begin{figure}
	\centering
	\begin{tikzpicture}[scale=0.25] 
		\draw[fill=fillcolor] (0,1) rectangle (10,2);
\draw[fill=fillcolor] (10,1) rectangle (12,2);
\draw[fill=fillcolor] (12,1) rectangle (15,2);
\draw[fill=fillcolor] (15,1) rectangle (16,2);
\draw[fill=fillcolor] (16,1) rectangle (26,2);
\draw[fill=fillcolor] (26,1) rectangle (29,2);
\draw[fill=fillcolor] (29,1) rectangle (35,2);
\draw[fill=fillcolor] (1,2) rectangle (38,3);
\draw[fill=fillcolor] (5,3) rectangle (7,4);
\draw[fill=fillcolor] (7,3) rectangle (9,4);
\draw[fill=fillcolor] (9,3) rectangle (10,4);
\draw[fill=fillcolor] (3,4) rectangle (8,5);
\draw[fill=fillcolor] (8,4) rectangle (10,5);
\draw[fill=fillcolor] (10,4) rectangle (13,5);
\draw[fill=fillcolor] (13,4) rectangle (16,5);
\draw[fill=fillcolor] (16,4) rectangle (26,5);
\draw[fill=fillcolor] (26,4) rectangle (28,5);
\draw[fill=fillcolor] (23,5) rectangle (24,6);
\draw[fill=fillcolor] (24,5) rectangle (25,6);
\draw[fill=fillcolor] (25,5) rectangle (32,6);
\draw[fill=fillcolor] (32,5) rectangle (33,6);
\draw[fill=fillcolor] (33,5) rectangle (38,6);
\draw[fill=fillcolor] (6,6) rectangle (7,7);
\draw[fill=fillcolor] (7,6) rectangle (11,7);
\draw[fill=fillcolor] (11,6) rectangle (14,7);
\draw[fill=fillcolor] (14,6) rectangle (16,7);
\draw[fill=fillcolor] (16,6) rectangle (17,7);
\draw[fill=fillcolor] (17,6) rectangle (21,7);
\draw[fill=fillcolor] (21,6) rectangle (22,7);
\draw[fill=fillcolor] (22,6) rectangle (32,7);
\draw[fill=fillcolor] (32,6) rectangle (35,7);
\draw[fill=fillcolor] (29,7) rectangle (30,8);
\draw[fill=fillcolor] (30,7) rectangle (31,8);
\draw[fill=fillcolor] (31,7) rectangle (32,8);
\draw[fill=fillcolor] (12,8) rectangle (24,9);
\draw[fill=fillcolor] (24,8) rectangle (25,9);
\draw[fill=fillcolor] (25,8) rectangle (26,9);
\draw[fill=fillcolor] (26,8) rectangle (27,9);
\draw[fill=fillcolor] (27,8) rectangle (33,9);
\filldraw [pattern=north west lines, pattern color=gray!50!white, path fading=myfading, draw=black] (5,0) rectangle (35,12); 

\draw[color=blue, thick, dotted] (20,-3) rectangle (20,12);

\draw[decorate,decoration={brace,amplitude=7pt}]
(38.5,2.9) -- (38.5,1.1);
\node[right] at (39.5,2) {spanning rows};
\draw[decorate,decoration={brace,amplitude=7pt}]
(38.5,4.9) -- (38.5,3.1);
\node[right] at (39.5,4) {left-sticking-in rows};
\draw[decorate,decoration={brace,amplitude=7pt}]
(38.5,6.9) -- (38.5,5.1);
\node[right] at (39.5,6) {right-sticking-in rows};
\draw[decorate,decoration={brace,amplitude=7pt}]
(38.5,8.9) -- (38.5,7.1);
\node[right] at (39.5,8) {centered rows};

\draw[red!50!black, thick] (7.5, -1) -- (7.5, 8.5);
\draw[-latex, red!50!black, dotted, thick] (7.5, 0) -- (7.5, -3);

\draw[red!50!black, thick] (10.5, -1) -- (10.5, 6.5);
\draw[-latex, red!50!black, dotted, thick] (10.5, 0) -- (10.5, -3);

\draw[red!50!black, thick] (16.5, -1) -- (16.5, 4.5);
\draw[-latex, red!50!black, dotted, thick] (16.5, 0) -- (16.5, -3);

\draw[red!50!black, thick] (18.5, -1) -- (18.5, 2.5);
\draw[-latex, red!50!black, dotted, thick] (18.5, 0) -- (18.5, -3);

\draw[red!50!black, thick] (22.5, -1) -- (22.5, 6.5);
\draw[-latex, red!50!black, dotted, thick] (22.5, 0) -- (22.5, -3);

\draw[red!50!black, thick] (25.5, -1) -- (25.5, 3.5);
\draw[-latex, red!50!black, dotted, thick] (25.5, 0) -- (25.5, -3);

\draw[red!50!black, thick] (26.5, -1) -- (26.5, 4.5);
\draw[-latex, red!50!black, dotted, thick] (26.5, 0) -- (26.5, -3);

\draw[red!50!black, thick] (31.5, -1) -- (31.5, 8.5);
\draw[-latex, red!50!black, dotted, thick] (31.5, 0) -- (31.5, -3);

	\end{tikzpicture}
	\caption{The shaded region indicates a given subproblem. It is split into two parts by the blue dotted line. We lose a factor of $2$ for the centered rows crossing this line in this step, but not for the other rows.}
	\label{fig:RowTypes}
\end{figure}
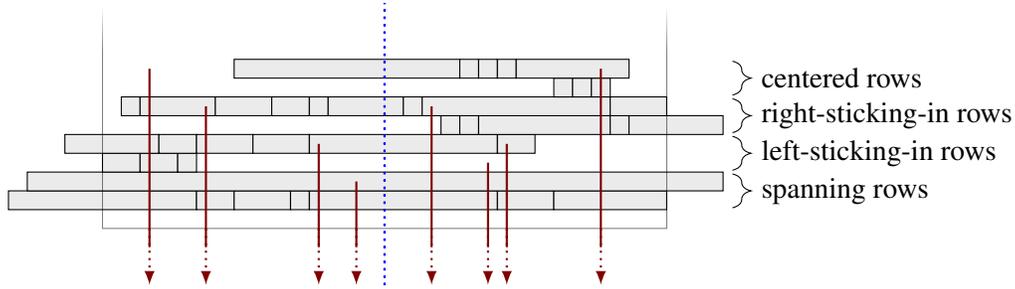
To compensate for this, we can sacrifice a factor of 2 for the cost of such rows, and give independent
copies of its rectangles to each subproblem. More precisely, we can restrict each
copy to rectangles contained in the respective subproblem and cut
one rectangle (which intersects both subproblems) into two pieces if necessary.
One core difficulty is that from each rectangle row we need to select a prefix of its rectangles.
Therefore, even if the right subproblem needs to select only one single rectangle from the row, it needs to pay for
\emph{all} rectangles in that row, including those that are entirely contained in the left subproblem.
Moreover, when we recurse further we cannot sacrifice another factor of $2$ in
each recursion level; doing so would result in a superconstant approximation
ratio. In fact, even if each row contained only one single rectangle it would not be clear
how obtain even a constant approximation ratio.

One key observation is that once we sacrificed a factor of $2$ for the cost of a row,
in each subsequent recursive subproblem
the rectangles of the row start on the left of the subproblem, or end on the right of the subproblem, or even both.
We call such rows \emph{left-sticking-in} rows, \emph{right-sticking-in} rows, and \emph{spanning} rows, see again Figure~\ref{fig:RowTypes}.
We treat these rows differently than the other rows; in this way avoid that
in each recursion level we lose another factor of 2 in our approximation ratio for them.
Intuitively, for each recursive subproblem we consider an optimal solution for this subproblem.
We add certain rectangles to it and in this way, we extend it to a structured near-optimal solution. As a result, for some rays their respective demand is satisfied to a significantly larger extent than necessary, which creates some slack. Also,
the solution's structure allows us to guess a potentially very large number of rectangles directly.

If a row with right-sticking-in rectangles is entirely contained in the
area of the right subproblem, it is clear that its rectangles should be passed on to the right subproblem.
However, a difficult case is when there is a right-sticking-in row from which some rectangles intersect the area
of the left subproblem and some other rectangles intersect the area of the right subproblem, see e.g., the first (upper) right-sticking-in row in Figure~\ref{fig:RowTypes}.
To handle this, we argue that we can pass all rectangles of some of these rows to the left subproblem and all  rectangles of the remaining rows to the right
subproblem. However, if then a row is passed to the \emph{left} subproblem, then we might need its rectangles also to (partially) cover
the demand of rays in the \emph{right} subproblem. To ensure this we create some artificial rays for the left subproblem which are, maybe counterintuitively, placed outside the area corresponding to that subproblem. A crucial argument is that it is sufficient to create a polylogarithmic number of these artificial rays due to the mentioned slack. Hence, we can guess them in quasi-polynomial time. We use a similar argumentation for the left-sticking-in rays. However, for those we have the additional difficulty that whenever we select one of its rectangles, we need to pay for all rectangles on the left of it, and such rectangles might not even
``belong'' to the current subproblem, i.e., they might not even intersect its area. To remedy this, on a high level we argue that we do not lose the mentioned factor of 2 for all rectangles in a row at once in one subproblem. Instead, intuitively we argue that for each rectangle in a row there might be \emph{some} subproblem in which we lose this factor, and for different rectangles these might be different subproblems. In this way, we obtain an approximation ratio of $2+\epsilon$ overall.

Finally, for the special case of the weighted tardiness objective, we argue that
we can adjust our reduction to the geometric covering problem such
that the rectangles' $x$-coordinates are slightly rounded.
This is related to the obtained properties in the corresponding
reduction for weighted flow time~\cite{RohwedderW21, armbruster2023ptas}. However, our obtained properties
are much weaker since weighted tardiness is a more general
cost function. Nevertheless, we show that even these weaker properties
are sufficient to ensure that we do \emph{not} lose a factor of 2 in our approximation ratio as above.
On a high level, our rounding ensures that the rectangles' $x$-coordinates are discretized such
that in each recursive subproblem we need to consider only a polylogarithmic number of different \emph{types} for the rows that do not trivially belong to the left or the right subproblem.
Even more, for each row type we can directly guess up to a factor of
$1+\epsilon$ in quasi-polynomial time which of its rectangles are contained in an optimal
solution for the subproblem. This makes the resulting algorithm much simpler and, at the same time,
improves the approximation ratio to $1+\epsilon$.

\subsection{Other related work}
For GSP, when all release times are equal, Bansal and Pruhs \cite{DBLP:journals/siamcomp/BansalP14} showed that their approach leads to a polynomial time
constant approximation and a QPTAS was presented by Antoniadis et al.~\cite{DBLP:conf/icalp/AntoniadisHMVW17}.
Moseley~\cite{Moseley19} extended the approach to multiple machines (when job migration is allowed) and gave a polynomial time $O(\log\log nP)$-approximation algorithm.
Bansal and Batra \cite{bansal2021non} improved this approximation factor to $O(1)$.
For the objective of weighted sum of completion times (and release times) there is even a PTAS for multiple identical  machines~\cite{afrati1999approximation} and there are extensions to uniformly related and unrelated machines~\cite{afrati1999approximation, ChekuriK01, afrati2000ptas}.
A significant amount of research has focused on online algorithms for weighted flow time. In particular, it is possible to
achieve competitive ratios with only a logarithmic dependence on $P$ and other parameters~\cite{azar2018improved, bansal2009weighted}.
Another well-studied special case of GSP is
the problem of minimizing the total weight of tardy jobs where for each job $j$ we have that
$\cost_j(t) = 0$ if $t\le d_j$ and $\cost_j(t) = w_j$ if $t > d_j$, which generalizes the knapsack problem. Lawler~\cite{lawler1969functional} gave a pseudopolynomial exact algorithm for this setting. When all release times are equal, Lawler and Moore~\cite{lawler1969functional} gave a PTAS. There have been several recent results that focus on optimizing the (pseudopolynomial) running
time for variants of minimizing weight of tardy jobs~\cite{BringmannFHSW20,Klein0R23,FischerW24}.

\section{Algorithmic framework for GSP}\label{sec:algorithmic-framework}

Given an instance of GSP, we reduce it to an instance of the rectangle
covering problem (RCP) which we define in the following. This reduction
loses only a factor of $1+\epsilon$ in the objective. Then, we present
a $(2+\epsilon)$-approximation algorithm for RCP which yields a $(2+\epsilon)$-approximation
algorithm for GSP.

Formally, RCP is defined as follows. The input consists of a set of
axis-parallel rectangles $\R$ and a set of rays $\mathcal{L}$ in
the plane with the following properties:
\begin{enumerate}
\item Each rectangle $R\in\R$ is of the form $R=[a,b)\times[j,j+1)$ for
some values $a,b,j\in\N_{0}$ and we define $\lef(R):=a$ and $\ri(R):=b$; it has a
given cost $c(R)>0$ and a given value $p(R)>0$.
\item The rectangles in $\R$ are pairwise disjoint.
\item \label{prop:costValue}The rectangles in $\R$ are partitioned into a set of \emph{rows}
$\rows$ such that two rectangles $R=[a,b)\times[j,j+1)$, $R'=[a',b')\times[j',j'+1)$
are in the same row if and only if $j=j'$.
\item For each row $W\in\rows$ all rectangles in
$W$ have the same value, i.e., for any $R,R'\in W$ we have that
$p(R)=p(R')$, and
the rectangles in $W$ are consecutive along the $x$-axis, i.e.,
for each rectangle $R\in W$ there is another
rectangle $R'\in W$ with $\ri(R)=\lef(R')$ or there is no rectangle
$R''\in W$ with $\ri(R)<\lef(R'')$.
\item Each ray in $\mathcal{L}$ is of the form $\{t\}\times(-\infty, s]=:L(s,t)$
for some $s,t\in\N_{0}$ and has a given demand $d(L(s,t))\ge0$. For a clearer visualization, we draw such
a ray as $\{t+1/2\}\times(-\infty, s+1/2]$ since then the drawn ray intersects
with the same rectangles as the actual ray but it is not aligned with
the rectangle boundaries.
\end{enumerate}
The goal is to compute a set of rectangles $\R'\subseteq\R$ with
the following properties:
\begin{itemize}
\item For each ray $L(s,t)\in\L$ the rectangles in $\R'$ intersecting
$L(s,t)$ cover the whole demand of $L(s,t)$, i.e., $\sum_{R\in\R':R\cap L(s,t)\neq\emptyset}p(R)\geq d(L(s,t))$.
\item From each row $W\in\rows$ the set $\R'$ contains a consecutive (possibly
empty) set of rectangles starting with the leftmost rectangle, i.e.,
for each rectangle $R\in W\cap\R'$ the set $\R'$ contains also each
rectangle $R'\in W$ with $\ri(R')\leq\lef(R)$.
\end{itemize}
Our objective is to minimize the total cost of the rectangles in $\R'$,
i.e., to minimize $\sum_{R\in\R'}c(R)$. In our reduction, we will
ensure for each row that the cost of all its rectangles together is by at most a bounded
factor larger than the cost of any single one of its rectangles.
Therefore, in a given instance of RCP
we denote by $K$ the maximum ratio of these costs in the same row,
i.e., $K:=\max_{W\in\rows}\frac{\sum_{R\in W}c(R)}{\min_{R\in W}c(R)}$. Note that this implies that
each row can have at most $K$ rectangles.
For a given instance of RCP we denote by $n$ the number of input
bits in binary encoding and, analogously to GSP, we define $p_{\max}:=\max_{R}p(R)$.

We can reduce GSP to RCP while losing only a factor of $1+\varepsilon$
in the approximation ratio. To this end, we will prove the following
lemma in Section~\ref{sec:Reduction}. When measuring the running
time to solve a given instance of GSP, we denote by $n$ the number
of bits needed to encode the processing time $p_{j}$ and the release
time $r_{j}$ of each job $j\in J$. For the jobs' cost functions,
we assume (similarly as in \cite{DBLP:journals/siamcomp/BansalP14})
that we are given access to an oracle that returns in constant time
for each combination of a job $j\in J$ and a value $q>0$ the earliest
time $t\in\mathbb{R}$ such that $\cost_{j}(t)\ge q$.

\begin{lemma}\label{lem:reduction-to-RCP}
Given an $\alpha$-approximation
algorithm for RCP with a running time of $f(n,p_{\max},K)$ there is an $\alpha(1+\varepsilon)$-approximation algorithm for GSP with a running time of $f((n\cdot p_{\max})^{O(1)}, p_{\max}, (1/\epsilon)^{O(1/\epsilon^3)})$.
\end{lemma}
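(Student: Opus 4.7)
The plan is to transform the GSP instance into an RCP instance whose optimal solutions correspond to GSP schedules up to a factor of $1+\epsilon$. First I would apply a standard geometric discretization of the input data: round the processing times $p_j$ and the set of relevant candidate completion times for each job to powers of $1+\epsilon'$ for $\epsilon' = \Theta(\epsilon)$. By standard perturbation arguments this distorts the optimal value by at most a factor $1+\epsilon$. After this preprocessing, the number of distinct processing-time classes and the number of candidate completion-time windows per job are polylogarithmic in $n\cdot p_{\max}$, and the portion of the time axis that must be considered is polynomial in $n\cdot p_{\max}$.

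Next I would construct the RCP instance. For each job $j$ and each suitable combination of a processing-time class and a candidate completion-time window, I create a row $W\in\mathcal{W}$. Conceptually, selecting a prefix of rectangles $R_1,\dots,R_m$ in this row corresponds to scheduling $j$ so that it completes at $\ri(R_m)$; the values $p(R_i)$ are chosen so that the total value of the prefix equals the discretized $p_j$, and the costs $c(R_i)$ telescope so that the total cost of the prefix equals $\cost_j(\ri(R_m))$ up to a factor $1+\epsilon$ (queried through the oracle assumed in the model). The rays in $\mathcal{L}$ then encode the machine-capacity constraint in its complementary (covering) form: for each relevant time $t$ and each relevant ``threshold row'' $s$ I place a ray $L(s,t)$ whose demand equals the amount of processing that any feasible schedule must have performed by time $t$ on the jobs belonging to rows of index at most $s$.

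Given this construction I would verify both directions of the correspondence. Any feasible GSP schedule of cost $\OPT$ induces, for each job, a prefix in the corresponding row determined by its completion time; the resulting selection has cost at most $(1+\epsilon)\OPT$ and covers every ray by construction. Conversely, an $\alpha$-approximate RCP solution can be decoded into a preemptive schedule by processing each job during the $x$-intervals of its selected rectangles and breaking conflicts with a standard list-scheduling rule; the ray-covering constraints guarantee that such a decoding exists at the RCP cost. The hard part will be this decoding step, because the ``prefix from each row'' rule does not by itself enforce the machine-capacity constraint: this requires a careful placement of the rays together with a sufficient refinement of the rows so that any covering solution can indeed be realized as a valid preemptive schedule. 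Finally I would bound the number of rectangles per row by $(1/\epsilon)^{O(1/\epsilon^{3})}$, which follows because within a single row only a bounded number of geometrically spaced completion-time choices need to be represented; this gives the claimed bound on $K$. Substituting the resulting RCP input size (polynomial in $n\cdot p_{\max}$), the unchanged $p_{\max}$, and the bound on $K$ into $f$ yields the running time claimed in the lemma.
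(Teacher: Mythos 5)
Your high-level plan (encode the schedulability constraint in complementary covering form, one row per job with prefix selection tracking how late the job finishes) matches the paper's intent, but two steps are genuinely broken or missing. First, the opening discretization is exactly the ``standard perturbation'' that works for weighted flow time but \emph{fails} for GSP: since $\cost_j$ is only assumed non-decreasing, moving a candidate completion time to the next power of $1+\epsilon'$ can increase the job's cost by an arbitrary factor (e.g.\ a hard deadline just after the original completion time), and rounding the $p_j$ perturbs the capacity constraints, which must be kept exact because the ray demands are of the form $\sum_{j:s\le r_j<t}p_j-(t-s)$. The paper never rounds the time axis or the processing times; it discretizes the \emph{cost} axis instead, via the milestones of \Cref{lem:milestones} at which $\cost_j$ grows by a factor $1+\epsilon$ (computed through the oracle). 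This is precisely the point where GSP differs from flow time, and your proposal inherits the flow-time argument where it does not apply.

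Second, the step you yourself flag as ``the hard part'' -- decoding an RCP solution into a feasible schedule -- is the key idea you are missing, and the fix you sketch does not work: the $x$-extent of a job's selected prefix records the interval during which the job is still \emph{unfinished} (i.e.\ it determines a deadline/completion time), not the slots in which it is processed, so ``process each job during the $x$-intervals of its selected rectangles'' plus list scheduling has no reason to produce a feasible schedule. The paper closes this gap with the EDF characterization (\Cref{lem:edf-condition}): the rays are placed so that covering them is \emph{equivalent} to condition \eqref{eq:edf-condition} for the completion times read off from the selected prefixes, and then EDF realizes these completion times exactly, with no further loss. Relatedly, your claim that the prefix costs ``telescope'' to $\cost_j(\ri(R_m))$ up to $1+\epsilon$ glosses over the fact that cutting a job's milestone sequence into rows of length $(1/\epsilon)^3$ re-charges the full accumulated cost at the first rectangle of every row; controlling this requires the random-offset argument of \Cref{lem:reduction4} and the large-jump insertion of \Cref{lem:reduction5}, and the bound on $K$ additionally needs the $\epsilon/|W|\cdot c(R_1)$ cost perturbation (otherwise a near-zero-cost rectangle in a row makes $K$ unbounded). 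None of these appear in your outline, so as written the reduction neither preserves cost within $1+\epsilon$ nor yields the claimed bound on $K$.
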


Due to this reduction, it suffices to construct an algorithm for RCP.
We will present a $(2+\epsilon)$-approximation algorithm for RCP
in Section~\ref{sec:RCPalgorithm} which will prove the following lemma.

\begin{lemma}\label{lem:approx-RCP} There is a $(2+\varepsilon)$-approximation
algorithm for RCP with a running time of $2^{(1/\epsilon\cdot K\log n\log p_{\max})^{O(1)}}$. \end{lemma}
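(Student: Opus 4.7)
The plan is to solve RCP by a recursive divide-and-conquer procedure whose state is a vertical strip $X$ together with two pieces of guessed information: (i) the set of rectangles that the output selects from rows whose rectangles cross at least one of the two vertical boundaries of $X$, and (ii) a set of at most polylogarithmically many \emph{artificial rays} (possibly placed outside $X$) which summarize additional demand that these rows must cover. At the top level the strip is the whole plane and no guessed information is supplied. At each recursive step the current strip is split by a vertical line into a left half $X_{\lsub}$ and a right half $X_{\rsub}$; since all real and artificial rays are vertical, their demand constraints decouple across the split. Each row of the subproblem is classified, with respect to the boundaries of $X$, as \emph{centered} (entirely inside $X$), \emph{spanning} (crossing both boundaries), or \emph{left-} or \emph{right-sticking-in} (crossing exactly one boundary).

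Centered rows that are cut by the chosen dividing line are duplicated into independent copies, one for each child subproblem, cutting at most one rectangle in two; the full cost of the row is charged to both copies, losing a factor of~$2$. The crucial point is that any row duplicated in this way is no longer centered in the recursive children: it becomes a spanning or sticking-in row of the smaller strip and is then handled by the non-centered machinery below without any further factor-$2$ loss. Consequently every rectangle of the input is paid for at most twice across the whole recursion tree, giving an overall multiplicative loss of at most~$2$ from this mechanism, and the remaining slack $\varepsilon$ is absorbed by rounding costs to powers of $1+\varepsilon/c$.

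For the non-centered rows the core step is a structural lemma that I plan to prove by augmentation: any optimal solution can be modified, at an extra $(1+\varepsilon)$ factor in cost, so that in every subproblem visited by the recursion only $(K\log n\log p_{\max}/\varepsilon)^{O(1)}$ rectangles of spanning and sticking-in rows are selected, and the residual demand which these rows must supply to rays outside $X$ can be represented by polylogarithmically many artificial rays. The latter is obtained by rounding residual demands to powers of $1+\varepsilon/c$ and coalescing rays within each $y$-level that share the same rounded residual, using the fact that each row contains at most $K$ rectangles. With this structural lemma the guessing information per subproblem has bit-length $(K\log n\log p_{\max}/\varepsilon)^{O(1)}$, so the DP has at most $2^{(K\log n\log p_{\max}/\varepsilon)^{O(1)}}$ reachable states and its running time matches the claim of Lemma~\ref{lem:approx-RCP}.

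The main obstacle, and where I expect most of the work to lie, is dealing cleanly with the prefix property of rows, most subtly for left-sticking-in rows: selecting a rectangle $R$ from such a row forces the output to include every rectangle to its left in the same row, and those left-neighbours may sit in subproblems processed elsewhere in the recursion tree. My plan is to include in the guessed state the rightmost selected rectangle of each spanning and sticking-in row and then charge every leftward forced rectangle to the unique descendant subproblem whose strip contains it, passing the obligation ``this row is selected up to position~$x$'' between sibling subproblems via the artificial rays. The amortization argument I aim to establish is that, with this accounting, each input rectangle contributes to the cost in only one subproblem beyond the single factor-$2$ incurred when its row was first duplicated; combined with the structural lemma and the centered-row analysis above this yields the claimed $(2+\varepsilon)$-approximation within the stated running time.
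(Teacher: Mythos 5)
There is a genuine gap, and it sits exactly where your plan puts all the weight: the structural lemma you intend to prove ``by augmentation'' is false as stated. You claim that an optimal solution can be modified, at a $(1+\varepsilon)$ factor in cost, so that in every subproblem only $(K\log n\log p_{\max}/\varepsilon)^{O(1)}$ rectangles of spanning and sticking-in rows are selected. But a single ray inside the strip may have demand that can only be met by selecting rectangles from $\Omega(n)$ distinct boundary-crossing rows (each row contributes at most $K$ rectangles, all of bounded value), and no near-optimal augmentation can shrink that count to polylogarithmic; hence the guessed state cannot be the selected rectangles themselves. The paper's resolution is different in kind: after a preprocessing step that reduces to $O(\log n/\varepsilon)$ distinct rectangle costs, the boundary-crossing rows are grouped into classes by (cost of first rectangle, rounded value), the reference solution is augmented by adding \emph{all} rectangles of a small $\varepsilon/(2K\log T)$ fraction of extra rows per class, and one then shows that the (possibly huge) selection within each class is describable by a single threshold row $W_{\mathrm{set},\filled}^{c',p'}$ below which the boundary-crossing rectangle is always taken. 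Only these thresholds (one per class, polylogarithmically many classes) are guessed. The same augmentation creates the additive slack that justifies replacing the residual demands by a step function with polylogarithmically many steps; your alternative of rounding residual demands to powers of $1+\varepsilon/c$ and coalescing rays has no such compensation mechanism, and a multiplicative rounding of demands does not translate into a multiplicative cost bound under the prefix constraints.

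The second unproven piece is the amortization you defer to ``where I expect most of the work to lie.'' Your per-row claim --- once a row is duplicated it becomes sticking-in in all descendants and never again incurs a factor $2$ --- is too coarse: a left-sticking-in row whose not-yet-selected suffix lies strictly inside a child strip is treated there like a centered row again and can be cut again, and selecting any of its rectangles forces payment for rectangles lying entirely outside the current strip. The paper handles this with a per-rectangle (not per-row) potential: the auxiliary cost $c_{\APX}$ charges factor $2$ only to rectangles of left-sticking-in rows with $\lef(R)>\lef(A)$ and factor $1$ to those with $\lef(R)\le\lef(A)$, and the inequality $c_{\APX}(A,S^{+})\ge c'_{\APX}(A_{\lsub},S^{+}_{\lsub})+c'_{\APX}(A_{\rsub},S^{+}_{\rsub})+c(\APX_{\mi})$ is verified case by case, together with a cost-transfer trick when a rectangle is cut (the right piece absorbs the cost of the left part of its row so that the prefix obligation is paid exactly once). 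Your proposal also suggests guessing the rightmost selected rectangle of \emph{each} spanning and sticking-in row, which is exponential without the class/threshold compression above. So while the high-level skeleton (vertical recursion, factor $2$ on cut centered rows, artificial rays outside the strip) matches the paper, the two mechanisms that make it work --- the class-wise augmentation with threshold rows plus slack-backed step functions, and the refined per-rectangle factor-$2$ accounting for left-sticking-in rows --- are missing or replaced by claims that do not hold.
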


Finally, Lemmas \ref{lem:reduction-to-RCP} and \ref{lem:approx-RCP}
yield Theorem~\ref{thm:main}. We will present our $(1+\epsilon)$-approximation algorithm
for weighted tardiness (corresponding to Theorem~\ref{thm:main-tardiness}) in Section~\ref{apx:weighted-tardiness}.

\section{Reduction}\label{sec:Reduction}
In this section we prove \Cref{lem:reduction-to-RCP}. Recall that our goal is to reduce GSP to RCP, losing only a factor of $1 + O(\epsilon)$. This yields a factor of $1+\epsilon$ by standard rescaling. Throughout the section, we allow our running time to be polynomial in $n$, the size of
the GSP instance and $p_{\max}$.

A standard result in scheduling, see e.g.~\cite[Chapter~3]{lenstra2020elements}
is the following condition on when a set of jobs can be scheduled within their
deadlines on a single machine.
\begin{lemma}\label{lem:edf-condition}
Suppose that for each job $j\in J$ we are given a deadline $d_j\geq r_j$. Then there
exists a possibly preemptive schedule on one machine where each job
	is completed by $d_j$ if and only
	if
	\begin{equation}\label{eq:edf-condition}
		\sum_{j : s \le r_j < d_j \le t} p_j \le t - s \quad \forall \forall s,t\in \RR:\min_{j}r_j\leq s<t\leq \max_j d_j
	\end{equation}
	Such a schedule can be found by scheduling the jobs using earliest-deadline-first (EDF).
	If all input numbers are integral
	then EDF produces a schedule that completes and preempts jobs only at integral times.
\end{lemma}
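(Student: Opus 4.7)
}

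My plan is to prove the two directions of the equivalence separately, and then derive the integrality statement as an easy consequence of the analysis.

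For the \emph{necessity} direction ($\Rightarrow$), I would use a straightforward volume argument. Fix any schedule meeting all deadlines and any $s<t$ in the stated range. Every job $j$ with $s\le r_j<d_j\le t$ can be processed only inside $[r_j,d_j]\subseteq[s,t]$, so the machine must spend at least $\sum_{j:\,s\le r_j<d_j\le t} p_j$ units of time inside $[s,t]$. Since the machine processes at most one job at a time, this total is bounded by $t-s$, yielding exactly~\eqref{eq:edf-condition}.

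For the \emph{sufficiency} direction ($\Leftarrow$), I would argue by contradiction: assume \eqref{eq:edf-condition} holds but EDF misses the deadline of some job $j^\star$, and let $t:=d_{j^\star}$. Starting from $t$ I would walk backwards in time to find the latest instant $s<t$ at which either the machine is idle or EDF is processing a job with deadline strictly greater than $t$ (taking $s=\min_j r_j$ if no such time exists). By the EDF rule, throughout $(s,t]$ the machine is busy processing jobs whose deadlines are at most $t$. The key step, which I expect to be the \textbf{main obstacle}, is arguing that every such job $j$ processed in $(s,t]$ additionally satisfies $r_j>s$: indeed, had such a job been released at or before $s$, then at time $s$ it would have been an available job with deadline $\le t$, and EDF would have preferred it over both idleness and over the job with deadline $>t$, contradicting the choice of $s$. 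Once this is established, the total processing required inside $(s,t]$ strictly exceeds $t-s$ (since $j^\star$ itself contributes but still misses its deadline), hence $\sum_{j:\,s<r_j,\,d_j\le t} p_j>t-s$, contradicting~\eqref{eq:edf-condition} applied to $s$ and $t$ (after possibly replacing $s$ by $s-\delta$ for small $\delta>0$ to convert the strict inequality $r_j>s$ into $r_j\ge s-\delta$, or more cleanly by observing the condition holds for all reals).

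Finally, for the \emph{integrality} claim I would argue by induction on the sequence of ``events'' of EDF. With integral release times and processing times, EDF only changes which job is being processed either when the currently running job completes or when a new job becomes available; both events occur at integer times, since (i) all releases are integer, and (ii) starting the processing of a job with remaining integral length at an integer time makes the next completion integer. Thus every preemption and completion produced by EDF occurs at an integer time, completing the proof.
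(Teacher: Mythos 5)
The paper itself gives no proof of this lemma: it is invoked as a standard scheduling fact (feasibility of deadlines via EDF) with a citation to a textbook, so there is no internal argument to compare against. Your proposal is the standard textbook proof and it is sound: the necessity direction is the correct volume argument (all work of jobs with $s\le r_j<d_j\le t$ must fit inside $[s,t]$ on one machine); the sufficiency direction is the usual busy-interval contradiction for EDF, where you take $t=d_{j^\star}$ for a job missing its deadline and let $s$ be the last time before $t$ at which the machine is idle or runs a job with deadline exceeding $t$, so that $(s,t]$ is fully busy with jobs of deadline at most $t$ that were released after (or at) $s$, forcing $\sum_{j: s\le r_j<d_j\le t}p_j>t-s$; and the integrality claim follows by your event-based induction (EDF switches jobs only at releases and completions, both of which stay integral). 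Two small remarks: since $s$ is defined as a supremum, a job released exactly at time $s$ may slip through your argument that $r_j>s$, but this is harmless because all you need is $r_j\ge s$, which is exactly the form appearing in \eqref{eq:edf-condition}; for the same reason the $s-\delta$ perturbation you mention at the end is unnecessary — the jobs you identify already satisfy $s\le r_j$ and are therefore counted in the sum as written. With these cosmetic points cleaned up, your argument is a complete and correct proof of the statement the paper only cites.
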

The lemma above implies that to solve GSP, all we need to do is to find job completion times that minimize the cost and
satisfy the condition. Then we can find a schedule using EDF on the completion times.
Let $T = 2^{k+1}$, where $k\in\mathbb N$ with $2^k \le \max_{j\in J} r_j + \sum_{j\in J} p_j < 2^{k+1}$.
Then $T$ is an upper bound on
the last completion time in an optimal schedule produced by EDF.
We may assume without loss of generality that the smallest release time is zero and the difference between any two consecutive release times $r_j < r_{j'}$ is
at most $\sum_j p_j$, since otherwise all jobs released before $r_{j'}$ will in an EDF solution
be finished before $r_{j'}$ and therefore the instance would split into two independent instances.
Thus, we can bound $T$ by $O(n \sum_j p_j) \le O(n^2 p_{\max})$.

Based on the considerations above, the following time-indexed integer linear program (ILP1) exactly models GSP:
\begin{align*}
	\min \sum_{j\in J} &\sum_{t=r_j}^{T-1} (\cost_j(t+1) - \cost_j(t)) \cdot x_{j, t} \hspace{-3cm} \\
	\sum_{j : s \le r_j < t} p_j \cdot x_{j, t} &\ge \sum_{j : s \le r_j < t} p_j - (t - s) &&\forall s, t\in [T]: s < t \\
	x_{j, t-1} &\ge x_{j, t} &&\forall j\in J, t\in \{r_j+2,r_j+3,\dotsc,T\} \\
	x_{j, t} &\in \{0, 1\} &&\forall j\in J, t\in \{r_j+1,r_j+2,\dotsc,T\}
\end{align*}
Here $x_{j, t} = 1$ if and only if $C_j > t$.
Note that the size of ILP1 is not necessarily polynomial in the size of the GSP instance $n$, but
polynomial in $n + p_{\max}$.
\begin{lemma}\label{lem:reduction1}
	Given a solution for GSP with some cost $c$, in time $\mathrm{poly}(n, p_{\max})$ we can compute a solution
	for ILP1 with cost at most $c$.
	Conversely, given a solution of cost $c$ for ILP1
	 in time $\mathrm{poly}(n, p_{\max})$ we can compute a solution for GSP of cost at most $c$.
\end{lemma}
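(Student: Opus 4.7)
The plan is a straightforward bidirectional translation between schedules and ILP1 assignments, using the interpretation that $x_{j,t}$ is the indicator $\mathbf{1}[C_j > t]$. The heart of the argument is that the covering constraint of ILP1 is a simple algebraic rewriting of the EDF feasibility condition~(\ref{eq:edf-condition}) of Lemma~\ref{lem:edf-condition}, and that the objective telescopes to the GSP cost up to an additive constant.

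First, for the GSP-to-ILP1 direction, I would take a schedule with completion times $(C_j)_{j\in J}$ of cost $c$ and set $x_{j,t} := 1$ if $t < C_j$ and $x_{j,t} := 0$ otherwise, for $t$ in the variable's range. Monotonicity $x_{j,t-1}\ge x_{j,t}$ is immediate. For the covering inequality at $(s,t)$, I would observe
\[
	\sum_{j:\, s\le r_j<t} p_j\,(1-x_{j,t}) \;=\; \sum_{j:\, s\le r_j<t,\ C_j\le t} p_j,
\]
and apply Lemma~\ref{lem:edf-condition} with deadlines $d_j := C_j$ (whose hypothesis~(\ref{eq:edf-condition}) is certified by the given schedule itself) to bound the right-hand side by $t-s$; rearranging gives the ILP1 inequality. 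Finally, a telescoping sum shows $\sum_{t=r_j}^{T-1}(\cost_j(t+1)-\cost_j(t))\,x_{j,t} = \cost_j(C_j)-\cost_j(r_j)$, so the ILP1 objective equals the GSP cost minus the known constant $\sum_j \cost_j(r_j)$, which can be tracked and absorbed.

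Second, for the ILP1-to-GSP direction, I would set $C_j := \min\{t : x^*_{j,t}=0\}$ (with the convention that $C_j$ equals the top of the variable range if no such $t$ exists). By the monotonicity constraint of ILP1 this is consistent with $x^*_{j,t}=1 \iff t<C_j$, so substituting into the covering inequality yields exactly the EDF hypothesis~(\ref{eq:edf-condition}) for $d_j := C_j$. Lemma~\ref{lem:edf-condition} then produces, via EDF, a feasible schedule in which each job completes by $C_j$; the GSP cost is therefore at most $\sum_j \cost_j(C_j)$ by monotonicity of $\cost_j$, and the same telescoping identity shows this equals the ILP1 cost (up to the same additive constant).

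Both translations are a single linear pass through the variables, so the running time is bounded by the size of ILP1, which is $\mathrm{poly}(n,p_{\max})$ as already noted. The only real pitfalls are indexing bookkeeping (aligning the range $\{r_j+1,\dots,T\}$ of the variables with the $t=r_j$ summand in the objective) and consistently accounting for the additive constant $\sum_j \cost_j(r_j)$; I do not anticipate any mathematical obstacle beyond these, since the ILP was designed precisely to encode the condition of Lemma~\ref{lem:edf-condition}.
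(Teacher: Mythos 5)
Your proposal is correct and takes essentially the same route as the paper's proof: set $x_{j,t}=1$ exactly when $t<C_j$, verify the covering constraint by rearranging the EDF feasibility condition of Lemma~\ref{lem:edf-condition}, and for the converse read off $C_j$ from the monotone variables and schedule via EDF, with the objective handled by the same telescoping identity. The only (harmless) difference is that you explicitly track the additive constant $\sum_{j}\cost_j(r_j)$, which the paper's computation implicitly treats as zero.
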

\begin{proof}
	Let $C_j$, $j\in J$, be the completion times in a solution to GSP.
	We derive a solution to ILP1 by setting $x_{j, t} = 1$ for all $t < C_j$ and $x_{j, t} = 0$ for
	all $t \ge C_j$. Then the cost of the solution to ILP1 is
	\begin{equation*}
		\sum_{j\in J} \sum_{t = r_j}^{C_j - 1} \cost_j(t+1) - \cost_j(t) = \sum_{j\in J} \cost_j(C_j)
	\end{equation*}
	Let $s, t\in [T]$ with $s < t$.
	Since $C_j$, $j\in J$, are feasible, we have $\sum_{j : s\le r_j < C_j \le t} p_j \le t - s$.
	Let $S = \{j\in J : s\le r_j < t\}$.
	Then,
	\begin{equation*}
		t - s \ge \sum_{j: s\le r_j < C_j \le t} p_j = \sum_{j\in S} p_j - \sum_{j\in S : C_j > t} p_j
		= \sum_{j\in S} p_j - \sum_{j\in S} p_j \cdot x_{j, t} .
	\end{equation*}
	By rearranging we obtain that the first constraint of ILP1 is satisfied. The second constraint
	is satisfied by definition of $x_{j,t}$.
	
	For the other direction, let $x_{j, t}$ be a solution to ILP1.
	We define $C_j = t$, where $t$ is maximal with $x_{j, t-1} = 1$.
	Because of the second constraint we then have $x_{j, t} = 1$ if and only if $t < C_j$.
	Thus,
	\begin{equation*}
		\sum_{j\in J} \cost_j(C_j) = \sum_{j\in J}\sum_{t = r_j}^{C_j-1} \cost_j(t+1) -  \cost_j(t)
		= \sum_{j\in J}\sum_{t = r_j}^{T - 1} (\cost_j(t+1) -  \cost_j(t)) \cdot x_{j, t} .
	\end{equation*}
	Let $s < t$. As above, define $S = \{j\in J : s\le r_j < t\}$.
	Then
	\begin{equation*}
		\sum_{j : s \le r_j < C_j \le t} p_j = \sum_{j \in S} p_j - \sum_{j\in S : C_j > t} p_j
		= \sum_{j \in S} p_j - \sum_{j\in S} p_j \cdot x_{j, t} \le t - s .
	\end{equation*}
	Thus, EDF applied to the values $C_j$, $j\in J$, produces a solution to GSP with at most the cost of
	the solution for ILP1.
\end{proof}
We will now rewrite ILP1 in two steps. Both steps introduce a small approximation error, but
this ultimately leads to an integer program that is equivalent to RCP.
For each job $j$ we define a sequence of \emph{milestones} $m_0(j),m_1(j),m_2(j),\dotsc \in \mathbb N$,
which are points in time that roughly indicate 
{that the cost function of $j$ increased by a factor of $1+\epsilon$ compared to the previous milestone. 
The properties are formalized in the lemma below.
\begin{lemma}\label{lem:milestones}
	For each job $j$ we can in time $\mathrm{poly}(n, p_{\max})$ construct a sequence $m_0(j),m_1(j),m_2(j),\dotsc,m_{f_j}(j)$
	where
	\begin{enumerate}
		\item $\cost_j(m_{i+1}(j)) \le (1 + \epsilon) \cdot \cost_j(m_{i}(j) + 1)$ for all $0\le i < f_j$,
		\item $\cost_j(m_{i+1}(j) + 1) > (1 + \epsilon / 4) \cdot \cost_j(m_{i}(j) + 1)$ for all $0\le i < f_j - 1$,
		\item $m_0(j) = r_j$,
		\item $m_{f_j}(j) = T$.
	\end{enumerate}
\end{lemma}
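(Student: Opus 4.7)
The plan is to construct the milestones greedily via repeated oracle calls, with each new milestone chosen so that the cost at the next integer time has grown by a factor strictly between $1+\epsilon/4$ and $1+\epsilon/2$ relative to the previous one. First I would set $m_0(j) := r_j$ as required by property (3). Then, inductively given $m_i(j) < T$, let $c_i := \cost_j(m_i(j)+1)$ (obtainable from the oracle by one extra query or by maintaining it symbolically along the iteration). I would distinguish three cases. If $c_i > 0$ and $(1+\epsilon)\cdot c_i \ge \cost_j(T)$, set $m_{i+1}(j) := T$ and terminate with $f_j := i+1$. If $c_i = 0$, query the oracle with the smallest positive threshold at which cost becomes strictly positive to obtain the earliest time $t^*$ with $\cost_j(t^*) > 0$; if $t^* > T$ then $\cost_j$ is identically zero on $[r_j,T]$ and we set $m_{i+1}(j) := T$, otherwise set $m_{i+1}(j) := \lceil t^*\rceil - 1$. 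Otherwise ($c_i > 0$ and $(1+\epsilon) c_i < \cost_j(T)$) query the oracle with the value $q = (1+\epsilon/2)\cdot c_i$, obtain the returned time $t^*$, and set $m_{i+1}(j) := \lceil t^*\rceil - 1$. In every non-terminating case the monotonicity of $\cost_j$ forces $m_{i+1}(j) > m_i(j)$, so the sequence is strictly increasing and eventually reaches $T$, which yields (4).

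Next I would verify properties (1) and (2). By construction, $\lceil t^*\rceil$ is the smallest integer time with $\cost_j(\lceil t^*\rceil) \ge (1+\epsilon/2)\,c_i$, so $\cost_j(m_{i+1}(j)) = \cost_j(\lceil t^*\rceil - 1) < (1+\epsilon/2)\,c_i < (1+\epsilon)\,c_i$, yielding (1) for the non-terminating case; in the terminating case (1) holds directly because we arranged $(1+\epsilon) c_i \ge \cost_j(T) = \cost_j(m_{i+1}(j))$, and in the all-zero subcase both sides are zero. For (2), whenever $i < f_j - 1$ the step that produced $m_{i+1}(j)$ was non-terminating, and then $\cost_j(m_{i+1}(j)+1) = \cost_j(\lceil t^*\rceil) \ge (1+\epsilon/2)\,c_i > (1+\epsilon/4)\,c_i$; in the zero-to-positive transition $\cost_j(m_{i+1}(j)+1) > 0 = (1+\epsilon/4)\cdot 0$, which also suffices.

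Finally I would bound the number of iterations and hence the running time. After the (at most one) transition out of the zero regime, property (2) gives $\cost_j(m_{i+1}(j)+1) > (1+\epsilon/4)\cdot \cost_j(m_i(j)+1)$, so the values $\cost_j(m_i(j)+1)$ grow geometrically with ratio $1+\epsilon/4$ until they exceed $\cost_j(T)/(1+\epsilon)$ and we terminate. Since the $m_i(j)$'s are strictly increasing integers in $[r_j,T]$, an entirely trivial bound is $f_j \le T$, and recall from the discussion preceding the lemma that $T = O(n^2 p_{\max})$; together with $O(1)$ oracle calls per iteration this yields the claimed $\mathrm{poly}(n, p_{\max})$ running time. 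The only real obstacle is getting the edge cases right, namely the possibility that $\cost_j$ is zero on an initial stretch or jumps to $+\infty$: the first is handled by the dedicated zero-case query, and the second is harmless because $\cost_j(m_{i+1}(j)+1) = +\infty$ trivially satisfies both (1) at the final step (with $m_{i+2}(j) = T$ and $\cost_j(T) = +\infty \le (1+\epsilon)\cdot\infty$) and (2) at earlier steps.
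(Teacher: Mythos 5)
Your proposal is correct and is essentially the paper's own construction: the paper sets $m_0(j)=r_j$ and takes $m_i(j)\le T$ maximal with $\cost_j(m_i(j))\le(1+\epsilon)\cost_j(m_{i-1}(j)+1)$, which is the same greedy thresholding you implement via oracle queries (your threshold $(1+\epsilon/2)\cost_j(m_i(j)+1)$ versus the paper's $(1+\epsilon)$), and the verification of properties (1)--(4) and of the $\mathrm{poly}(n,p_{\max})$ bound proceeds exactly as you argue. The one shaky spot is your zero-cost case: ``the smallest positive threshold at which cost becomes strictly positive'' (equivalently, the earliest time with strictly positive cost) need not exist when $\cost_j$ increases continuously from zero, whereas the paper's formulation --- take the largest integer $t\le T$ with $\cost_j(t)\le(1+\epsilon)\cost_j(m_{i-1}(j)+1)$, which also covers the threshold $0$ --- needs no such case distinction; this is a presentational wrinkle rather than a different approach, and the paper's own proof is no more detailed about oracle implementation than yours.
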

\begin{proof}
	The construction is straight-forward:
	For $j\in J$ set $m_0(j) = r_j$ and
	for $i > 0$ let $m_i(j) \le T$ be maximal with
	\begin{equation*}
		\cost_j(m_i(j)) \le (1 + \epsilon) \cdot \cost_j(m_{i-1}(j) + 1) .
	\end{equation*}
	Let $f_j$ be the first index such that $m_{f_j}(j) = T$.
	Note that the construction satisfies (2) even with $\epsilon$ instead of $\epsilon/4$\footnote{We keep this weaker formulation of the lemma in order to later give an alternative approach in the weighted tardiness objective.}.
\end{proof}
The idea of integer program ILP2, which is given below, is that it suffices to restrict ILP1 to the variables $x_{j, m_i(j)}$ (which we denote by $y_{j, i}$ in ILP2) and set all other variables implicitly based on those.
\begin{align*}
	\min \sum_{j\in J} &\sum_{i=0}^{f_j-1} (\cost_j(m_{i+1}(j)) - \cost_j(m_{i}(j))) \cdot y_{j, i} \hspace{-3cm} \\
	\sum_{\substack{j, i : \ s \le r_j < t,\\ m_{i}(j) \le t < m_{i+1}(j)}} p_j \cdot y_{j, i} &\ge \sum_{j : \ s \le r_j < t} p_j - (t - s) &&\forall s, t\in [T]: s < t \\
	y_{j, i-1} &\ge y_{j, i} &&\forall j\in J, i \in \{0,1,\dotsc,f_j\} \\
	y_{j, i} &\in \{0, 1\} &&\forall j\in J, i \in \{0,1,\dotsc,f_j\}
\end{align*}
\begin{lemma}\label{lem:reduction2}
	Let $(m_0(j), m_1(j),\dotsc, m_{f_j}(j))_{j\in J}$ be milestones satisfying the properties of \Cref{lem:milestones}.
	Then given a solution of some cost $c$ for ILP1, one can in time $\mathrm{poly}(n, p_{\max})$ compute
	a solution of cost at most $(1 + \epsilon)c$
	for ILP2. Conversely, given a solution of some cost $c$ for ILP2 one can compute in time $\mathrm{poly}(n, p_{\max})$ 
	a solution of the same cost for ILP1.
\end{lemma}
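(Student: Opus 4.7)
The natural bijection to try is the identification $y_{j,i} \leftrightarrow x_{j, m_i(j)}$, and the plan is to verify that this mapping (in both directions) preserves feasibility and only incurs a $(1+\epsilon)$-factor loss in the objective in the forward direction.

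For the forward direction, given a feasible $(x_{j,t})$ for ILP1 I would define $y_{j,i} := x_{j, m_i(j)}$ for $i \ge 1$ and $y_{j,0} := 1$. The monotonicity constraint $y_{j,i-1} \ge y_{j,i}$ follows immediately from the corresponding ILP1 monotonicity applied to $m_{i-1}(j) < m_i(j)$. For the covering constraint, fix $s < t$. For each job $j$ with $s \le r_j < t$, let $i(j,t)$ be the unique index with $m_{i(j,t)}(j) \le t < m_{i(j,t)+1}(j)$. By ILP1-monotonicity, $y_{j,i(j,t)} = x_{j, m_{i(j,t)}(j)} \ge x_{j,t}$, so the ILP2 left-hand side dominates the ILP1 left-hand side at the same $(s,t)$, which is already $\ge \sum_j p_j - (t-s)$.

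For the cost bound, I would interpret both objectives via telescoping. Letting $C_j$ be the completion time read off from $(x_{j,t})$ as in \Cref{lem:reduction1}, the ILP1 cost equals $\sum_j \cost_j(C_j)$. Under my mapping, $y_{j,i} = 1$ iff $C_j > m_i(j)$, so the ILP2 cost telescopes to $\sum_j \cost_j(m_{k(j)+1}(j))$, where $k(j)$ is maximal with $m_{k(j)}(j) < C_j$, i.e.\ $m_{k(j)}(j) < C_j \le m_{k(j)+1}(j)$. Here is the only place where the approximation factor enters: property (1) of \Cref{lem:milestones} gives $\cost_j(m_{k(j)+1}(j)) \le (1+\epsilon)\cost_j(m_{k(j)}(j)+1) \le (1+\epsilon)\cost_j(C_j)$, since $m_{k(j)}(j)+1 \le C_j$ by integrality. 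Summing over $j$ yields the desired $(1+\epsilon)c$ bound. This is the only nontrivial step; the rest is bookkeeping.

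For the backward direction I would define $x_{j,t} := y_{j,i}$ where $i$ is the unique index with $m_i(j) \le t < m_{i+1}(j)$. Monotonicity of $x$ is immediate inside a milestone block and follows from monotonicity of $y$ across block boundaries. The covering constraint for ILP1 at $(s,t)$ is identical to the ILP2 constraint at the same $(s,t)$ by construction, since each relevant job contributes $p_j \cdot y_{j,i(j,t)} = p_j \cdot x_{j,t}$. Finally, $C_j = m_{k(j)+1}(j)$ where $k(j)$ is the largest index with $y_{j,k(j)}=1$, so the ILP1 cost is $\sum_j \cost_j(m_{k(j)+1}(j))$, which exactly matches the telescoping value of the ILP2 cost; hence the cost is preserved with no loss. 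The main subtlety I would flag is the forward-direction use of $m_{k(j)}(j)+1 \le C_j$, which relies on integrality of $C_j$ and is what pins the approximation factor to exactly $1+\epsilon$ rather than something worse.
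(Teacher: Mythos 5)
Your proposal is correct and follows essentially the same route as the paper: the identification $y_{j,i}=x_{j,m_i(j)}$ (resp.\ $x_{j,t}=y_{j,i}$ for $m_i(j)\le t<m_{i+1}(j)$), the covering constraint handled by monotonicity of $x$ between $m_{i(j,t)}(j)$ and $t$, and the $(1+\epsilon)$ loss coming exactly from property (1) of \Cref{lem:milestones} applied at the last selected milestone. The only differences are cosmetic (phrasing the cost bound via $\cost_j(C_j)$ instead of manipulating the telescoping sums directly, and fixing $y_{j,0}:=1$), so no further comment is needed.
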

\begin{proof}
	Let $x_{j,t}$ be a solution to ILP1. Define $y_{j, i} = x_{j, m_i(j)}$ for all $i,j$.
	Denote by $K_j$ the minimal time such that $y_{j, K_j} = 0$.
	Then
	\begin{align*}
		\sum_{j\in J} \sum_{i=0}^{f_j - 1} (\cost_j(m_{i+1}(j)) - \cost_j(m_{i}(j))) \cdot y_{j, i}
		&\le \sum_{j\in J} \cost_j(m_{K_j}(j)) \\
		&\le (1 + \epsilon) \sum_{j\in J} \cost_j(m_{K_j-1}(j)+1) \\
		&= (1 + \epsilon) \sum_{j\in J} \sum_{t=0}^{m_{K_j - 1}(j)} \cost_j(t+1) - \cost_j(t) \\
		&\le (1 + \epsilon) \sum_{j\in J} \sum_{t=0}^{T-1} (\cost_j(t+1) - \cost_j(t)) \cdot x_{j, t} .
	\end{align*}
	Moreover,
	\begin{equation*}
		\sum_{\substack{j, i : \ s \le r_j \le m_0(j) < t,\\ m_{i}(j) \le t < m_{i+1}(j)}} p_j \cdot y_{j, i}
		= \sum_{\substack{j, i : \ s \le r_j < t,\\ m_{i}(j) \le t < m_{i+1}(j)}} p_j \cdot x_{j, m_i(j)}
		\ge \sum_{j : \ s \le r_j < t} p_j \cdot x_{j, t} \ge \sum_{j:\ s \le r_j < t} p_j - (t - s) .
	\end{equation*}
	The prefix constraint holds trivially.
	
	Now consider the other direction. Let $y_{j, i}$ be a solution to ILP2.
	For all $t$ let $x_{j, t} = y_{j, i}$ where $i\in \{0,1,\dotsc,f_j\}$ with $m_i(j) \le t < m_{i+1}(j)$.
	Let $K_j$ be minimal with $y_{j, K_j} = 0$.
	By definition of $x_{j, t}$, we have that $x_{j, t} = 1$ if and only if $t < m_{K_j}(j)$.
	Thus,
	\begin{align*}
		\sum_{j\in J} \sum_{t=0}^{T-1} (\cost_j(t+1) - \cost_j(t)) \cdot x_{j, t}
		&= \sum_{j\in J} \cost_j(m_{K_j}(j)) \\
		&= \sum_{j\in J} \sum_{i=0}^{f_j - 1} (\cost_j(m_{i+1}(j)) - \cost_j(m_{i}(j))) \cdot y_{j, i}
	\end{align*}
	Furthermore,
	\begin{equation*}
		\sum_{j : \ s \le r_j < t} p_j \cdot x_{j, t} =
		\sum_{\substack{j, i : \ s \le r_j < t,\\ m_{i}(j) \le t < m_{i+1}(j)}} p_j \cdot y_{j, i}
		\ge \sum_{j:\ s \le r_j < t} p_j - (t - s) .
	\end{equation*}
	Again, the other constraint holds trivially.
\end{proof}
Note that the prefix constraint in ILP2, i.e., $y_{j,i-1} \ge y_{j,i}$ for all $ j\in J$ and all $i \in \{0,1,\dotsc,f_j\}$, leads to a sequence of variables that for a fixed $j$ show up in the same constraints and thus cannot be chosen independently.
In the next transformation, it is our goal to slice this sequence into small, constant length, sequences.
We define ILP3($(\tau^j_k)_{j\in J, k\in\mathbb N}$) for given increasing sequences $\tau^j_1,\tau^j_2,\tau^j_3,\dotsc \in \mathbb N$ with $\tau^j_1 = 1$.
We later construct sequences carefully to make sure that the cost of the solution does not increase
significantly. If the sequences are clear from the context, we simply write ILP3.
\begin{align*}
	\min \sum_{j\in J} \sum_{k : \tau^j_k \le f_j} &\bigg[ \cost_j(m_{\tau^j_k}(j)) \cdot z_{j, \tau^j_k}
	+ \sum_{i = \tau^j_k}^{\tau^j_{k+1} - 1} (\cost_j(m_{i+1}(j)) - \cost_j(m_{i}(j))) \cdot z_{j, i} \bigg] \hspace{-10cm} \\
	\sum_{\substack{j, i : s \le r_j, \\ m_{i}(j) \le t < m_{i+1}(j)}} p_j \cdot z_{j, i} &\ge \sum_{j : s \le r_j < t} p_j - (t - s) &&\forall s, t\in [T]: s < t \\
	z_{j, i-1} &\ge z_{j, i} &&\forall j\in J, k\in \N, \text{ and } i\in\{\tau^j_k+1,\dotsc,\tau^j_{k+1} - 1\}\cap  \{0, \dots, f_j\} \\
	z_{j, i} &\in \{0, 1\} &&\forall j\in J, i\in \{0, \dots, f_j\}
\end{align*}
\begin{lemma}\label{lem:reduction3}
	For any increasing sequences $\tau^j_k$ with $\tau^j_1 = 1$
	and any solution of some cost $c$ for ILP3($(\tau^j_k)_{j\in J, k\in\mathbb N}$) one can in time $\mathrm{poly}(n, p_{\max})$ 
	compute a solution for ILP2 of cost at most $c$.
\end{lemma}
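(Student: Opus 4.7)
The plan is to transform a feasible solution $z$ of ILP3 into a feasible solution $y$ of ILP2 of no greater cost by prefix-closing $z$ across blocks. For each $j\in J$ I set $K_j := 1 + \max\{i : z_{j,i} = 1\}$, with $K_j := 0$ when no such $i$ exists, and define $y_{j,i} := 1$ iff $i < K_j$. The global prefix constraint of ILP2 then holds by construction, and the whole construction is computable in time $\mathrm{poly}(n, p_{\max})$.

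For the coverage constraints I would fix any $s < t$ and note that each job $j$ with $s \le r_j < t$ admits a unique $i_j$ with $m_{i_j}(j) \le t < m_{i_j+1}(j)$. The construction yields $y_{j, i_j} \ge z_{j, i_j}$: whenever $z_{j, i_j} = 1$ one has $K_j \ge i_j + 1$ and therefore $y_{j, i_j} = 1$. Summing this inequality weighted by $p_j$ over the relevant jobs shows that the left-hand side of the ILP2 coverage constraint at $(s, t)$ dominates the corresponding sum appearing on the left-hand side of ILP3 at $(s, t)$, so applying the ILP3 coverage finishes the feasibility check.

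The cost analysis proceeds job by job. Fix $j$; if all $z_{j, i}$ vanish, then the ILP2 contribution of $j$ is zero and the claim is trivial, so assume otherwise and let $k^*$ be the largest block index with $z_{j, \tau^j_{k^*}} = 1$. For every block $k$ with $z_{j, \tau^j_k} = 1$, the within-block prefix constraint forces $z_{j, i} = 1$ for exactly $i \in \{\tau^j_k, \dotsc, i^*_k\}$ where $i^*_k$ is the last such index in the block, and the block's contribution to the ILP3 objective telescopes,
\begin{equation*}
\cost_j(m_{\tau^j_k}(j)) + \sum_{i = \tau^j_k}^{i^*_k} \bigl(\cost_j(m_{i+1}(j)) - \cost_j(m_i(j))\bigr) \;=\; \cost_j(m_{i^*_k + 1}(j)).
\end{equation*}
Hence the total ILP3 cost of $j$ equals $\sum_{k : z_{j, \tau^j_k} = 1} \cost_j(m_{i^*_k + 1}(j))$, while the ILP2 cost of $j$ telescopes to $\cost_j(m_{K_j}(j)) = \cost_j(m_{i^*_{k^*} + 1}(j))$, which is a single nonnegative summand of the ILP3 expression. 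Summing over $j$ yields the cost bound.

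The only nontrivial technical step is the block-wise telescoping identity: the role of the ``startup'' term $\cost_j(m_{\tau^j_k}(j)) \cdot z_{j, \tau^j_k}$ in the ILP3 objective is precisely to absorb the $-\cost_j(m_{\tau^j_k}(j))$ coming from the first increment in the block so that the block evaluates cleanly to $\cost_j(m_{i^*_k + 1}(j))$. Once this identity is noted, the rest of the argument is bookkeeping and the domination $y_{j, i_j} \ge z_{j, i_j}$ on the coverage side.
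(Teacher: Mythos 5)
Your proposal is correct and follows essentially the same route as the paper's proof: you prefix-close $z$ at the last index with $z_{j,i}=1$, get feasibility from the pointwise domination $y_{j,i}\ge z_{j,i}$, and bound the cost by the block-wise telescoping identity, where the paper keeps only the block containing the last selected index while you sum all active blocks and observe the ILP2 cost is one nonnegative summand — the same argument in slightly different bookkeeping.
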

\begin{proof}
	Consider a solution $z$ for ILP3.
	Define a solution $y$ for ILP2 as follows: for each $j\in J$, let $\ell$ be maximal with
	$z_{j,\ell} = 1$. Then set $y_{j, i} = 1$ if $i\le \ell$ and $y_{j, i} = 0$ otherwise.
	Since the prefix constraint $y_{j,i-1}\ge y_{j,i}$ is satisfied by definition and
	since we have $y_{j,i}\ge z_{j,i}$ for all $i, j$, it is immediate that $y$ is feasible for ILP2.
	Furthermore, the cost of $y$ in ILP2 is at most the cost of $z$ in ILP3:
	we consider separately the cost induced by the variables for each job $j$.
	Let again $\ell$ be maximal with $z_{j,\ell} = 1$ and let $k$ be such that $\ell\in\{\tau^j_k,s^j_k+1,\dotsc,\tau^j_{k+1}-1\}$.
	The cost of variables $(y_{j,i})_i$ is exactly $\cost_j(m_{\ell+1}(j))$ and the cost
	of variables $(z_{j,i})_i$ is at least
	\begin{equation*}
		\cost_j(m_{\tau^j_k}(j)) + \sum_{i=\tau^j_{k}}^{\ell} (\cost_j(m_{i+1}(j)) - \cost_j(m_{i}(j)))
		= \cost_j(m_{\ell+1}(j)) . \qedhere
	\end{equation*}
\end{proof}
Let $S\in \{1,2,\dotsc,(1/\epsilon)^3\}$ be a parameter. We may think
of $S$ as an offset selected uniformly at random.
Define the sequence $(\tau_k(S))_{k\in\N}$ with $\tau_1(S) = 1$ and $\tau_k(S) = S + (k-1)\cdot (1/\epsilon)^3$
for $k \ge 2$. We refer to ILP3($(\tau_k(S))_{k\in\mathbb N}$) as the ILP that uses $(\tau_k(S))_{k\in\mathbb N}$ for each job.
\begin{lemma}\label{lem:reduction4}
	For any solution of some cost $c$ for ILP2 there is some value $S \in\{1, 2, \dotsc, (1/\epsilon)^3 \}$ such that ILP3($(\tau_k(S))_{k\in\mathbb N}$)
	has a solution of cost at most $(1 + 6\epsilon)c$.
\end{lemma}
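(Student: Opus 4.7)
The plan is: given an ILP2 solution $y$ of cost $c$, set $z_{j,i}:=y_{j,i}$ and show that the expected ILP3 cost over a uniformly random offset $S\in\{1,\dotsc,(1/\epsilon)^3\}$ is at most $(1+6\epsilon)c$, so some $S$ attains this bound. Feasibility of $z$ holds for every $S$: the global prefix constraint of ILP2 implies ILP3's block-wise prefix constraint, and ILP3's coverage constraint is weaker than that of ILP2 because its LHS sums over a larger set of $(j,i)$ pairs (the restriction $r_j<t$ is dropped).

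Next I decompose the ILP3 cost block by block. Let $K_j$ be the threshold with $y_{j,i}=1$ iff $i<K_j$. Within any block $[\tau_k(S),\tau_{k+1}(S)-1]$ the variables $z_{j,i}$ form a $1$-prefix ending at some $\ell_k$, so the in-block cost telescopes to $\cost_j(m_{\ell_k+1}(j))$ (and is $0$ for an inactive block). Writing $k^*(S)$ for the block containing $K_j-1$, that block contributes $\cost_j(m_{K_j}(j))$, while each earlier active block $k<k^*$ contributes $\cost_j(m_{\tau_{k+1}(S)}(j))$. Summing over $j$,
\begin{equation*}
\mathrm{cost}_{\mathrm{ILP3}}(z)\;=\;c\;+\;\sum_{j\in J}\;\sum_{k\ge 1:\,\tau_{k+1}(S)<K_j}\cost_j\bigl(m_{\tau_{k+1}(S)}(j)\bigr).
\end{equation*}

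For the averaging step I use that, with $N:=(1/\epsilon)^3$, the map $(S,k)\mapsto S+kN=\tau_{k+1}(S)$ is a bijection from $\{1,\dotsc,N\}\times\N_{\ge 1}$ onto $\{N+1,N+2,\dotsc\}$. Hence for each job $j$,
\begin{equation*}
\sum_{S=1}^N\;\sum_{k\ge 1:\,\tau_{k+1}(S)<K_j}\cost_j\bigl(m_{\tau_{k+1}(S)}(j)\bigr)\;=\;\sum_{i=N+1}^{K_j-1}\cost_j(m_i(j)).
\end{equation*}
Property~(2) of \Cref{lem:milestones}, combined with monotonicity of $\cost_j$ and the fact that $m_{K_j-1}(j)+1\le m_{K_j}(j)$, gives $\cost_j(m_i(j))\le(1+\epsilon/4)^{-(K_j-1-i)}\cost_j(m_{K_j}(j))$, and the resulting geometric series bounds the right-hand side by $(5/\epsilon)\cost_j(m_{K_j}(j))$.

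Dividing by $N=1/\epsilon^3$ therefore bounds the average extra cost per job by $5\epsilon^2\cost_j(m_{K_j}(j))$, and summing over jobs the total average extra cost is at most $5\epsilon^2 c\le 6\epsilon c$. Thus some $S$ achieves ILP3 cost at most $(1+6\epsilon)c$. The only subtle point in the argument is the block-wise telescoping (together with the mild bookkeeping that $z_{j,0}$ does not appear in the ILP3 objective and so plays no role); the bijection and the geometric-series estimate are then routine.
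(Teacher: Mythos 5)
Your proof is correct, and it takes a genuinely different route for the key estimate than the paper does. Both arguments start identically (set $z:=y$, observe feasibility for every offset since ILP3's prefix constraints are a subset of ILP2's and its covering constraints have only more nonnegative terms on the left-hand side), and both telescope the block costs so that the overhead of ILP3 over ILP2 is the sum of the ``entry fees'' $\cost_j(m_{\tau_{k'}(S)}(j))$ of the fully traversed blocks. From there the paper works per fixed offset: it bounds all but the last entry fee by a geometric series via Property~2 of \Cref{lem:milestones}, obtaining a worst-case factor $4$ for every $S$, and then shows by a counting argument that the last entry fee exceeds $\epsilon\cdot\cost_j(m_{\ell+1}(j))$ for at most an $\epsilon$-fraction of the $(1/\epsilon)^3$ offsets, which yields $1+6\epsilon$ in expectation. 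You instead aggregate exactly over all offsets using the bijection $(S,k)\mapsto S+kN$, so the total overhead of job $j$ summed over all $S$ is just $\sum_{i=N+1}^{K_j-1}\cost_j(m_i(j))$, which a single application of Property~2 bounds by $(5/\epsilon)\cost_j(m_{K_j}(j))$; dividing by $N=(1/\epsilon)^3$ gives an average overhead of $O(\epsilon^2)c$, which is cleaner (no case distinction, no bad-event bookkeeping) and even sharper than the claimed $6\epsilon c$, at the price of not giving the paper's ``factor $4$ for every offset'' fallback. The only blemishes are corner cases you share with the paper's own write-up: the displayed identity should be an inequality when $K_j\le 1$ (then the ILP3 cost of job $j$ is $0$ while the formula charges $\cost_j(m_{K_j}(j))$, which only helps), and identifying the ILP2 cost of job $j$ with $\cost_j(m_{K_j}(j))$ uses the same convention (the telescoping constant $\cost_j(m_0(j))$ is ignored) as the paper; also the chaining of Property~2 up to index $K_j-1$ implicitly uses $K_j\le f_j$, which holds after the same harmless normalization the paper makes. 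None of these affect correctness.
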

\begin{proof}
	Consider a solution $y$ for ILP2. We define solution $z = y$ for ILP3
	and analyze the expected cost with $S\in\{1, 2, \dotsc, (1/\epsilon)^3 \}$ chosen uniformly at random. Note that feasibility follows immediately, since ILP3 contains
	a subset of the constraints of ILP2.
	Regarding the cost, we will analyze the expected cost for variables of each job $j\in J$ separately.
	Let $\ell$ be maximal with $z_{j,\ell} = 1$ and let $k$ be such that $\ell\in\{\tau_k,\tau_k+1,\dotsc,\tau_{k+1}-1\}$.
	The cost of $(y_{j,i})_i$ in ILP2 is exactly $\cost_j(m_{\ell+1}(j))$.
	On the other hand, the cost of the variables $(z_{j,i})_i$ in ILP3 is
	\begin{equation}\label{eq:cost-ilp3}
		\sum_{k' = 0}^{k} \cost_j(m_{\tau_{k'}(S)}(j)) + \cost_j(m_{\ell+1}(j)) .
	\end{equation}
	Note that by \Cref{lem:milestones} we have that $\cost_j(m_i(j) + 1) > (1 + \epsilon/4)\cost_j(m_{i-1}(j) + 1)$  for all $i > 0$. Furthermore, $\tau_{k}(S) \ge \tau_{k-1}(S) + (1/\epsilon)^3$ for $k>1$.
	Thus, \eqref{eq:cost-ilp3} is at most
	\begin{align*}
		&\sum_{k' = 0}^{k-1} \cost_j(m_{\tau_{k'}(S)}(j)) + \cost_j(m_{\tau_{k}(S)}(j)) + \cost_j(m_{\ell+1}(j)) \\
		&\le \sum_{k' = 0}^{k-1} \cost_j(m_{\tau_{k'}(S)}(j) + 1) + \cost_j(m_{\tau_{k}(S)}(j)) + \cost_j(m_{\ell+1}(j)) \\
		&\le \cost_j(m_{\tau_{k-1}(S)}(j) + 1) \sum_{h=0}^{\infty} \frac{1}{(1 + \epsilon/4)^{h(1/\epsilon)^3}} + \cost_j(m_{\tau_{k}}(j)) + \cost_j(m_{\ell+1}(j)) \\
		&\le 3\cost_j(m_{\tau_{k}(S)}(j)) + \cost_j(m_{\ell+1}(j)) \\
		&\le 4 \cost_j(m_{\ell+1}(j)) .
	\end{align*}
	Here we assume that $\epsilon$ is a sufficiently small constant.
	Note that the bound above holds for any choice of $S$. However, it also looses a factor of $4$.
	We will show that with probability $1 - \epsilon$ the above inequality holds even with
	a factor of $1 + 3\epsilon$.
	More precisely, we will show that
	\begin{equation*}
		\mathbb P[\cost_j(m_{\tau_{k}(S)}(j)) > \epsilon \cdot \cost_j(m_{\ell+1}(j))] \le \epsilon .
	\end{equation*}
	Observe that if $k = 1$ then $\cost_j(m_{\tau_{k}(S)}(j)) = 0 \le \epsilon \cdot \cost_j(m_{\ell+1}(j))$.
	Furthermore, if $\ell\in \{\tau_k + (1/\epsilon)^2, \tau_k(S) + (1/\epsilon)^2 + 1, \dotsc, \tau_{k+1}(S) - 1 \}$ then
	we have 
	\begin{equation*}
		\cost_j(m_{\ell+1}) > (1 + \epsilon)^{(1/\epsilon)^2} \cost_j(m_{\tau_k(S)}) > 1/\epsilon \cdot \cost_j(m_{\tau_k(S)}).
	\end{equation*}
	Hence, consider the event that $k \in \{\ell - 1, \ell - 2, \dotsc, \ell - (1/\epsilon)^2\}$ and $k > 1$.
	This can only coincide with at most $(1/\epsilon)^2$ of the $(1/\epsilon)^3$ possible values of $O$.
	Thus, by uniformly random choice, the probability is at most $\epsilon$.
	It follows that the cost of the variables $(z_{j, i})_i$ in ILP3 are in expectation
	at most
	\begin{multline*}
		\cost_j(m_{\ell+1}(j)) + 3 \epsilon \cdot \cost_j(m_{\ell+1}(j)) + \mathbb P[\cost_j(m_{\tau_{k}(S)}(j)) > \epsilon \cdot \cost_j(m_{\ell+1}(j))] \cdot 3 \cdot \cost_j(m_{\ell+1}(j)) \\
		\le (1 + 6\epsilon) \cost_j(m_{\ell+1}(j)) . \qedhere
	\end{multline*}
\end{proof}
As a final step in transforming the ILPs, we want to avoid having large jumps in the cost within a group of variables
$\{z_{j, i} : i\in\{\tau_k,\tau_k+1,\dotsc,\tau_{k+1}-1\}\}$ for any job $j\in J$.
Formally, we call $i$ a \emph{large jump} for job $j$, if $\cost_j(m_{i+1}(j)) > 1/\epsilon \cdot \cost_j(m_{i}(j))$.
We derive the sequence $(\tau_k^j(S))_{k\in\mathbb N}$ from $(\tau_k(S))_{k\in\mathbb N}$ by inserting each large jump (which is not yet part of the sequence)
at the
position that maintains the increasing order within the sequence.
\begin{lemma}\label{lem:reduction5}
	For any solution of some cost $c$ for ILP3($(\tau_k(S))_{k\in\mathbb N}$) there there is a solution to ILP3($(\tau^j_k(S))_{j\in J, k\in\mathbb N}$)
	with cost at most $(1 + \epsilon)c$.
\end{lemma}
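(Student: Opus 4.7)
The plan is to carry the same variable assignment across: given a solution $z$ of cost $c$ for ILP3$((\tau_k(S))_{k\in\mathbb N})$, set $z' := z$ and use it as a candidate solution for ILP3$((\tau^j_k(S))_{j\in J,k\in\mathbb N})$. I would then separately verify feasibility and bound the cost increase.

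Feasibility is essentially immediate. The packing constraints are identical in the two ILPs, so they are preserved. The prefix constraints in the refined ILP are imposed only on pairs $z_{j,i-1}\ge z_{j,i}$ where both indices lie within the same new (smaller) group $\{\tau^j_k,\dotsc,\tau^j_{k+1}-1\}$; these constitute a subset of the prefix constraints of the old ILP (which were imposed within the coarser groups $\{\tau_k(S),\dotsc,\tau_{k+1}(S)-1\}$). Hence $z$ satisfies them.

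For the cost bound I would compare the two objectives term-by-term for each job $j$. The incremental contributions $(\cost_j(m_{i+1}(j)) - \cost_j(m_i(j))) \cdot z_{j,i}$ appear identically in the two objectives, so the only difference between the two values is an additional full-cost term $\cost_j(m_{i^*}(j)) \cdot z_{j, i^*}$ at each large-jump index $i^*$ that was newly inserted as a breakpoint. For any such $i^*$ with $z_{j, i^*} = 1$, the large-jump property $\cost_j(m_{i^* + 1}(j)) > (1/\epsilon) \cdot \cost_j(m_{i^*}(j))$ gives
\[
\cost_j(m_{i^*}(j)) < \frac{\epsilon}{1-\epsilon}\bigl(\cost_j(m_{i^* + 1}(j)) - \cost_j(m_{i^*}(j))\bigr),
\]
i.e.\ the newly-introduced full cost is at most an $\epsilon/(1-\epsilon)$-fraction of the corresponding incremental cost, which is already paid in the old objective. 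Summing over all such $i^*$ and all jobs, the total additional cost is bounded by $\frac{\epsilon}{1-\epsilon}\cdot c$, giving a new objective value of at most $c/(1-\epsilon) \le (1+\epsilon)c$ after the standard rescaling of $\epsilon$.

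I do not foresee any real obstacle in this argument. The only thing worth emphasising is that the threshold $1/\epsilon$ in the definition of a large jump is precisely tuned to make the newly introduced full-cost term at each inserted breakpoint dominated by the incremental cost already paid for the same index in the original objective; this is what keeps the multiplicative loss at $1+\epsilon$ rather than something worse.
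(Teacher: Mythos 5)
Your proposal is correct and follows essentially the same route as the paper's proof: keep the solution unchanged, note that the refined ILP's prefix constraints are a subset of the original ones (covering constraints being identical), and observe that the only coefficient changes are at newly inserted large-jump breakpoints, where the added term $\cost_j(m_{i^*}(j))$ is dominated (via the $1/\epsilon$ jump condition) by the incremental cost already paid. Your explicit $\epsilon/(1-\epsilon)$ accounting and final rescaling is just a slightly more careful version of the paper's ``factor at most $1+\epsilon$'' statement, which carries the same negligible slack.
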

\begin{proof}
	Any solution for ILP3($(\tau_k(S))_{k\in\mathbb N}$) remains feasible for ILP3($(\tau^j_k(S))_{j\in J, k\in\mathbb N}$).
	Comparing the costs in ILP3($(\tau_k(S))_{k\in\mathbb N}$) and ILP3($(\tau^j_k(S))_{j\in J, k\in\mathbb N}$), we have that
	the coefficient in the objective for variables $z_{j, i}$ where $i$ is a large jump can increase from
	$\cost_j(m_{i+1}(j)) - \cost_j(m_{i}(j))$ to $\cost_j(m_{i+1}(j))$.
	Since it is a large jump, this increase is only by a factor of at most $1 + \epsilon$.
	Hence, also the total cost of the solution increases at most by a factor of $1 + \epsilon$.
\end{proof}

We are now ready to prove the main lemma of the reduction.
\begin{proof}[Proof of \Cref{lem:reduction-to-RCP}]
	Consider an instance of GSP and let $\OPT$ be the value of the optimal solution. From
	\Cref{lem:reduction1} it follows that there is a solution of cost at most $\OPT$ for ILP1.
	Then, by \Cref{lem:reduction1} there is also a solution of cost at most $(1 + \epsilon)\OPT$ for ILP2.
	Because of \Cref{lem:reduction4} and \Cref{lem:reduction5} there is some $S\in\{1,2,\dotsc,(1/\epsilon)^3\}$ such
	that ILP($(\tau^j_k(S))_{j\in J, k\in\mathbb N}$) has a solution of cost at most $(1 + O(\epsilon))\OPT$.
	We guess this value of $S$.
	
	We will now rewrite ILP3 as an instance of the Rectangle Covering Problem (RCP).
	Assume without loss of generality that the jobs are labelled with $J = \{1,2,\dotsc,n\}$ and are
	ordered by release time, that is, $r_1 \le r_2 \le \cdots \le r_n$.
	For each job $j$ and each $k\in\{0,1,\dotsc,f_k\}$ we introduce one row $\sum_{j'=1}^{j-1}(f_j + 1) + k$.
	
	The rectangles of the row corresponding to $j, k$ are defined as follows:
	there is one rectangle $R(j, i) = [m_{i}(j), m_{i+1}(j)) \times [\sum_{j'=1}^{j-1}(f_j + 1) + k, \sum_{j'=1}^{j-1}(f_j + 1) + k+1)$ for every $i\in\{\tau^j_k(S), \tau^j_k(S)+1,\dotsc,\tau^j_{k+1}(S)-1\}$.
	The intuition is that solutions to ILP3 correspond to solutions for RCP, where a rectangle
	$R(j, i)$ is selected if and only if $z_{j, i} = 1$.
	
	The cost of the rectangle is $c(R(j, i)) = \cost_j(m_{i+1}(j))$ if $i = \tau^j_k(S)$ and $c(R(j, i)) = \cost_j(m_{i+1}(j)) - \cost_j(m_{i}(j))$
	otherwise. Note that this is precisely the coefficient of the variable $z_{j,i}$ in the objective of ILP3.
	The rays mimic the covering constraint of ILP3:
	We set $p(R(j, i)) = p_j$.
	Consider some $s, t\in [T]$ with $s < t$.
	Let $j$ be minimal with $s \le r_j$.
	Then we introduce one ray $L(\sum_{j'=1}^{j-1}(f_j + 1), t) = [t] \times [\sum_{j'=1}^{j-1}(f_j + 1), \infty)$ with demand $d(\sum_{j'=1}^{j-1}(f_j + 1), t) = \sum_{j : s\le r_j < t} p_j - (t - s)$, i.e., the right-hand side of the covering constraint of ILP3.
	The rectangles that intersect $L(c, t)$ are exactly those $R(j, i)$ for which $s \le r_j$ and $t \in [m_i(j),m_{i+1}(j))$.
	Note that this corresponds to those $z_{j,i}$ appearing on the left-hand side of the covering constraint.
	Thus, with the transformation between RCP and ILP3 as noted above, the covering constraint of ILP3 is satisfied if and only if the demands of the rays are satisfied.
	Since rows correspond to the set of rectangles $\{R(j, i) : i\in\{\tau_k,\dotsc,\tau_{k+1}-1\}\}$ for some $j$ and $k$ and RCP enforces prefixes of rows to be selected, this again corresponds one-to-one to the prefix constraint of ILP3.
	
	Our definition of RCP requires that all costs are strictly positive, which we can easily establish by
	the following preprocessing. Consider a rectangle of cost zero. If it is the first rectangle of a row, without loss of generality
	any optimal solution contains it, so we remove it from the instance and decrease the demand of
	all rays that intersect it by its value. If the rectangle is not the first in its row, without
	loss of generality any optimal solution, which selects the preceeding rectangle, will also select 
	this rectangle. Hence, we can merge these two rectangles into one with the cost of the first rectangle.

	In any row $W$ we have $|W| \le 2 (1/\epsilon)^3$ rectangles: this is because the difference
	of any two consecutive $\tau_k(S), \tau_{k+1}(S)$ is by definition at most $2 (1/\epsilon)^3$ and
	therefore this also holds for any two consecutive $\tau^j_k(S), \tau^j_{k+1}(S)$.
	Since there are no large jumps
	within a row, each rectangle must have cost at most $(1 / \epsilon)^{|W|} \cdot c(R_1)$,
	where $R_1$ is the first rectangle of $W$. In order to bound the ratio of costs between
	any pair of rectangles of one row, we modify the cost of any rectangle $R\in W$
	to $c'(R) = c(R) + \epsilon / |W| \cdot c(R_1)$.
	Since any solution that selects $R$ must also select $R_1$, this increases (after applying it to
	all rows) the total cost by only a factor of $1 + \epsilon$. The resulting instance
	satisfies that the ratio between the cost of the entire row to the cost of any rectangles of the same row is at most
	\begin{equation*}
		K \le |W| \frac{(1 / \epsilon)^{|W|} \cdot c(R_1) + \epsilon / |W| \cdot c(R_1)}{\epsilon / |W| \cdot c(R_1)} \le (1/\epsilon)^{O(1/\epsilon^3)} .
	\end{equation*}
	
	To summarize, we obtained an instance of RCP, which has a solution of cost at most $(1 + O(\epsilon))\OPT$ (assuming the correct guess of $S$). It is easy to see that the size of this instance is $\mathrm{poly}(n,p_{\max})$.
	Using the given algorithm for RCP, we approximate the instance obtain a solution of cost $\alpha(1 + O(\epsilon))\OPT$. We then apply \Cref{lem:reduction3} to obtain a solution of
	the same cost for ILP2, \Cref{lem:reduction2} to obtain a solution of the same cost for ILP1, and
	finally \Cref{lem:reduction1} to obtain the solution of the same cost for GSP.
	One can scale $\epsilon$ by a constant to achieve the claimed rate of $\alpha(1 + \epsilon)$.
\end{proof}

\section{Algorithm for RCP}\label{sec:RCPalgorithm}

Suppose that we are given an instance of RCP defined by a set of rectangles
$\R$ and a set of rays $\L$. Let $n$ be the total size of the
input and define $p_{\max}:=\max_{R}p(R)$. First, we argue that it suffices
to construct an algorithm in which the number of different rectangle costs appear in the
exponent of the running time, since we can round those costs to only $O(\log n)$ pairwise different values.

\begin{lem}\label{lem:preprocessing}
Assume that for any $M\ge 1$ there is an $\alpha$-approximation algorithm
with a running time of $2^{(M\cdot K\log n\log p_{\max} \log \max_R\ri(R))^{O(1)}}$ for instances
of RCP with at most $M$ different rectangle costs. Then there
is an $(1+\epsilon)\alpha$-approximation algorithm for arbitrary
instances of RCP with a running time of $2^{(1/\epsilon\cdot K\log n\log p_{\max})^{O(1)}}$.
\end{lem}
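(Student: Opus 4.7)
The plan is a standard cost-discretization reduction, preceded by a guess of the optimum. First I would enumerate a guess $\widetilde{\OPT}$ satisfying $\OPT \le \widetilde{\OPT} \le (1+\epsilon)\OPT$; since $\OPT$ lies between $\min_R c(R)$ and $\sum_R c(R)$, only $O((1/\epsilon)\log(\sum_R c(R)/\min_R c(R)))$ candidate values need to be tried (one of which is correct), each producing one run of the assumed algorithm on a rounded instance, and I return the best solution found. This enumeration multiplies the running time only by a polynomial factor and is absorbed into the claimed exponential.

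For the correct guess $\widetilde{\OPT}$, I truncate each row $W$ at the first rectangle whose cumulative prefix cost exceeds $\widetilde{\OPT}$: because RCP forces the selected rectangles in a row to form a prefix, no solution of cost at most $\widetilde{\OPT}$ can include such a rectangle, so deleting them preserves the optimum. I then fix a threshold $\delta := \epsilon\widetilde{\OPT}/|\R|$, round up every cost strictly smaller than $\delta$ to exactly $\delta$, and round every remaining cost up to the next integer power of $1+\epsilon$. The resulting costs take values in $\{\delta,(1+\epsilon)\delta,(1+\epsilon)^2\delta,\ldots,\widetilde{\OPT}\}$, giving $M = O((1/\epsilon)\log(|\R|/\epsilon)) = \mathrm{poly}(1/\epsilon,\log n)$ pairwise distinct costs (using that $|\R|$ is polynomial in $n$). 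Combined with $\log\max_R \ri(R) = O(\log n + \log p_{\max})$, the running time of the hypothesised algorithm becomes $2^{(1/\epsilon\cdot K\log n\log p_{\max})^{O(1)}}$ as desired.

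Rounding only raises costs, hence preserves feasibility and changes the within-row ratio $K$ by a factor of at most $1+\epsilon$, so the precondition of the hypothesised algorithm still applies with essentially the same $K$. On the objective side, rounding the small values up to $\delta$ contributes at most $|\R|\cdot\delta = \epsilon\widetilde{\OPT}$ additively to any solution, while the geometric rounding multiplies each individual cost by at most $1+\epsilon$. Therefore an $\alpha$-approximate solution of the rounded instance has original cost at most $\alpha(1+\epsilon)(\widetilde{\OPT} + \epsilon\widetilde{\OPT}) \le \alpha(1+O(\epsilon))\OPT$; rescaling $\epsilon$ by a constant yields the claimed $\alpha(1+\epsilon)$ ratio.

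The main obstacle is the small-cost rounding step, since RCP requires strictly positive costs and simply zeroing out cheap rectangles is not allowed. The fix is that all rectangles with cost below $\delta$ are rounded to the \emph{same} value $\delta$, so they collapse into a single cost class and the total rounding error is controlled globally by $|\R|\cdot\delta = \epsilon\widetilde{\OPT}$ rather than per selected rectangle; a secondary subtlety is that the row-truncation step must not invalidate the unknown optimal solution, but the prefix-cost argument rules out any optimal solution that uses a truncated rectangle.
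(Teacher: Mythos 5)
Your cost-discretization argument is essentially sound and close in spirit to the paper's own proof (the paper guesses the most expensive rectangle of the optimal solution instead of $\widetilde{\OPT}$, and handles cheap rows by selecting them outright and reducing the ray demands rather than rounding their costs up to a common floor; both variants work). However, there is a genuine gap in your running-time derivation: you assert $\log\max_R\ri(R)=O(\log n+\log p_{\max})$, and for an arbitrary instance of RCP this is simply false. Here $n$ is the number of input \emph{bits}, and the rectangle coordinates are arbitrary integers given in binary, so $\max_R\ri(R)$ can be of order $2^{\Theta(n)}$ even when $p_{\max}$ is a small constant. Since the hypothesised algorithm runs in time $2^{(M\cdot K\log n\log p_{\max}\log\max_R\ri(R))^{O(1)}}$, plugging in $\log\max_R\ri(R)=\Theta(n)$ yields a bound exponential in a polynomial of $n$, not the claimed $2^{(1/\epsilon\cdot K\log n\log p_{\max})^{O(1)}}$; note that the target bound contains no dependence on $\max_R\ri(R)$ at all, so this cannot be absorbed.

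The missing ingredient is a coordinate-compression preprocessing step, which is exactly how the paper begins its proof: repeatedly take a maximal interval $I=[a,b)$ that contains no $t$-coordinate of a ray and no value $\lef(R)$ or $\ri(R)$ of any rectangle (while some rectangle ends at or after $b$), and shift every ray and rectangle lying to the right of $I$ left by $b-a$. This changes neither which rectangles intersect which rays nor the prefix structure of the rows, and after at most $|\R|+|\L|$ iterations every integer up to $\max_R\ri(R)$ carries a ray or a rectangle boundary, so $\max_R\ri(R)=O(|\R|+|\L|)=O(n)$ and hence $\log\max_R\ri(R)=O(\log n)$. With this step inserted (before or after your guessing of $\widetilde{\OPT}$), your reduction goes through. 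A minor additional inaccuracy: rounding all costs below $\delta$ up to $\delta$ can roughly double the within-row ratio $K$ rather than increase it by only a factor $1+\epsilon$ as you claim, but since $K$ enters the exponent only polynomially this is harmless.
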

\begin{proof}
	First we argue that we may assume $T=O(n)$.
	Suppose there exists an interval $I=[a,b)$ with $a,b\in \N_{0}$ such that 
	\begin{itemize}
		\item there is no ray $L(s,t)\in \L$ with $t\in I$,
		\item there is no rectangle $R\in \R$ with $\lef(R)\in I$ or $\ri(R)\in I$ and
		\item there exists a rectangle $\R\in R$ with $\ri(R)\geq b$.
	\end{itemize}
	Intuitively, we just delete the vertical strip $I\times \RR$. Formally, do the following: We decrease each of the following quantities simultaneously by $b-a$:
	\begin{itemize}
		\item  $t$ for every ray $L(s,t)\in \L$ where $t\geq b$,
		\item $\lef(R)$ for every rectangle $R\in \R$ with $\lef(R)\geq b$, and
		\item $\ri(R)$ for every rectangle $R\in \R$ with $\lef(R)\geq b$.
	\end{itemize} 
	This operation does not change whether a set of rectangles is a prefix of a row nor does it change whether a certain rectangle and a certain ray intersect, so the feasible and optimal solutions do not change.
	We do this operation repeatedly for an interval $I$ of maximal length. 
	This can only happen $|\R|+|\L|$ times. 
	And after that, we have that for any $t$ with $0\leq t \leq \max_{R\in \R}\ri(R)$, there exists a ray $L(s,t)\in \L$, or there is rectangle $R\in \R$ with $\lef(R)=t$ or there exists a rectangle $\R\in R$ with $\ri(R)=t$. So $\max_{R\in \R}\ri(R)=O(|\R|+|\L|)=O(n)$ and thus also $T=O(n)$.
	
	Now we guess the rectangle $R_{\max}$ with maximal cost in the optimal solution $\OPT$.
	We discard all rectangles $R'$ with cost larger than $c(R_{\max})$ and all rectangles that are in the same row as such an $R'$ on the right of it.
	Let $\R_{\text{disc}}$ denote these rectangles.
	Note that the optimal solution cannot select any rectangle from $\R_{\text{disc}}$, as it has to select a prefix of each row.
	Next, let $\APX_\text{cheap}$ denote all rectangles in rows in which no rectangle has a cost of more that $\varepsilon c(R_{\max})/|\R|$.
	We select all rectangles in  $\APX_\text{cheap}$.
	Note that $c(\APX_{\text{cheap}})\leq |\APX_{\text{cheap}}|\cdot \varepsilon c(R_{\max})/|\R| \leq \varepsilon c(\OPT)$.
	
	Let $\R'=\R \setminus (\R_{\text{disc}} \cup \APX_\text{cheap})$.
	For $R\in \R'$, let $\tilde{c}(R)$ be the smallest power of $1+\varepsilon$ larger than $c(R)$ and let $c'(R):=\lceil \frac{|\R|\cdot K}{\varepsilon^2 c(R_{\max})}\tilde{c}(R)\rceil$.
	We call the given algorithm with the instance $I'$ consisting of the rectangles $\R'$ with cost $c'$ and the original demand rays $\mathcal{L}$, but a ray $L(s,t)\in \mathcal{L}$ has demand $d'(s,t)=\max\{0, d(s,t)-\sum_{R\in \APX_{\text{cheap}}:R\cap L(s,t)\neq \emptyset}p(R)\}$.
	Let $\APX$ be the obtained solution. Then we output $\APX \cup \APX_{\text{cheap}}$.
	
	Clearly, $\APX \cup \APX_{\text{cheap}}$ is feasible.
	Next, we analyze the cost.
	The cost of two rectangles in the same row can differ at most by a factor of $K$.
	So for $R\in R'$, we have $c(R)\geq \frac{\varepsilon c(R_{\max})}{K\cdot |\R|}$. Thus $c(R)\leq \frac{\varepsilon^2 c(R_{\max})}{ |\R|\cdot K}c'(R)$ and $c'(R)\leq \frac{ |\R|\cdot K}{\varepsilon^2 c(R_{\max})} \tilde{c}(R)+1\leq  \frac{ |\R|\cdot K}{\varepsilon^2 c(R_{\max})} (\tilde{c}(R)+\varepsilon c(R))\leq  \frac{ |\R|\cdot K}{\varepsilon^2 c(R_{\max})} (1+2\varepsilon)c(R)$.
	As $\OPT\setminus \APX_\text{cheap}$ is a feasible solution for $I'$, we have $c'(\APX)\leq \alpha c'(\OPT\setminus \APX_\text{cheap})\leq \alpha \cdot  c'(\OPT)$. Thus
	$c(\APX \cup \APX_{\text{cheap}})\leq c(\APX)+ c(\APX_{\text{cheap}})\leq (1+2\varepsilon)\alpha \cdot c(\OPT)\leq (1+3\varepsilon)\alpha \cdot c(\OPT)$.
	Rescaling $\varepsilon$ by a factor of $3$ yields the desired $(1+\varepsilon)\alpha$-approximation algorithm.
	
	Finally, we analyze the running time. There are at most $|\R|$ options for $R_{\max}$. In $I'$, let $R, R'$ be two rectangles in the same row with $c'(R)\leq c'(R')$.
	Then as they are in the same row, we have $c(R)\geq 1/K\cdot c(R')$ and therefore $c'(R)\geq 1/((1+2\varepsilon)K)\cdot c'(R')\geq 1/(2K)\cdot c'(R')$.
	Thus the cost of two rectangles in a row differs by at most $2K$ in the instance $I'$.
	As for each $R\in \R'$ we have $c(R_{\max})\geq c(R)\geq \frac{\varepsilon c(R_{\max})}{K\cdot |\R|}$, there are at most $O(\log(K\cdot |\R|/\varepsilon)/\varepsilon)$ different values for $\tilde{c}(R)$ and thus also at most the same number of values for $c'(R)$.
	This yields the desired bound on the running time.
\end{proof}
Assume that we are given an instance of RCP with at most $M$ different rectangle costs for some value $M\ge 1$ and assume
w.l.o.g. that $\min_{R\in\R}\lef(R)=0$. Our algorithm is based
on a recursion in which the input of each recursive call consists
of
\begin{itemize}
\item an instance of RCP defined by a set of rays $\L'$ and a set of rectangles
$\R'$; let $\rows'$ be the partition of $\R'$ into rows,
\item an area $A$ of the form $A=[\lef(A),\ri(A))\times[0,\infty)$ for
two values $\lef(A),\ri(A)\in\N_{0}$ such that $\ri(A)-\lef(A)$
is a power of 2,
\item for each row $W\in\rows'$ there is a rectangle $R\in W$ with $W\cap A\neq\emptyset$,
\item there are at most $(K\cdot M\cdot \log ( p_{\max}\cdot \max_{R}\ri(R))/\varepsilon)^{O(1)}$ rays $L\in\L'$ outside $A$,
	i.e., such that $L\cap A=\emptyset$.
\end{itemize}
Note that there may be rectangles $R\in\R'$ with $R\not\subseteq A$.
The recursive call returns a solution to the RCP instance defined
by $\R'$ and $\L'$. At the end of our algorithm, we output the solution
returned by the (main) recursive call in which $\R':=\R$, $\L':=\L$,
$\lef(A)=0$, and $\ri(A):=T$ where $T$ is the smallest power of 2 that is at
least $\max_{R\in\R}\ri(R)$. Any solution to this subproblem is a
solution to our given instance.

Assume we are given a recursive call as defined above. The base case
of our recursion arises when $\ri(A)-\lef(A)=1$. We can solve such
instances by a simple dynamic program in quasi-polynomial time, using
that there are only $(K\cdot M\cdot \log ( p_{\max}\cdot T)/\varepsilon)^{O(1)}$
rays $L\in\L'$ outside
$A$, i.e., in $\mathcal{L}'_{\mathrm{out}}:=\{L\in\L':L\cap A=\emptyset\}$.
\begin{lem}
\label{lem:BaseCase} If $\ri(A)-\lef(A)=1$, we can solve the subproblem
defined by the recursive call exactly in time $(n\cdot p_{\max})^{O(|\mathcal{L}'_{\mathrm{out}}|)}$.
\end{lem}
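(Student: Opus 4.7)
My plan is a direct dynamic program that is made tractable by the fact that, with $\ri(A)-\lef(A)=1$, the area $A$ collapses to the single column $[\lef(A),\lef(A)+1)\times[0,\infty)$, so every ray in $\L'\setminus\mathcal{L}'_{\mathrm{out}}$ is of the form $L(s,\lef(A))$ for some $s$. The key structural observation I would use is that since the rectangles of any row $W\in\rows'$ are pairwise disjoint and consecutive on the $x$-axis, selecting the first $k$ of them covers exactly one interval $[\lef(R^W_1),\ri(R^W_k))$. Consequently, for every ray $L=L(s,t)\in\L'$ with $j_W\le s$ there is a single threshold $k^{\ast}_W(L)\in\{1,\dots,|W|,\infty\}$ such that the prefix of length $k$ intersects $L$ iff $k\ge k^{\ast}_W(L)$; the contribution of $W$ to $L$ is then either $0$ or $p(W)$.

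I would then process the rows in non-decreasing order of $j_W$, using a DP whose state after processing the first $i$ rows consists of the row index $i$, an accumulated contribution $c_L\in\{0,\dots,d(L)\}$ for each outside ray $L\in\mathcal{L}'_{\mathrm{out}}$, and a single running value $c_A$ tracking how much of the column $\lef(A)$ has been supplied so far (shared by all rays in $\L'\setminus\mathcal{L}'_{\mathrm{out}}$, capped at $\max_{L\in\L'\setminus\mathcal{L}'_{\mathrm{out}}}d(L)$). For each state I would store the minimum cost to reach it. A transition corresponds to choosing a prefix length $k\in\{0,\dots,|W_{i+1}|\}$ for the next row, paying $\sum_{j=1}^{k}c(R^{W_{i+1}}_j)$, and incrementing the relevant $c_L$ by $p(W_{i+1})$ (resp.\ $c_A$) whenever $j_{W_{i+1}}\le s$ and $k\ge k^{\ast}_{W_{i+1}}(L)$. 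Whenever the row we just processed is the last one with $j_W\le s$ for some ray $L(s,t)\in\L'$, we verify that the corresponding accumulator meets $d(L)$ and otherwise discard the state; the optimum is then the minimum cost over all feasible terminal states, and the selected prefixes are recovered by standard back-tracking.

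Since every demand is bounded by $\sum_{R\in\R'}p(R)\le n\cdot p_{\max}$, each capped accumulator ranges over $O(n\cdot p_{\max})$ values, so the total number of states is at most $O(n)\cdot(n\cdot p_{\max})^{|\mathcal{L}'_{\mathrm{out}}|+1}=(n\cdot p_{\max})^{O(|\mathcal{L}'_{\mathrm{out}}|)}$. Each state admits at most $|W_{i+1}|+1\le n+1$ transitions, giving overall running time $(n\cdot p_{\max})^{O(|\mathcal{L}'_{\mathrm{out}}|)}$. The only point that might at first look delicate is the treatment of the potentially many rays in $\L'\setminus\mathcal{L}'_{\mathrm{out}}$, but since they all share the column $\lef(A)$ a single running value $c_A$ plus a threshold check whenever we cross the value $s=j_W$ captures them exactly, which is the main observation that keeps the state space polynomial.
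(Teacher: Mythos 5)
Your proposal is correct and is essentially the paper's own proof: both use a dynamic program over the rows whose state records a (capped) covered amount for each ray in $\mathcal{L}'_{\mathrm{out}}$ plus one shared accumulator for the single column of $A$, justified by the same key observation that all rays intersecting $A$ meet exactly the same rectangles of the rows below their starting height, giving $(n\cdot p_{\max})^{O(|\mathcal{L}'_{\mathrm{out}}|)}$ states. The only difference is cosmetic: the paper formulates the DP top-down with an exact target value $p(A)$ for the column, while you run it bottom-up with accumulators and check each ray when its last relevant row has been processed.
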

\begin{proof}
	We solve this case via a dynamic program.  
	There is a cell for every combination of a row $W\in \rows$, a covered demand $p(s,t)\in \N_0$ with $p(s,t)\leq d(s,t)$ for each $L(s,t)\in \L'_{\operatorname{out}}$ and a covered demand $p(A)\in \N_0$ with $p(A)\leq \max\{d(s,t):L(s,t)\in \mathcal{L}'_{\operatorname{out}}\}$. 
	The subproblem corresponding to a cell $(W, (p(s,t))_{L(s,t)\in \mathcal{L}'_{\operatorname{out}}}, p(A))$ is to compute a subset $S$ of rectangles in rows $W$ and below of minimal cost with the following properties or decide that such a set does not exist:
	\begin{itemize}
		\item The set $S$ contains a prefix from each row.
		\item The total demand covered on each ray $L(s,t)\in \mathcal{L}'_{\operatorname{out}}$ at least $p(s,t)$, i.e. $p(\{R\in S:R\cap L(s,t)\neq \emptyset\})\geq p(s,t)$.
		\item the total value of rectangles intersecting with $A$ is exactly $p(A)$, i.e. $p(\{R\in S:R\cap A\neq \emptyset\})= p(A)$
		\item all rays $L(s,t)\in \mathcal{L}'$ that begin in or below $W$ are covered by $S$.
	\end{itemize}
	Intuitively, we use $p(A)$ to remember the amount covered on any ray that starts above $W$ and intersects $A$. As they all intersect with the same rectangles from rows $W$ and below, it suffices to remember this number.
	The dynamic program can be solved as follows: First of all, the subproblem is infeasible if there exists a demand ray $L(s,t)\in \mathcal{L}'\setminus\L'_{\operatorname{out}}$ with $s=\lef(A)$, that starts in or below row $W$ and fulfills $d(s,t)> p(A)$.
	For a subproblem in the bottom row $W$, we can iterate through all possible prefixes of $W$, check the demand constraints for each prefix and select the minimal cost prefix if there are multiple ones or declare the subproblem infeasible if no feasible prefix exists.
	This yields the optimal solution for such a DP-cell.
	
	So suppose that $W$ is not the bottom row. Then we iterate through all possible prefixes $S_W\subseteq W$ of rectangles in row $W$. Let $W'$ be the row directly below $W$, i.e., such that there is no row between $W$ and $W'$.
	For each prefix, the resulting subproblem corresponds the DP-cell $(W', (p'(s,t))_{L(s,t)\in \mathcal{L}'_{\operatorname{out}}}, p'(A))$ where  $p'(s,t)=\max\{0,p(s,t)-p(R)\}$ if there exists a rectangle $R\in S_W$ intersecting with $L(s,t)$ and $p'(s,t)=p(s,t)$ otherwise,
	and $p'(A)=\max\{0,p(A)-p(R)\}$ if there exists a rectangle $R\in S_W$ intersecting with $A$ and $p'(A)=p(A)$ otherwise.
	Let $S'$ be the solution for the DP-cell $(W', (p'(s,t))_{L(s,t)\in \mathcal{L}'_{\operatorname{out}}}, p'(A))$.
	For each prefix $S_W\subseteq W$, we check whether $S_W\cup S'$ is feasible for the current subproblem and select the minimal cost solution among all feasible ones.
	
	Now we show that the algorithm solves the problem optimally. Feasibility follows directly from construction. To show optimality we use induction. We already argued that the base case is solved optimally. Let $S$ be the optimal solution for a cell which is not a base case and let $S_W=S\cap W$. Then $S\setminus S_W$ is also a feasible solution for the subproblem. Let $S'$ be the solution for the subproblem given by the dynamic program. Then $c(S')\leq c(S \setminus S_W)$. So $c(S_W\cup S')\leq c(S)$ and it remains to show that $S_W\cup S'$ is feasible. It certainly contains only prefixes, covers a demand $p(s,t)$ on each ray $L(s,t)\in \L'_{\operatorname{out}}$, covers all rays that start in or below $W'$ and the value of rectangles intersecting $A$ is also $p(A)$. And for a ray $L(s,t)\in \L'\setminus \L_{\operatorname{out}}'$ that start in or below $W$, but not in or below $W'$, the following happens. The rectangles intersecting $L(s,t)$ (except possible the ones in $W$) are exactly the rectangles in rows in or below $W'$ that intersecting with $A$. Hence, $L(s,t)$ is covered to the same extend by $S'$ and $S\setminus S_W$ and thus also to the same extend by $S'\cup S_W$ and $S$. Thus, $L(s,t)$ is also covered. This shows that the algorithm computes the optimal solution to every subproblem
	
	The output is the minimum cost solution among all values of $p(A)$ of the cell $(W, (p(s,t))_{L(s,t)\in \mathcal{L}'}, p(A))$ where $W$ is the highest row and $p(s,t)=d(s,t)$ for each $L(s,t)\in \L'_{\operatorname{out}}$, which is the optimal solution to the given RCP instance.
	The running time can be bounded as follows. 
	There are at most $n$ values for $W$ and $n\cdot p_{\max}$ options for each $p$, this yields a total number of DP-states of $(n\cdot p_{\max})^{O(|\L'_{\operatorname{out}}|)}$. One DP-state for a row $W$ can be solved in time $O(|W|)=O(n)$, yielding the desired running time.
\end{proof}

Assume now that $\ri(A)-\lef(A)>1$. We classify the rows $\rows'$
into four different types which we will handle differently in the
following, see also Figure~\ref{fig:RowTypes}.
\begin{defn}
A row $W\in\W'$ is
\begin{itemize}
\item \emph{centered} if it is contained in the interior of $A$, i.e., for
each rectangle $R\in W$ we have that $\lef(A)<\lef(R)$ and $\ri(R)<\ri(A)$,
\item \emph{right-sticking-in} if each rectangle $R\in W$ satisfies $\lef(A)<\lef(R)$
and there exists a rectangle $R'\in W$ with $\ri(A)\le\ri(R')$,
\item \emph{left-sticking-in} if each rectangle $R\in W$ satisfies $\ri(R)<\ri(A)$
and there exists a rectangle $R'\in W$ with $\lef(R')\le\lef(A)$,
\item \emph{spanning }if there exists a rectangle $R\in W$ with $\lef(R)\le\lef(A)$
and a rectangle $R'\in W$ with $\ri(A)\le\ri(R')$.
\end{itemize}
\end{defn}

We denote by $\W_{\ce},\W_{\ri},\W_{\lef},$ and $\W_{\spn}$ the
set of centered, right-sticking-in, left-sticking-in, and spanning
rows, and by $\R_{\ce},\R_{\ri},\R_{\lef},$ and $\R_{\spn}$ the
union of their rectangles, respectively. Intuitively, we will compute
a solution to the given subproblem in which we lose a factor of $2+\epsilon$
on the cost of the rectangles in centered rows (compared to their
cost in the optimal solution) and a factor of $1+\epsilon$ on the
cost of rectangles in right-sticking-in and spanning rows.
This has some similarities to the treatment of tasks according to their time windows in \cite{armbruster2025approximability}.
For the rectangles in left-sticking-in rows, this is more complicated and we will lose a factor of $1+\epsilon$
on the cost of rectangles $R$ with $\lef(R)\le\lef(A)$ and a factor
of $2+\epsilon$ on the cost of rectangles $R'$ with $\lef(R')>\lef(A)$.

We define a reference solution $S\subseteq\R'$ which is a solution
to our given subproblem that optimizes an auxiliary cost function
$c_{\APX}$, motivated by the factors we lose for the different rectangle types, defined as
\begin{align*}
	c_{\APX}(S'):=\:&2\cdot c(S'\cap\R_{\ce})+c(S'\cap\R_{\ri})+c(S'\cap\R_{\spn})+c(R\in S'\cap\R_{\lef}:\lef(R)\le\lef(A))\\
	&+2\cdot c(R\in S'\cap\R_{\lef}:\lef(R)>\lef(A))
\end{align*}
 for each $S'\subseteq\R'$. Our algorithm will involve guessing certain
properties. Intuitively, we will argue later that we compute a solution
with small overall cost if we guess these properties corresponding
to $S$.

\subsection{Definition of structured near-optimal solution}

Based on $S$ we define a more structured near-optimal solution $S^{+}$
with $S\subseteq S^{+}$ and $c(S^{+})\le(1+O(\epsilon/\log T))c(S)$.
First, we consider all left-sticking-in rows. Let $\mi(A):=(\ri(A)-\lef(A))/2$.
We consider each pair $(c',p')\in\N_{0}^{2}$ such that there is a
row $W\in\rows_{\lef}$ for which
\begin{itemize}
\item its leftmost rectangle $R\in W$ has cost $c(R)=c'$,
\item for each rectangle $R'\in W$ its value $p(R')$ satisfies $p(R')\in[(1+\epsilon)^{p'},(1+\epsilon)^{p'+1})$,
and
\item there is a rectangle $R''\in W$ for which $\lef(R'')<\mi(A)\le\ri(R'')$.
\end{itemize}
For each such pair $(c',p')$ let $\W_{\lef}^{c',p'}$ denote the
set of all rows with the above properties and let $n_{\lef}^{c',p'}$
denote the number of rows in $\W_{\lef}^{c',p'}$ from which $S$
selects \emph{at least one} rectangle intersecting $A$. Also, let $\R_{\lef}^{c',p'}:=\bigcup_{W \in \W_{\lef}^{c',p'}}W$ denote the set of all rectangles in these rows.
Note that
the total cost of these rectangles in $S$ is at least $c'\cdot n_{\lef}^{c',p'}$ . In $S^{+}$ we want to select
additional rectangles from certain rows in $\W_{\lef}^{c',p'}$. Due to this,
we will oversatisfy the demands of some rays in $\L'$. This will
create some slack which we will exploit later. If $n_{\lef}^{c',p'}<\frac{2K\cdot \log T}{\epsilon}$ there are only few rows from which $S$ selects a rectangle and we will simply guess these rows and rectangles later. Suppose that $n_{\lef}^{c',p'}\geq \frac{2K\cdot \log T}{\epsilon}$.  Let $\W_{\lef,\add}^{c',p'}$
denote the bottom-most $\lfloor \frac{\epsilon}{2K\cdot \log T}\cdot n_{\lef}^{c',p'} \rfloor$
rows in $\W_{\lef}^{c',p'}$ from which $S$ does not contain any
rectangle intersecting $A$;
in case there are less than $\lfloor \frac{\epsilon}{2K\cdot \log T}\cdot n_{\lef}^{c',p'} \rfloor$ such rows then we define that
$\W_{\lef,\add}^{c',p'}$ is the set of all these rows.
We will add to our solution $S^{+}$
\emph{all} rectangles in \emph{all} rows in $\W_{\lef,\add}^{c',p'}$; we denote them by $\R_{\lef,\add}^{c',p'}$. Let
$W_{\lef,\filled}^{c',p'}$ denote the top-most row such that for each row $W \in \W_{\lef}^{c',p'}$ below it, we have that $S \cup \R_{\lef,\add}^{c',p'}$ contains a rectangle in row $W$ intersecting $A$.
In particular, 
for each row in $\W_{\lef}^{c',p'}$ underneath
$W_{\lef,\filled}^{c',p'}$ the solution $S^{+}$ selects the rectangle
intersecting the line $\{\lef(A)\}\times\RR$. This will make it easy
to guess those rectangles since it suffices to guess $W_{\lef,\filled}^{c',p'}$.

In a similar fashion we select additional rectangles for $S^{+}$
from right-sticking-in and spanning rows. The procedure is identical
for both cases, we describe it only for right-sticking-in rows. We
define $\W_{\ri}^{c',p'}$, $n_{\ri}^{c',p'}$, $\R_{\ri}^{c',p'}$, and the set of considered
pairs $(c',p')\in\N_{0}^{2}$ in exactly the same way as $\W_{\lef}^{c',p'}$, $n_{\lef}^{c',p'}$ and  $\R_{\lef}^{c',p'}$ above. We define $\W_{\ri,\add}^{c',p'}$
as the bottom-most $\lfloor \frac{\epsilon}{2K \cdot \log T}\cdot n_{\ri}^{c',p'}\rfloor$
rows in $\W_{\ri}^{c',p'}$ from which $S$ does not contain any rectangle
$R$ with $\lef(R)<\mi(A)\le\ri(R)$,
and we define $\R_{\ri,\add}^{c',p'}$ to be their rectangles.

Overall, we define $S^{+}:=S\cup\bigcup_{c',p'}\R_{\lef,\add}^{c',p'}\cup\R_{\ri,\add}^{c',p'}\cup\R_{\spn,\add}^{c',p'}$.
\begin{lem}\label{lem:propertiesS+}
The set $S^{+}$ has the following properties:
\begin{itemize}
\item $c(S^{+})\le(1+\epsilon/\log T)c(S)$
\item for each pair $(c',p')$ with $n_{\lef}^{c',p'}\geq \frac{2K\cdot \log T}{\epsilon}$ the set $S^{+}$
contains the rectangle intersecting the line $\{\lef(A)\}\times\RR$
from the row $W_{\lef,\filled}^{c',p'}$ and from each row in $\W_{\lef}^{c',p'}$
underneath $W_{\lef,\filled}^{c',p'}$,
\item for each pair $(c',p')$ with $n_{\ri}^{c',p'}\geq \frac{2K\cdot \log T}{\epsilon}$ the set $S^{+}$
contains the rectangle intersecting the line $\{\mi(A)\}\times\RR$
from the row $W_{\ri,\filled}^{c',p'}$ and from each row in $\W_{\ri}^{c',p'}$
underneath $W_{\ri,\filled}^{c',p'}$,
\item for each pair $(c',p')$ with $n_{\spn}^{c',p'}\geq \frac{2K\cdot \log T}{\epsilon}$ the set $S^{+}$
contains the rectangle intersecting the line $\{\mi(A)\}\times\RR$
from the row $W_{\spn,\filled}^{c',p'}$ and from each row in $\W_{\spn}^{c',p'}$
underneath $W_{\spn,\filled}^{c',p'}$,
\item for each $\mathrm{set}\in\{\lef,\ri,\spn\}$ and for each pair $(c',p')\in\N_{0}^{2}$
for which $n_{\mathrm{set}}^{c',p'}\geq \frac{2K\cdot \log T}{\epsilon}$ the following holds:
if a ray $L(s,t)\in\L'$
intersects with a rectangle from a row $W \in \rows_{\mathrm{set}}^{c',p'}$ above $W_{\mathrm{set},\filled}^{c',p'}$ and additionally
\begin{itemize}
\item $\lef(A)\leq t<\mi(A)$ if $\mathrm{set}=\lef$ and
\item $\mi(A)\leq t<\ri(A)$ if $\mathrm{set}\in\{\ri,\spn\}$
\end{itemize}
we have that
$$
	p(\{R\in S^{+}\cap\R_{\mathrm{set}}^{c',p'}:R\cap L(s,t)\neq \emptyset\}) \ge p(\{R\in S\cap\R_{\mathrm{set}}^{c',p'}: R\cap  L(s,t)\neq \emptyset\})
	+(1+\varepsilon)^{p'}\lfloor \frac{\epsilon}{2K \cdot \log T}\cdot n_{\operatorname{set}}^{c',p'}\rfloor.
$$

\end{itemize}
\end{lem}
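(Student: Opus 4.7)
The plan is to prove the five bullets in turn, treating the three types $\mathrm{set}\in\{\lef,\ri,\spn\}$ and each pair $(c',p')\in\N_0^2$ uniformly, using that every row in $\W_{\mathrm{set}}^{c',p'}$ has leftmost rectangle of cost $c'$ and has all of its rectangles of the same value, lying in $[(1+\epsilon)^{p'},(1+\epsilon)^{p'+1})$.

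For the cost bound I would first observe that each row $W\in\W_{\mathrm{set}}^{c',p'}$ has total rectangle cost at most $Kc'$ by the definition of $K$, so $c(\R_{\mathrm{set},\add}^{c',p'})\le Kc'\cdot\lfloor\tfrac{\epsilon}{2K\log T}n_{\mathrm{set}}^{c',p'}\rfloor\le\tfrac{\epsilon c'n_{\mathrm{set}}^{c',p'}}{2\log T}$. Conversely, for each of the $n_{\mathrm{set}}^{c',p'}$ rows from which $S$ selects a rectangle intersecting $A$, the prefix constraint forces the leftmost rectangle of cost $c'$ into $S$, so $c(S\cap\R_{\mathrm{set}}^{c',p'})\ge c'n_{\mathrm{set}}^{c',p'}$. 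Hence $c(\R_{\mathrm{set},\add}^{c',p'})\le\tfrac{\epsilon}{2\log T}c(S\cap\R_{\mathrm{set}}^{c',p'})$, and summing over the three types and all pairs $(c',p')$, whose rectangle sets are pairwise disjoint, yields $c(S^+)-c(S)\le\tfrac{3\epsilon}{2\log T}c(S)$, which after rescaling $\epsilon$ by a constant gives the claimed factor.

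Next I would verify the three filled-row bullets, reading $W_{\ri,\filled}^{c',p'}$ and $W_{\spn,\filled}^{c',p'}$ in analogy with $W_{\lef,\filled}^{c',p'}$, but with the $\mi(A)$-spanning condition in place of intersecting $A$. For any row $W\in\W_{\mathrm{set}}^{c',p'}$ equal to or below $W_{\mathrm{set},\filled}^{c',p'}$, the defining property supplies a witness rectangle $R\in W\cap(S\cup\R_{\mathrm{set},\add}^{c',p'})$. If $W\in\W_{\mathrm{set},\add}^{c',p'}$ then every rectangle of $W$ is in $S^+$, including the one crossing the required vertical line, which exists because the leftmost rectangle of a left-sticking-in row has $\lef\le\lef(A)$ and, for $\mathrm{set}\in\{\ri,\spn\}$, rows in $\W_{\mathrm{set}}^{c',p'}$ contain a rectangle strictly spanning $\mi(A)$ by construction. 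Otherwise $R\in S$, and the prefix constraint forces every rectangle of $W$ weakly left of $R$ into $S\subseteq S^+$, once more including the rectangle crossing $\{\lef(A)\}\times\RR$ or $\{\mi(A)\}\times\RR$.

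Finally, for the coverage bullet I would fix a ray $L(s,t)$ satisfying the hypothesis and pick any $W'\in\W_{\mathrm{set},\add}^{c',p'}$. Since $W'$ lies strictly below $W_{\mathrm{set},\filled}^{c',p'}$, hence strictly below the row of $\R_{\mathrm{set}}^{c',p'}$ that $L(s,t)$ crosses, the downward ray $L(s,t)$ also passes through the horizontal strip of $W'$; moreover $t$ lies inside the x-span of $W'$'s rectangles by the $\lef\le\lef(A)$ condition (for $\mathrm{set}=\lef$ and $t\in[\lef(A),\mi(A))$) or by the $\mi(A)$-spanning condition (for $\mathrm{set}\in\{\ri,\spn\}$ and $t\in[\mi(A),\ri(A))$), so there is a unique $R^*\in W'$ intersecting $L(s,t)$ of value at least $(1+\epsilon)^{p'}$. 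The main obstacle is to argue that $R^*\notin S$, so that summing over $W'$ really yields the additive gain $(1+\epsilon)^{p'}\lfloor\tfrac{\epsilon}{2K\log T}n_{\mathrm{set}}^{c',p'}\rfloor$. For $\mathrm{set}=\lef$ this is immediate since $R^*$ intersects $A$ and $\W_{\lef,\add}^{c',p'}$ excludes rows with such $S$-rectangles. For $\mathrm{set}\in\{\ri,\spn\}$ the set $\W_{\mathrm{set},\add}^{c',p'}$ only excludes $S$-rectangles $R$ with $\lef(R)<\mi(A)\le\ri(R)$, so if $R^*$ itself contains $\mi(A)$ we are done, and if $R^*$ lies entirely to the right of $\mi(A)$ I would invoke the prefix constraint: $R^*\in S$ would force $S$ to also contain the separate rectangle of $W'$ covering $\mi(A)$, contradicting the defining property of $\W_{\mathrm{set},\add}^{c',p'}$.
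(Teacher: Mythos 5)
Your proposal follows essentially the same route as the paper's proof: the cost bullet via $c(W)\le K\cdot c'$ for each added row, charged against the prefix-forced leftmost rectangle of cost $c'$ in each of the $n_{\mathrm{set}}^{c',p'}$ counted rows; the filled-row bullets via the prefix property; and the coverage bullet by exhibiting, in each added row, a rectangle containing $t$ that the downward ray must hit and that is not in $S$ (your explicit prefix argument for $\mathrm{set}\in\{\ri,\spn\}$, showing the witness cannot be in $S$ without forcing the $\mi(A)$-crossing rectangle into $S$, spells out a point the paper leaves implicit). One quantitative nit: since every row has exactly one type, the sets $S\cap\R_{\mathrm{set}}^{c',p'}$ over all $(\mathrm{set},c',p')$ are pairwise disjoint subsets of $S$, so summing your per-group inequalities yields $c(S^{+})-c(S)\le\frac{\epsilon}{2\log T}c(S)$ directly; your stated factor $\frac{3\epsilon}{2\log T}$ double-counts against $c(S)$ and, taken literally, does not meet the lemma's bound $1+\epsilon/\log T$ (rescaling $\epsilon$ is not available inside the lemma), but the correct combination is already contained in your own disjointness remark. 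Also, added rows are at or below $W_{\mathrm{set},\filled}^{c',p'}$ rather than strictly below it; this is harmless, since all your argument uses is that they lie strictly below the row of the group that the ray starts in.
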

\begin{proof}
	Consider a group $(c',p')$ and let $\operatorname{set}\in \{\lsub, \rsub, \spn\}$. Then $c(S\cap \R^{c',p'}_{\operatorname{set}})\geq c'\cdot n^{c',p'}_{\operatorname{set}}$ as $S$ contains rectangles in $n^{c',p'}_{\operatorname{set}}$ different rows of $\rows^{c',p'}_{\operatorname{set}}$ and the first rectangle in each such row has a cost of $c'$.
	Recall that $\R^{c',p'}_{\operatorname{set}, \add}$ contains all rectangles from up to $\frac{\varepsilon}{2\cdot K\cdot \log T}n^{c',p'}_{\operatorname{set}}$ rows.
	Furthermore, the total cost of all rectangles in a row is at most a factor $K$ larger than the cost of a single rectangle and the first rectangle in each row has a cost of $c'$. Thus,  we have that \begin{align*}
		c(\R^{c',p'}_{\operatorname{set}, \add})&\leq K\cdot c' \cdot  \frac{\varepsilon}{2\cdot K\cdot \log T}n^{c',p'}_{\operatorname{set}}
		\leq \frac{\varepsilon}{2\cdot \log T}\cdot c(S\cap \R^{c',p'}_{\operatorname{set}}W).
	\end{align*}
	Using this for all $(c', p')$ and each $\operatorname{set}\in \{\lsub, \spn, \rsub\}$, we get  $c(S^+)\leq (1+\frac{\varepsilon}{2\log T})c(S)$.
	
	From the definition of $W^{c', p'}_{\operatorname{set}, \filled}$ we obtain directly that $S^+$ contains the rectangles from row $W^{c', p'}_{\operatorname{set}, \filled}$ and every row in $\rows^{c', p'}_{\operatorname{set}}$ below $W^{c', p'}_{\operatorname{set}, \filled}$ that intersect with
	\begin{itemize}
		\item the line ${\lef(A)}\times \RR$ if $\operatorname{set}=\lsub$ and
		\item  the line ${\mi(A)}\times \RR$ if $\operatorname{set}\in \{\spn, \rsub\}$.
	\end{itemize}
	
	So it remains to show that $S^+$ covers substantially more than $S$. Consider a group $(c',p')$ and suppose that $n^{c', p'}_{\lsub}\geq \frac{2K\cdot \log T}{\varepsilon}$. Let $L(s,t)\in \L'$ be a ray that intersects a rectangle from a row $W\in \rows^{c', p'}_\lsub$ above $W^{c', p'}_{\lsub, \filled}$ and fulfills $\lef(A)\leq t<\mi(A)$. Recall that $\rows^{c', p'}_{\lsub, \add}$ denotes the bottom-most $\lfloor\frac{\varepsilon}{2K\log T}n^{c', p'}_\lsub\rfloor$ rows in $\rows^{c', p'}_\lsub$ from which $S$ does not contain any rectangle intersecting $A$. 
	Each row $W\in \rows^{c', p'}_{\lsub, \add}$ is spanning $A_\lsub$, i.e. there exists a rectangle $R'\in W$ with $\lef(R')\leq \lef(A)$ and a rectangle $R''\in W$ with $\mi(A)\leq \ri(R'')$.
	So there also exists a rectangle $R$ with $\lef(R)\leq t <\ri(R)$.
	As $W$ is below or equal to $W^{c', p'}_{\lsub, \filled}$ and $L(s,t)$ starts above  $W^{c', p'}_{\lsub, \filled}$, the ray $L(s,t)$ intersects with $R$.
	So there are $\lfloor\frac{\varepsilon}{2K\log T}n^{c', p'}_\lsub\rfloor$ rectangles in $\R^{c', p'}_{\lsub, \add}$ that intersect with $L(s,t)$.
	So $p(\{R\in \R^{c', p'}_{\lsub, \add}:R\cap L(s,t)\neq \emptyset\})\geq (1+\varepsilon)^{p'} \lfloor\frac{\varepsilon}{2K\log T}n^{c', p'}_\lsub\rfloor$.
	This yields the last property for $\operatorname{set}=\lsub$. 
	
	For $\operatorname{set}\in \{\rsub, \spn\}$ the proof is very similar. Consider a group $(c',p')$ and suppose that $n^{c', p'}_{\operatorname{set}}\geq \frac{2K\cdot \log T}{\varepsilon}$ and $L(s,t)\in \L'$ is a ray that intersects a rectangle from a row $W\in \rows^{c', p'}_{\operatorname{set}}$ above $W^{c', p'}_{\operatorname{set}, \filled}$ and fulfills $\mi(A)\leq t<\ri(A)$. 
	Recall that $\rows^{c', p'}_{\operatorname{set}, \add}$ denotes the bottom-most $\lfloor\frac{\varepsilon}{2K\log T}n^{c', p'}_{\operatorname{set}}\rfloor$ rows in $\rows^{c', p'}_{\operatorname{set}}$ from which $S$ does not contain any rectangle intersecting $A_\rsub$. 
	In each row $W\in \rows^{c', p'}_{\operatorname{set}, \add}$ there exists a rectangle $R$ with $\lef(R)\leq t <\ri(R)$.
	As $W$ is below or equal to $W^{c', p'}_{\operatorname{set}, \filled}$ and $L(s,t)$ starts above $W^{c', p'}_{\operatorname{set}, \filled}$, the ray $L(s,t)$ intersects with $R$.
	So there are $\lfloor\frac{\varepsilon}{2K\log T}n^{c', p'}_{\operatorname{set}}\rfloor$ rectangles in $\R^{c', p'}_{\operatorname{set}, \add}$ that intersect with $L(s,t)$, which as before, yields the last property for $\operatorname{set}\in \{\rsub, \spn\}$ and completes the proof.
\end{proof}

\subsection{Algorithm}\label{subsec:GSP-algorithm}

Algorithmically, we guess certain properties of $S^{+}$, select some
rectangles from $\R'$ accordingly, and then partition the remaining
problem into two subproblems which we then solve recursively. First,
for each pair $(c',p')$ we guess whether 
$n_{\lef}^{c',p'}<\frac{2K\cdot \log T}{\epsilon}$. If this is the case,
we guess $\W_{\lef}^{c',p'}$ and all rectangles in $\R_{\lef}^{c',p'}$
that are contained in $S^+$ and select those. Since $n_{\lef}^{c',p'}<\frac{2K\cdot \log T}{\epsilon}$ we
can do this in time $n^{O(K \log T/\varepsilon)}$.
Assume now that $n_{\lef}^{c',p'}\ge\frac{2K\cdot \log T}{\epsilon}$.
We guess $W_{\lef,\filled}^{c,p}$ and for
each row $W\in\W_{\lef}^{c,p}$ underneath $W_{\lef,\filled}^{c,p}$
we select the rectangle $R\in W$ intersecting $\{\lef(A)\}\times\RR$
and all rectangles on the left of $R$. Similarly, for each pair $(c',p')$
we guess whether
$n_{\ri}^{c',p'}<\frac{2K\cdot \log T}{\epsilon}$
and if yes, we guess $\W_{\ri}^{c',p'}$ and all rectangles in $\R_{\ri}^{c',p'}$
that are contained in $S^+$ and select them.
Otherwise, we guess $W_{\ri,\filled}^{c,p}$
and for each row $W\in\W_{\ri}^{c,p}$
underneath $W_{\ri,\filled}^{c,p}$ we select the rectangle $R\in W$
intersecting $\{\mi(A)\}\times\RR$ and all rectangles on the left
of $R$. We handle the spanning rows in the same way as the right-sticking-in
rows. Let $\APX_{\mi}$ denote the selected rectangles.

We want to split the remaining problem into a left and a right subproblem
for the areas $A_{\lef}:=[\lef(A),\mi(A))\times[0, \infty)$ and $A_{\ri}:=[\mi(A),\ri(A))\times[0, \infty)$
and for sets of rectangles $\R'_{\lef}$ and $\R'_{\ri}$, respectively,
and for certain sets of rays which we will define in the following.

\paragraph{Centered rows.}

First, we consider the centered rows $\W_{\ce}$. For each row $W\in\W_{\ce}$
we do the following. Let $W_{\lef}\subseteq W$ denote all rectangles
$R\in W$ with $R\subseteq A_{\lef}$ and let $W_{\ri}\subseteq W$
denote all rectangles $R\in W$ with $R\subseteq A_{\ri}$. If $W_{\lef}=W$
then we simply assign all rectangles in $W$ to $\R'_{\lef}$; similarly,
if $W_{\ri}=W$ then we add all rectangles in $W$ to $\R'_{\ri}$.
Suppose that $W_{\lef}\ne W\ne W_{\ri}$. Assume first that there
is a rectangle $R_{\mi}\in W$ with $\lef(R_{\mi})<\mi(A)<\ri(R_{\mi})$;
note that there can be at most one such rectangle. We divide $R_{\mi}$
into a left and a right half defined by $R_{\mi,\lef}:=R_{\mi}\cap A_{\lef}$
and $R_{\mi,\ri}:=R_{\mi}\cap A_{\ri}$. We define $c(R_{\mi,\lef}):=c(R_{\mi})$
and assign all rectangles in $W_{\lef}\cup\{R_{\mi,\lef}\}$ to
$\R'_{\lef}$. Also, we define $c(R_{\mi,\ri}):=\sum_{R\in W_{\lef}}c(R)+c(R_{\mi})$.
The intuition for this is that if the right subproblem selects $R_{\mi,\ri}$
then in our given problem we must also select all rectangles in $W_{\lef}$
and pay $\sum_{R\in W_{\lef}}c(R)$ for them. We assign all rectangles
in $W_{\ri}\cup\{R_{\mi,\ri}\}$ to $\R'_{\ri}$. If there is no
rectangle $R_{\mi}\in W$ with $\lef(R_{\mi})<\mi(A)<\ri(R_{\mi})$
then instead we increase the cost of the leftmost rectangle $R_{\mathrm{leftmost}}\in W_{\ri}$
by $\sum_{R\in W_{\lef}}c(R)$, i.e., we redefine $c(R_{\mathrm{leftmost}}):=c(R_{\mathrm{leftmost}})+\sum_{R\in W_{\lef}}c(R)$.
Finally, we assign all rectangles in $W_{\lef}$ to $\R'_{\lef}$ and
all rectangles in $W_{\ri}$ to $\R'_{\ri}$.

\paragraph{Right-sticking-in and spanning rows.}

Consider now the right-sticking-in rows $\W_{\ri}$ and let $W\in\W_{\ri}$.
A simple case arises if each rectangle $R\in W\setminus\APX_{\mi}$
satisfies that $\mi(A)\le\lef(R)$.  In particular, this happens for rows $W\in\rows_{\ri}^{c',p'}$ below $W_{\ri, \filled}^{c',p'}$
for some pair $(c',p')$ for which $W_{\ri, \filled}^{c',p'}$ is defined. In this case, we assign each rectangle in
$W\setminus \APX_\mi$ to $\R'_{\ri}$. Assume now that there is a rectangle $R\in W\setminus\APX_{\mi}$
intersecting $A_{\lef}$. Then, we assign each rectangle in $W$ to
$\R'_{\lef}$, i.e., to the rectangles for the \emph{left} subproblem.
In particular, this may include rectangles that intersect $A_{\ri}$
or that are even contained in $A_{\ri}$. However, such rectangles
might be needed to satisfy the demands of rays intersecting with $A_{\ri}$
which we will assign to the \emph{right }subproblem. Therefore, when
we define the rays for the left subproblem, those will include certain
additional \emph{artificial rays} $\L_{\ri}^{+}$ intersecting with $A_{\ri}$
which will ensure that the left subproblem selects sufficiently many
rectangles intersecting $A_{\ri}$ and, therefore, help covering the
demand of rays contained in $A_{\ri}$. We treat the spanning rows
in exactly the same way as the right-sticking-in rows.

\paragraph{Left-sticking-in rows.}

Finally, we consider the left-sticking-in rows $\W_{\lef}$. Consider
a row $W\in\W_{\lef}$. If each rectangle $R\in W\setminus\APX_{\mi}$
is contained in the interior of $A$ then, intuitively, we already selected some rectangles in $W$ and thus $W\setminus\APX_{\mi}$ behaves like a centered
row. Therefore, in this case we treat $W\setminus\APX_{\mi}$ in exactly the same way
as we treated the centered rows above. Assume now that there is a
rectangle $R\in W\setminus\APX_{\mi}$ with $\lef(R)\leq \lef(A)$.
If for each rectangle $R'\in W\setminus\APX_{\mi}$ we have that
$R'\cap A_{\ri}=\emptyset$ (i.e., $\ri(R')\le\mi(A)$) then we
assign each rectangle in $W\setminus\APX_{\mi}$ to $\R'_{\lef}$,
i.e., to the left subproblem. On the other hand, if there is a rectangle
$R'\in W\setminus\APX_{\mi}$ with $R'\cap A_{\ri}\ne\emptyset$
then we assign \emph{all} rectangles in $W\setminus\APX_{\mi}$
to $\R'_{\ri}$, i.e., to the right subproblem. Similarly as for
the right-sticking-in and the spanning rows, we will define artificial
rays in $\L_{\lef}^{+}$ for the right subproblem to ensure that from
such rows, the right subproblem selects sufficiently many rectangles
to cover enough demand from rays in $\L'$ that intersect with $A_{\lef}$.

\paragraph{Reference solutions.}
For the left and right subproblem, we define reference solutions $S^+_\lsub$ and $S^+_\rsub$. Intuitively, to define $S^+_\lsub$ we restrict $S^+$ to the rectangles contained in the left subproblem and we define $S^+_\rsub$ similarly. Moreover, whenever we cut a rectangle from $S$ into two pieces, we assign the left piece to $S^+_\lsub$ and the right piece to $S^+_\rsub$.
Formally, we define $S^+_\lsub:=\{R\in \R_{\lef}:\exists R'\in S^+ \text{ with } R\subseteq R'\}$ and $S^+_\rsub:=\{R\in \R_{\ri}:\exists R'\in S^+ \text{ with } R\subseteq R'\}$.

\paragraph{Rays and artificial rays for subproblems.}

It remains to define the sets of rays for the left and right subproblem,
respectively.
As mentioned above, for a ray $L\in\L'$ with $L\cap A_{\lef}\neq \emptyset$
we would like that its demand is partially satisfied by rectangles
from $\R'_{\lef}$, i.e., selected by the left subproblem, and partially
by rectangles from $\R'_{\ri}$, i.e., selected by the right subproblem.
Therefore, for each ray $L\in\L'$ with $L\cap A_{\lef}\neq \emptyset$ we
intuitively reduce its demand by a certain value; formally, we introduce
a ray in a set $\L'_{\lef}$ for the left subproblem corresponding
to $L$ with reduced demand. On the other hand, we introduce artificial
rays in a set $\L_{\lef}^{+}$ for the right subproblem to compensate
for this reduction. We perform a symmetric operation for the rays
in the right subproblem.

Formally, the rays $\L'_{\lef}$ and $\L'_{\ri}$ with their reduced
demands and the artificial rays $\L_{\lef}^{+}$ and $\L_{\ri}^{+}$
are defined by a function $f:A\rightarrow \{0,\dots,\sum_Rp(R)\}$ which
we will guess; this function is a step-function with only polylogarithmically
many steps. Its steps are defined by a partition $\Q=\{Q_{1},...,Q_{k}\}$
of $A$ such that each $Q\in\Q$ is an axis-parallel rectangle of
the form $Q=[x_{Q}^{L},x_{Q}^{R})\times[y_{Q}^{B},y_{Q}^{T})$ for
suitable values $x_{Q}^{L},x_{Q}^{R},y_{Q}^{B}\in\N$ and $y_{Q}^{T}\in \N\cup \{\infty\}$
and
either $Q\subseteq A_{\lef}$ or $Q\subseteq A_{\ri}$. Also, for
any two points $(t,s),(t', s')\in Q$ we have that $f(t,s)=f(t', s')$.
For each ray $L(s,t)\in\L'$ with $L\subseteq A_{\lef}$ we add a
ray $L'(s,t)$ to $\L'_{\lef}$ with a demand of $d(L'(s,t)):=d(L(s,t))-f(t,s)-p(\{R\in \APX^{\mi}:R\cap L(s,t)\neq \emptyset\})$.
Symmetrically, for each ray $L(s,t)\in\L'$ with $L\subseteq A_{\ri}$
we add a ray $L'(s,t)$ to $\L'_{\ri}$ with a demand of $d(L'(s,t)):=d(L(s,t))-f(t,s)-p(\{R\in \APX^{\mi}:R\cap L(s,t)\neq \emptyset\})$.
Additionally, we define a polylogarithmic number of artificial rays
$\L_{\lef}^{+}$ and $\L_{\ri}^{+}$. For each ``step'' $Q\in\Q$
of $f$ we introduce an artificial ray $L_{Q}$ that starts in the
point $(x_{Q}^{R}-1,y_{Q}^{B})$
(i.e., at the bottom-right integer point of $Q$) and is oriented vertically downwards.
We define its demand such that $d(L_{Q}):=f(t,s)$ for each $(t,s)\in Q$.
For each $Q\in\Q$ with $Q\subseteq A_{\lef}$ we denote by $\L_{\lef}^{+}$
the resulting set of rays, i.e., $\L_{\lef}^{+}=\{L_{Q}:Q\subseteq A_{\lef}\}$;
similarly, we define $\L_{\ri}^{+}=\{L_{Q}:Q\subseteq A_{\ri}\}$.
Note that all rays in $\L'_{\lef},\L'_{\ri},\L_{\lef}^{+},$ and $\L_{\ri}^{+}$
are uniquely defined from~$f$. Therefore, we say that they are \emph{induced}
by $f$.

It remains to consider the rays in ${L}'_{\mathrm{out}}$. We introduce sets of rays $\mathcal{L}'_{\lsub,\mathrm{out}}, \mathcal{L}'_{\rsub,\mathrm{out}}$ (independent of $f$) for the left and right subproblems, respectively.
Consider a ray $L(s,t)\in \mathcal{L}'_{\mathrm{out}}$.
We guess the value $p(\{R\in S^+_\lsub:R\cap L(s,t)\neq \emptyset\})$ by which the reference solution for the left subproblem covers $L(s,t)$. Then we add a ray $L'(s,t)$ with a demand of $d(L'(s,t)):=p(\{R\in S^+_\lsub:R\cap L(s,t)\neq \emptyset\})$ to $\mathcal{L}'_{\lsub,\mathrm{out}}$ and a ray $L''(s,t)$ with $d(L''(s,t)):=\max\{0, d(L(s,t))-d(L'(s,t))-p(\{R\in \APX^{\mi}:R\cap L(s,t)\neq \emptyset\})\}$ to $\mathcal{L}'_{\lsub,\mathrm{out}}$.

In the next lemma, we show that there exists a function $f$ and corresponding
induced rays which admit certain properties. Those will allow us to
partition the remaining problem into the left and the right subproblem.
\begin{lem}
\label{lem:guess-rays}There exists a step-function $f:A\rightarrow \{0,\dots,\sum_Rp(R)\}$
with only $(K\cdot M \log(T+p_{\max})/\varepsilon)^{O(1)}$ steps with the following properties. Let $\L'_{\lef},\L'_{\ri},\L_{\lef}^{+},$
and $\L_{\ri}^{+}$ be the rays induced by $f$. It holds that
\begin{itemize}
\item $|\L_{\lef}^{+}\cup\L_{\ri}^{+}|\le(K\cdot M \cdot \log(T\cdot p_{\max})/\varepsilon)^{O(1)}$
\item each ray $L\in\L_{\lef}^{+}$ is contained in $A_{\lef}$ and each
ray $L\in\L_{\ri}^{+}$ is contained in $A_{\ri}$,
\item the solution $S_{\lef}^{+}$ is a feasible solution
to the (left) subproblem $(A_{\lef},\L'_{\lef}\cup\L_{\ri}^{+}\cup \L'_{\lsub,\mathrm{out}} ,\R_{\lef})$,
\item the solution $S_{\ri}^{+}$ is a feasible
solution to the (right) subproblem $(A_{\ri},\L'_{\ri}\cup\L_{\lef}^{+}\cup \mathcal{L}'_{\rsub,\mathrm{out}},\R_{\ri})$.
\end{itemize}
\end{lem}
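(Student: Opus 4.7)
The plan is to construct $f$ as a coarse step-function approximation of the ``wrong-side'' coverage: for $(t,s)\in A_\lef$ the only rectangles in $S^+\cap\R_\ri$ that can cover $L(s,t)$ come from left-sticking-in rows that were redirected to $\R'_\ri$ (because the row still contains, after removing $\APX_\mi$, some rectangle reaching into $A_\ri$), and symmetrically for $(t,s)\in A_\ri$ with right-sticking-in and spanning rows going to $\R'_\lef$. I will choose $f(t,s)$ to be simultaneously an upper bound on this wrong-side coverage at every point (so that the demand-reduced rays in $\L'_\lef$ and $\L'_\ri$ remain feasible for $S^+_\lef$ and $S^+_\ri$) and, evaluated at a cell $Q$, small enough to still be covered by the wrong-side rectangles at the bottom-right integer point of $Q$ (so that the artificial rays $L_Q$ are feasible).

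The partition $\Q$ will be obtained from horizontal cuts at the heights of the filled rows $W_{\mathrm{set},\filled}^{c',p'}$ for every type $\mathrm{set}\in\{\lef,\ri,\spn\}$ and every pair $(c',p')$ with $n_{\mathrm{set}}^{c',p'}\ge 2K\log T/\epsilon$, plus vertical cuts inside each horizontal strip obtained by rounding the contribution of every such group to a multiple of $(1+\epsilon)^{p'}\lfloor\tfrac{\epsilon}{2K\log T}n_{\mathrm{set}}^{c',p'}\rfloor$. Since the number of groups is at most $O(M\log p_{\max}/\epsilon)$ and each contributes only $O(K\log T/\epsilon)$ distinct rounded values, this will yield $|\Q|=(KM\log(T+p_{\max})/\epsilon)^{O(1)}$ cells and hence the same bound on $|\L_\lef^+\cup\L_\ri^+|$; the containment $\L_\lef^+\subseteq A_\lef$ and $\L_\ri^+\subseteq A_\ri$ is immediate from the definition.

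For feasibility of $S^+_\lef$ on a regular ray $L'(s,t)\in\L'_\lef$ with $(t,s)\in A_\lef$, I will decompose $p(\{R\in S^+:R\cap L(s,t)\neq\emptyset\})=p(\{R\in S^+_\lef:R\cap L(s,t)\neq\emptyset\})+p(\{R\in S^+_\ri:R\cap L(s,t)\neq\emptyset\})+p(\{R\in\APX_\mi:R\cap L(s,t)\neq\emptyset\})$; since $S^+$ is feasible for the original subproblem and $f(t,s)$ upper-bounds the middle term by construction, $S^+_\lef$ covers the reduced demand $d(L(s,t))-f(t,s)-p(\{R\in\APX_\mi:R\cap L(s,t)\neq\emptyset\})$. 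For an artificial ray $L_Q\in\L_\ri^+$ of demand $f(Q)$, the rectangles in $S^+_\lef$ that extend into $A_\ri$ (coming from right-sticking-in and spanning rows assigned to $\R'_\lef$) cover the bottom-right integer point of $Q$ by at least $f(Q)$, where the slack produced by the rows $\R_{\mathrm{set},\add}^{c',p'}$ guaranteed by Lemma~\ref{lem:propertiesS+} compensates for the rounding in the definition of $f$. The argument for $S^+_\ri$ is entirely symmetric, and feasibility on rays in $\L'_{\lef,\mathrm{out}}\cup\L'_{\ri,\mathrm{out}}$ follows directly from the guessing step that produced their demands.

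The hardest part is reconciling the two requirements on $f$: it must upper-bound wrong-side coverage at \emph{every} point of a cell $Q$, yet be at most the wrong-side coverage at the single bottom-right integer point of $Q$. Because this coverage is non-decreasing in the $s$-coordinate, these conditions pull in opposite directions, and a direct choice of $f$ would force $\Q$ to contain essentially one cell per row, destroying the polylogarithmic bound. The extra rectangles $\R_{\mathrm{set},\add}^{c',p'}$ added in the definition of $S^+$ are precisely what break the tension: they provide an additional $(1+\epsilon)^{p'}\lfloor\tfrac{\epsilon}{2K\log T}n_{\mathrm{set}}^{c',p'}\rfloor$ coverage on every ray intersecting an above-filled row of the group, simultaneously absorbing the rounding error in $f$ and the coverage drop incurred when moving from the top of a cell down to its bottom-right point.
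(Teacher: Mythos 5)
Your high-level strategy is the one the paper uses: define $f$ as a coarse step-function approximation of the ``wrong-side'' coverage of $S^+$ (taking each cell's value at its bottom-right integer point), prove feasibility of the reduced real rays from the over-coverage of $S^+$ guaranteed by Lemma~\ref{lem:propertiesS+} together with the decomposition of $S^+$-coverage into $S^+_\lef$, $S^+_\ri$ and $\APX_\mi$ parts, prove feasibility of the artificial rays from monotonicity of the coverage in $t$ and $s$, and dispatch the rays in $\L'_{\mathrm{out}}$ via their guessed demands. You also correctly identify which rows create wrong-side coverage and correctly pinpoint the tension between the two requirements on $f$ and the role of the slack. This matches the paper's Lemma~\ref{lem:rayApproximation} combined with Lemmas~\ref{lem:coveringSLR} and~\ref{lem:LRfeasibility}.

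However, your concrete construction of the partition $\Q$ has a genuine gap. With horizontal cuts only at the heights of the filled rows $W^{c',p'}_{\mathrm{set},\filled}$, a single horizontal strip contains \emph{all} rows of a group lying above that group's filled row, and there can be $\Theta(n)$ of them from which $S^+$ selects rectangles over a given column. The inequality you need is that for every integer point $(t,s)$ in a cell $Q$ the wrong-side coverage exceeds its value at the bottom-right corner of $Q$ by at most $\sum_{(c',p')}(1+\epsilon)^{p'}\bigl\lfloor\tfrac{\epsilon}{2K\log T}n^{c',p'}_{\mathrm{set}}\bigr\rfloor$; but moving from the bottom of such a strip to a point near its top the coverage can grow by one rectangle per intervening row, i.e.\ by up to roughly $n^{c',p'}_{\mathrm{set}}(1+\epsilon)^{p'}$, which dwarfs the slack quantum. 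The slack provided by $\R^{c',p'}_{\mathrm{set},\add}$ is a fixed amount per group and cannot absorb this vertical variation, so the assertion in your last paragraph only becomes true \emph{after} the cells are chosen so that each cell meets few rows of every group. This is exactly what the paper's Lemma~\ref{lem:rayApproximation} arranges: for each group it first selects about $8K\log T/\epsilon$ horizontal cut rows so that between consecutive cuts at most an $\epsilon/(8K\log T)$-fraction of the group's counted rows lie, and only then, inside each resulting strip, places vertical cuts where the coverage along the bottom edge drops by more than one quantum per group. Relatedly, ``each group contributes only $O(K\log T/\epsilon)$ distinct rounded values'' does not by itself bound $|\Q|$: the level sets of the rounded two-dimensional coverage are staircase regions whose decompositions into axis-parallel rectangles can need $\Omega(n)$ pieces, so the polylogarithmic bound on the number of steps (and hence on $|\L^+_\lef\cup\L^+_\ri|$) also requires the two-stage cut construction above rather than the value-rounding argument alone.
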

We will prove this lemma to Section~\ref{sec:ConstructionStepFunction}.
Algorithmically, we guess $f$ which we can do in time $2^{(K\cdot M\cdot  \log(n\cdot T\cdot p_{\max})/\varepsilon)^{O(1)}}$
since we have that $|\Q|\le(K\cdot M \cdot \log(T\cdot p_{\max})/\varepsilon)^{O(1)}$ and for the value of
$f$ corresponding to each $Q\in\Q$ (i.e., the value $f(s,t)$ for
each $(s,t)\in Q$) there are only $O(n\cdot p_{\max})$ options. We recurse
on the left and right subproblems $(A_{\lef},\L_{\lef}\cup\L_{\ri}^{+}\cup \L'_{\lsub,\mathrm{out}},\R_{\lef})$
and $(A_{\ri},\L_{\ri}\cup\L_{\lef}^{+}\cup \L'_{\rsub,\mathrm{out}},\R_{\ri})$. Let $\APX_{\lef}$
and $\APX_{\ri}$ denote the obtained solutions for them. 
Intuitively, we output $\APX_{\mi}\cup \APX_{\lef}\cup \APX_{\ri}$. Formally, we need to include also all rectangles on the left of the rectangles in these sets. Therefore, our
output $\APX$ is the set of all rectangles $R=[\lef(R),\ri(R)) \times [j, j+1) \in \R$ for which there exists a rectangle $R'=[\lef(R'),\ri(R')) \times [j, j+1)
\in \APX_{\mi}\cup \APX_{\lef}\cup \APX_{\ri}$ with $\lef(R)<\ri(R')$.

\subsection{Analysis}\label{sec:AnalysisMain}
For proving Lemma~\ref{lem:approx-RCP}, we need to show that our computed solution is feasible, prove that it has the claimed approximation ratio, and bound the running time of our algorithm. Let $\APX^{\operatorname{root}}$ denote the solution obtained for the root subproblem, i.e., where $A=[0, T)\times[0, \infty)$, $\R'=\R$ and $\L'=\L$. First, we prove feasibility.
\begin{lemma}\label{lem:feasibilityRoot}
	The set $\APX^{\operatorname{root}}$ is a feasible solution for the given instance of RCP.
\end{lemma}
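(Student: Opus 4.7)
The plan is to proceed by induction on the recursion depth, equivalently on $\ri(A)-\lef(A)$. The base case $\ri(A)-\lef(A)=1$ is immediate from Lemma~\ref{lem:BaseCase}, which returns an exact and therefore feasible solution. For the inductive step, fix a recursive call with $\ri(A)-\lef(A)>1$ and assume inductively that the solutions $\APX_{\lef}$ and $\APX_{\ri}$ returned by the two recursive calls are feasible for the left and right subproblems constructed in Section~\ref{subsec:GSP-algorithm}. The prefix property of the returned set $\APX$ is immediate from its definition, since it explicitly includes every rectangle lying on the left of a selected one.

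For the covering condition I would argue case by case on a ray $L(s,t)\in \L'$. Suppose first $L\subseteq A_{\lef}$. The construction produces a ray $L'(s,t)\in \L'_{\lef}$ of demand $d(L(s,t))-f(t,s)-p(\{R\in \APX_{\mi}:R\cap L\ne \emptyset\})$ which coincides with $L$ as a geometric object, so by induction $\APX_{\lef}$ covers this amount on $L$, while $\APX_{\mi}$ contributes the middle term by definition. Letting $Q\in \Q$ be the unique cell with $(t,s)\in Q$, the artificial ray $L_Q\in \L^+_{\lef}$ has demand $f(t,s)$ and is covered by $\APX_{\ri}$ by the inductive hypothesis. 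A short geometric claim then shows that every rectangle covering $L_Q$ also covers $L$, so $\APX_{\ri}$ contributes the remaining $f(t,s)$ on $L$ and the three contributions sum to at least $d(L(s,t))$. The case $L\subseteq A_{\ri}$ is completely symmetric, and the case $L\cap A=\emptyset$ follows by construction from the split of $L$ into two rays in $\mathcal{L}'_{\lsub,\mathrm{out}}$ and $\mathcal{L}'_{\rsub,\mathrm{out}}$ whose demands plus the contribution of $\APX_{\mi}\cap L$ sum to $d(L(s,t))$.

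The main obstacle is the geometric claim in the first case: every rectangle $R\in \APX_{\ri}$ intersecting $L_Q$ must intersect $L(s,t)$ for every $(t,s)\in Q$. The row-height inequality $j\le y^B_Q\le s$ is immediate since $L_Q$ emanates from $(x^R_Q-1,y^B_Q)$ and $(t,s)\in Q$ forces $s\ge y^B_Q$. The horizontal condition $\lef(R)\le t<\ri(R)$ relies on the partition $\Q$ being refined enough that no rectangle boundary of $\R_{\ri}$ lies strictly inside any $Q$; this is a property that must be secured in the construction of $f$ in the still-to-be-proven Lemma~\ref{lem:guess-rays}. Once this is in hand, the combination $\APX_{\mi}\cup \APX_{\lef}\cup \APX_{\ri}$ covers every ray and the left-extension step preserves both feasibility and the prefix condition, closing the induction. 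Additional care is needed for the rectangles of centered rows that were cut into a left and a right piece at the line $x=\mi(A)$, so that their contribution on each side is attributed correctly and not double-counted between the two subproblems.
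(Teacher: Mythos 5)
Your overall plan (induction on the recursion, prefix property by construction, splitting the covering argument into the three contributions of $\APX_{\mi}$, $\APX_{\lef}$, $\APX_{\ri}$, and using the artificial ray $L_Q$ to account for the cross-contribution) matches the paper's proof. However, the step you flag as "a short geometric claim" is a genuine gap, and the fix you propose would not work. It is simply false that every rectangle intersecting $L_Q$ also intersects $L(s,t)$ for all $(t,s)\in Q$: since $L_Q$ emanates from the bottom-\emph{right} integer point of $Q$, a rectangle $R$ with $t<\lef(R)\le x_Q^R-1$ meets $L_Q$ but not $L(s,t)$. You hope to exclude this by requiring that no rectangle boundary of the other subproblem lies strictly inside any step $Q$, and to extract this from Lemma~\ref{lem:guess-rays}; but that lemma provides no such alignment, and it cannot: $f$ has only polylogarithmically many steps, chosen by how much the covered amount $g$ drops, while there can be polynomially many distinct rectangle boundaries, so some boundaries must lie strictly inside steps.

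The paper closes this gap differently, without any alignment of $\Q$ with rectangle boundaries. Working row by row, it observes that the only rectangles of $\R'_{\rsub}$ that can intersect a ray placed at $\bar t<\mi(A)$ come from left-sticking-in rows that were assigned \emph{entirely} to the right subproblem, and such a row contains a rectangle with $\lef\le\lef(A)\le t$. Consecutiveness of the rectangles in a row then yields some rectangle $R''$ of that row with $\lef(R'')\le t<\ri(R'')$, and because $\APX_{\rsub}$ selects a prefix of the row and already contains the rectangle meeting $L_Q$ (which lies to the right of $R''$), it contains $R''$ as well; since all rectangles of a row have the same value, the amount covered on $L(s,t)$ is at least the amount covered on $L_Q$, i.e.\ at least $f(t,s)$. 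In other words, you must swap the rectangle meeting $L_Q$ for a possibly different, further-left rectangle of the \emph{same} row meeting $L$, rather than argue the same rectangle works. Finally, the "additional care" you defer for split rectangles and redefined costs is actually needed: the output $\APX$ consists of original rectangles of $\R'$, so one must argue per row (using that at most one rectangle of the subdivided row covers any fixed $t$, and that values within a row are equal) that $\APX$ covers at least as much on every ray as $\APX_{\mi}\cup\APX_{\lef}\cup\APX_{\rsub}$ combined; the paper carries this out explicitly before the case analysis.
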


\begin{proof}
	We show by induction, that the set $\APX$ computed as a solution for a given subproblem $(A,\L',\R')$ is a feasible
	solution to the subproblem $(A,\L',\R')$.
	Consider a subproblem $(A,\L',\R')$. 
	If $\lef(A)-\ri(A)=1$ then the claim follows from Lemma~\ref{lem:BaseCase}.
	So suppose this is not the case.
	We have to show that $\APX$ contains a prefix from the rectangles in each row and that it covers the rays $\L'$. 
	Recall that given $\APX_\lsub, \APX_\rsub$ and $\APX_\mi$, for each row $W\in \rows$ we select a rectangle $R=[\lef(R), \ri(R))\times[j, j+1)$ if there exists a rectangle $R'=[\lef(R'), \ri(R'))\times [j, j+1)\in \APX_\lsub \cup \APX_\rsub \cup \APX_\mi$ with $\ri(R')>\lef(R)$.
	This directly implies that $\APX$ contains a prefix in each row.	
	So it remains to show that each ray is covered by $\APX$.
	
	Consider a ray $L(s,t)\in \L'$.
	As a first step, we show that $\APX$ covers at least as much as $\APX_\lsub$, $\APX_\rsub$ and $\APX_\mi$ together.
	Let $W\in \rows$ be a row and let $R=[\lef(R),\ri(R))\times[j, j+1)\in W$ be a rectangle in this row.
	Furthermore let $W':=\{R'=[\lef(R'), \ri(R'))\times[j', j'+1)\in \R'_\lsub \cup \R'_\rsub \cup \APX_\mi:j'=j\}$ denote the rectangles that, intuitively, are in the row $W$, but in one of the subproblems. Recall that for some rows, we might split a rectangle into two parts.
	By construction, for every $t$ there is at most one rectangle $R\in W'$ with $\lef(R)\leq t<\ri(R)$.
	Thus $L(s,t)$ can intersect with at most one rectangle from $W'$.
	So it also intersects with at most one rectangle from  $(\APX_\lsub \cup \APX_\rsub \cup \APX_\mi)\cap W'$.
	Note that all rectangles $R\in W'$ have the same value $p(R)$.
	Suppose that there is a rectangle $R\in(\APX_\lsub \cup \APX_\rsub \cup \APX_\mi)\cap W'$ with $R\cap L(s,t)\neq \emptyset$.
	Then there is also a rectangle $R'\in W$ with $R'\cap L(s,t)\neq \emptyset$.
	Furthermore we have $\lef(R')<t\leq \ri(R)$. This implies $R'\in \APX$ as we add a rectangle $R'\in W$ to $\APX$ if there is a rectangle $R\in W'$ with $\lef(R')<\ri(R)$.
	So we obtain $p(\{R\in \APX\cap W:R\cap L(s,t)\neq \emptyset\})\geq p(\{R\in (\APX_\lsub \cup \APX_\rsub \cup \APX_\mi)\cap W':R\cap L(s,t)\neq \emptyset\})$ and therefore 
	\begin{align*}
		&p(\{R\in \APX:R\cap L(s,t)\neq \emptyset\})\geq p(\{R\in \APX_\lsub:R\cap L(s,t)\neq \emptyset\})\\
		&+p(\{R\in \APX_\rsub:R\cap L(s,t)\neq \emptyset\})
		+p(\{R\in \APX_\mi:R\cap L(s,t)\neq \emptyset\})
	\end{align*}
	
	Now we prove that $\APX$ covers $L(s,t)$.
	First suppose that $L(s,t)\in \mathcal{L}'_{\operatorname{out}}$. Then there is a ray $L'(s,t)$ with demand $d(L'(s,t))$  in the left subproblem and a ray with $L''(s,t)$ with demand $d(L''(s,t))$  in the right subproblem. Note that $\APX_\lsub$ and $\APX_\rsub$ cover the demands of these rays in the respective subproblems, as they are feasible solutions for them. By definition of $d(L''(s,t))$ we have $d(L(s,t))\leq d(L'(s,t))+d(L''(s,t))+p(\{R\in \APX_\mi:R\cap L(s, t)\neq \emptyset\})$. So we obtain the following:
	\begin{align*}
		p(\{R\in \APX:R\cap L(s,t)\neq \emptyset\})
		&\geq p(\{R\in \APX_\lsub:R\cap L(s,t)\neq \emptyset\})
		+p(\{R\in \APX_\rsub:R\cap L(s,t)\neq \emptyset\})\\
		&+p(\{R\in \APX_\mi:R\cap L(s,t)\neq \emptyset\})\\
		&\geq d(L'(s,t))+d(L''(s,t))+p(\{R\in \APX_\mi:R\cap L(s, t)\neq \emptyset\})\\
		&\geq d(L(s,t))
	\end{align*}
	Hence, $L(s,t)$ is covered by $\APX$. The same argument yields that every ray in $\mathcal{L}'_{\operatorname{out}}$ is covered by $\APX$.
	
	Now suppose that $L(s,t)$ intersects $A_\lsub$. Then there is a step $Q\in \Q$ of the function $f$ with $(t,s)\in Q$ and thus $Q\subseteq A_\lsub$.
	Let $\bar{L}(\bar{s},\bar{t})\in\L^+_\lsub$ be the ray at the bottom-right integer point of $Q$.
	Then in the left subproblem, there is a ray $L'(s,t)$ with demand $d(L'(s,t))=\max\{0,d(L(s,t))-f(t,s)-p(\{R\in \APX_\mi:R\cap L(s,t)\neq \emptyset\})\}$ and in the right subproblem, there is a the demand ray $\bar{L}(\bar{s}, \bar{t})$ with demand $d(\bar{L}(\bar{s}, \bar{t}))=f(t,s)$.
	As before, let $W$ be a row, let $R=[\lef(R),\ri(R))\times[j, j+1)\in W$ and let $W':=\{R'=[\lef(R'), \ri(R'))\times[j', j'+1)\in \R'_\lsub \cup \R'_\rsub \cup \APX_\mi:j'=j\}$ denote the rectangles that, intuitively, are in the row $W$, but in one of the subproblems.
	Suppose that there exists a rectangle $R\in W'\cap \APX_\rsub$ with $R\cap \bar{L}(\bar{s}, \bar{t})\neq \emptyset$.
	Note that there is at most one such rectangle in each row.
	As $t<\mi(A)$ and $Q\subseteq A_\lsub$ by construction, we also have $\bar{t}<\mi(A)$ and thus $\lef(R)<\mi(A)$.
	By construction of the rectangles $\R_\rsub$, this is only possible if there also exists a rectangle $R'\in W'$ with $\lef(R')\leq \lef(A)$. 
	So there exists a rectangle $R\in W'$ with $\ri(R)>\bar{t}\geq t$ and a rectangle $R'\in W'$ with $\lef(R')\leq \lef(A)\leq t$.
	As the rectangles in a row are consecutive there also exists a rectangle $R''\in W'$ with $\lef(R'')\leq t<\ri(R'')$.
	As the algorithm selects a prefix in each row and $R\in \APX_\rsub$, we also have $R''\in \APX_\rsub$. This implies $R''\cap L(s,t)\neq \emptyset$ as $s\geq \bar{s}$.
	So $p(\{R\in \APX_\rsub:R\cap L(s,t)\neq \emptyset\})\geq p(\{R\in \APX_\rsub:R\cap \bar{L}(\bar{s}, \bar{t})\neq \emptyset\})\geq d(\bar{L}(\bar{s}, \bar{t}))=f(t,s)$.
	This yields the desired result:
	\begin{align*}
		p(\{R\in \APX:R\cap L(s,t)\neq \emptyset\})
		&\geq p(\{R\in \APX_\lsub:R\cap L(s,t)\neq \emptyset\})
		+p(\{R\in \APX_\rsub:R\cap L(s,t)\neq \emptyset\})\\
		&+p(\{R\in \APX_\mi:R\cap L(s,t)\neq \emptyset\})\\
		&\geq d(L'(s,t))+f(t,s)+p(\{R\in \APX_\mi:R\cap L(s, t)\neq \emptyset\})\\
		&\geq d(L(s,t))
	\end{align*}
	The proof for the case that $L(s,t)$ intersects $A_\rsub$ is the same as the proof for the case that $L(s,t)$ intersects $A_\lsub$ when interchanging $\lsub$ and $\rsub$.
	This shows that every ray is covered and thus $\APX$ is a feasible solution for $(A,\L',\R')$. This completes the induction.
	
	Thus the set $\APX^{\operatorname{root}}$ is a feasible solution to the root subproblem. And as the root subproblem is equivalent to the RCP instance, the set $\APX^{\operatorname{root}}$ is also feasible for the RCP instance.
\end{proof}
As a next step, we bound our approximation ratio.
\begin{lemma}\label{lem:approximationRatioRoot}
	We have $c(\APX^{\operatorname{root}})\leq (2+O(\varepsilon))c(S)$ for any feasible solution $S$.
\end{lemma}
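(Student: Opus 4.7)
The plan is to prove by induction on the remaining recursion depth $d$ that for every subproblem the algorithm returns a solution $\APX$ with $c(\APX) \le (1+\beta)^{d}\, c_{\APX}(S^*)$, where $\beta = \Theta(\epsilon/\log T)$ and $S^*$ denotes any feasible solution to the subproblem. Since every coefficient in $c_{\APX}$ is at most $2$, we have $c_{\APX}(S) \le 2\,c(S)$, and the maximum depth is $\log T$, so at the root this gives $c(\APX^{\mathrm{root}}) \le (1+\beta)^{\log T}\cdot 2\,c(S) = (2+O(\epsilon))\,c(S)$ for any feasible $S$, as claimed. The base case is immediate from \Cref{lem:BaseCase}, which produces an exact solution.

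For the inductive step in a subproblem with $c_{\APX}$-optimal reference $S$, I form $S^+$ via \Cref{lem:propertiesS+}. Because the added rectangles of each group $\rows_{\mathrm{set}}^{c',p'}$ occupy at most an $\epsilon/(2K\log T)$-fraction of the rows while their total cost is at most $K$ times a first-rectangle cost, a direct estimate gives $c_{\APX}(S^+) \le (1+\beta)c_{\APX}(S)$. By \Cref{lem:guess-rays}, $S_\lef^+$ and $S_\rsub^+$ are feasible for the two recursive subproblems, so by the inductive hypothesis the recursion yields $c(\APX_\lef) \le (1+\beta)^{d-1}c_{\APX}^{\mathrm{left}}(S_\lef^+)$ and analogously for $\APX_\rsub$. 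A combinatorial bookkeeping that uses the inflated costs of the split rectangles (namely $c(R_{\mi,\ri}) = \sum_{R\in W_\lef}c(R)+c(R_\mi)$, and the analogous cost bump on the leftmost of $W_\ri$ when no mid-crossing rectangle exists) shows that the prefix extensions defining $\APX$ are paid for by these inflations, so
\[ c(\APX) \le c(\APX_\mi) + c(\APX_\lef) + c(\APX_\rsub). \]
Hence the inductive step reduces to the \emph{budget inequality}
\[ c(\APX_\mi) + c_{\APX}^{\mathrm{left}}(S_\lef^+) + c_{\APX}^{\mathrm{right}}(S_\rsub^+) \le c_{\APX}(S^+), \]
since chaining it with the above produces $c(\APX) \le (1+\beta)^{d-1}c_{\APX}(S^+) \le (1+\beta)^d c_{\APX}(S)$.

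I prove the budget inequality row by row, relying on three structural observations. First, $\APX_\mi\cap W \subseteq S^+\cap W$ for every row $W$: for small $n_{\mathrm{set}}^{c',p'}$ the algorithm selects exactly $\R_{\mathrm{set}}^{c',p'}\cap S^+$, and for large $n_{\mathrm{set}}^{c',p'}$ the three ``$W_{\mathrm{set},\filled}$'' bullets of \Cref{lem:propertiesS+} certify that the guessed prefixes already lie inside $S^+$, so the term $c(\APX_\mi)$ is absorbed into the children's reference contribution rather than added on top. Second, for a centered row the parent coefficient is $2$: if a mid-crossing rectangle $R_\mi$ exists, $W_\lef\cup\{R_{\mi,\lef}\}$ becomes right-sticking-in (coefficient $1$) in the left child and $W_\ri\cup\{R_{\mi,\ri}\}$ becomes left-sticking-in (coefficients $1$ on the boundary $R_{\mi,\ri}$ and $2$ on the interior of $W_\ri$) in the right child, and the inflated cost of $R_{\mi,\ri}$ precisely recovers the parent's $2\sum_{R\in W} c(R)$; when no $R_\mi$ exists, the consecutive-rectangle property forces the rightmost of $W_\lef$ and the leftmost of $W_\ri$ to meet at $\mi(A)$, so the leftmost of $W_\ri$ becomes a coefficient-$1$ boundary rectangle in the right child and the cost bump on it balances the budget. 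Third, for spanning and right-sticking-in rows (parent coefficient $1$) the row is handed entirely to one child with coefficient $1$ preserved; when the other child still needs coverage on its side, the artificial rays $\L_\lef^+ \cup \L_\ri^+$ enforce this demand to be met from rectangles already present in $S^+$ on that side. Left-sticking-in rows are treated symmetrically, splitting naturally along the $\lef(R)\le\lef(A)$ vs.\ $\lef(R)>\lef(A)$ distinction built into $c_{\APX}$.

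The main obstacle is carrying out this case analysis in full, since each of the four row types subdivides further according to whether its group $(c',p')$ is small or large, whether the row lies above or below $W_{\mathrm{set},\filled}^{c',p'}$, whether a mid-crossing rectangle exists, and how $S^+$ is distributed on the two sides of $\mi(A)$, together with careful attention to edge cases where $W_\lef$ or $W_\ri$ is empty and where rectangle boundaries align exactly with $\mi(A)$. Once the per-row inequality has been verified in every case and summed, the resulting budget inequality combined with the inductive bounds completes the induction and hence the lemma.
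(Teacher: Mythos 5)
Your proposal is correct and follows essentially the same route as the paper: your ``budget inequality'' is exactly the paper's Lemma~\ref{lem:costLRsubproblem}, the bookkeeping $c(\APX)\le c(\APX_\mi)+c'(\APX_{\lef})+c'(\APX_{\ri})$ is Lemma~\ref{lem:costAPX}, the $(1+\Theta(\epsilon/\log T))$ loss per level from $S^+$ is Lemma~\ref{lem:propertiesS+}, the depth-wise induction is Lemma~\ref{lem:approximationFactorRecursion}, and the final step $c_{\APX}(A,S)\le 2c(S)$ at the root is the paper's concluding observation. The representative cases you sketch (the inflated cost of $R_{\mi,\ri}$ resp.\ $R_{\mathrm{leftmost}}$ recovering the coefficient $2$ on centered rows, type/coefficient preservation for sticking-in and spanning rows, and $\APX_\mi\subseteq S^+$ under correct guesses) are precisely the key cases of the paper's per-row analysis, with only the remaining routine cases left unwritten.
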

We will prove this lemma in Section~\ref{sec:ProofApproxRatio}. As a final step we bound our running time.
\begin{lemma}\label{lem:runningTime}
	The running time of the algorithm is bounded by $2^{(K\cdot M \cdot \log(n\cdot T\cdot P)/\varepsilon)^{O(1)}}$.
\end{lemma}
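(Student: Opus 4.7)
My plan is to bound the running time by analyzing the recursion tree of the algorithm in Section~\ref{subsec:GSP-algorithm}. First I would observe that at every recursive call, $\ri(A)-\lef(A)$ is a power of two that halves when passing to either the left subproblem (on $A_{\lef}$) or the right subproblem (on $A_{\ri}$). Since the root call has $\ri(A)-\lef(A)=T$, the recursion depth is $D=O(\log T)$, and the leaves correspond to the base case $\ri(A)-\lef(A)=1$, which by Lemma~\ref{lem:BaseCase} is solved exactly in time $(n\cdot p_{\max})^{O(|\L'_{\mathrm{out}}|)}$.

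Next, I would upper-bound the multiplicative branching factor $G$ introduced by a single non-leaf call. The guessing has three groups. First, for each of the at most $M\cdot O(\log p_{\max}/\epsilon)$ pairs $(c',p')$ and each side $\mathrm{set}\in\{\lef,\ri,\spn\}$, we either guess a ``small'' set $\W_{\mathrm{set}}^{c',p'}$ of at most $O(K\log T/\epsilon)$ rows together with the selected prefix in each row, giving at most $n^{O(K\log T/\epsilon)}\cdot K^{O(K\log T/\epsilon)}$ options, or we guess a single row $W_{\mathrm{set},\filled}^{c',p'}$ (at most $n$ options). Second, we guess the step function $f$ of Lemma~\ref{lem:guess-rays}: it has $(K\cdot M\log(Tp_{\max})/\epsilon)^{O(1)}$ steps and each step takes one of $O(n p_{\max})$ integer values, contributing $2^{(KM\log(n T p_{\max})/\epsilon)^{O(1)}}$ options. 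Third, for every ray $L\in\L'_{\mathrm{out}}$ we guess the integer coverage $p(\{R\in S^+_{\lef}:R\cap L\neq\emptyset\})\in\{0,\dots,O(np_{\max})\}$, giving $(np_{\max})^{|\L'_{\mathrm{out}}|}$ further options. Multiplying these three contributions yields $G\le 2^{(KM\log(nTp_{\max})/\epsilon)^{O(1)}}$. The remaining per-call work (cutting and reassigning rectangles of centered, sticking-in and spanning rows to $\R'_{\lef},\R'_{\ri}$, forming $\APX_{\mi}$, and finally merging $\APX_{\mi}\cup\APX_{\lef}\cup\APX_{\ri}$) is polynomial in the instance size and therefore also absorbed into $G$.

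The main obstacle I expect is verifying inductively that the input invariant $|\L'_{\mathrm{out}}|\le(K\cdot M\cdot\log(p_{\max}\cdot\max_R\ri(R))/\epsilon)^{O(1)}$ is preserved by the recursion, because both $G$ and the base-case cost depend on $|\L'_{\mathrm{out}}|$. When we descend from a subproblem on area $A$ to its left child on area $A_{\lef}$, the new outside set consists of (i) the rays of $\L'_{\lsub,\mathrm{out}}$ inherited from the parent's $\L'_{\mathrm{out}}$, which remain outside $A_{\lef}$, and (ii) the artificial rays $\L_{\ri}^{+}$ produced by Lemma~\ref{lem:guess-rays}, which lie inside $A_{\ri}$ and hence outside $A_{\lef}$. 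By Lemma~\ref{lem:guess-rays} each recursive call introduces at most $(KM\log(Tp_{\max})/\epsilon)^{O(1)}$ new rays, and there are $O(\log T)$ levels, so the accumulated size of $\L'_{\mathrm{out}}$ along any root-to-leaf path remains bounded by $(KM\log(Tp_{\max})/\epsilon)^{O(1)}$, with the $\log T$ factor absorbed into the polynomial exponent. Consequently the coverage-guessing step stays within $G$ and the base-case cost is at most $(np_{\max})^{O(|\L'_{\mathrm{out}}|)} = 2^{(KM\log(nTp_{\max})/\epsilon)^{O(1)}}$.

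Finally, writing $T(w)$ for the running time on a subproblem with $\ri(A)-\lef(A)=w$, the analysis above gives the recurrence $T(w)\le G\cdot\bigl(T_{\lef}(w/2)+T_{\ri}(w/2)\bigr)\le 2G\cdot T(w/2)$, with $T(1)$ equal to the base-case cost. Unwinding to depth $D=O(\log T)$ yields $T(T)\le(2G)^{D}\cdot T(1)=2^{O(\log T)\cdot(KM\log(nTp_{\max})/\epsilon)^{O(1)}}\cdot 2^{(KM\log(nTp_{\max})/\epsilon)^{O(1)}}$, and the outer $\log T$ factor is absorbed into the polynomial inside the exponent. Identifying $P$ with $p_{\max}$, this matches the claimed bound $2^{(K\cdot M\cdot\log(n\cdot T\cdot P)/\epsilon)^{O(1)}}$.
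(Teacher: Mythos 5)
Your proof is correct and follows essentially the same route as the paper: an inductive bound showing $|\L'_{\mathrm{out}}|$ grows by only $(K M \log(T p_{\max})/\epsilon)^{O(1)}$ artificial rays per level over the $O(\log T)$ recursion depth, a per-call branching bound covering the $(c',p')$-row guesses, the step function $f$, and the coverage values for $\L'_{\mathrm{out}}$, combined over the recursion tree. The only cosmetic difference is that the paper also explicitly counts the guesses for the boundaries of the steps $Q$ of $f$ (at most $n$ and $T$ options per coordinate), which you fold implicitly into "guessing $f$"; this does not change the bound.
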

\begin{proof}
	As a first step, we show that the size of $\mathcal{L}'_{\operatorname{out}}$ is at most $(K\cdot M \cdot \log( T\cdot p_{\max})/\varepsilon)^{O(1)}$.
	Let $C_1$ be a constant such that the function $f$ from Lemma~\ref{lem:guess-rays} has  at most $(K\cdot M \log(T+p_{\max})/\varepsilon)^{C_1}$ steps.
	We show by induction starting from the root, that for a subproblem with area $A=[\lef(A),\ri(A))\times [0, \infty)$, we have $|\mathcal{L}'_{\operatorname{out}}|\leq (\log T - \log(\ri(A)-\lef(A)))\cdot(K\cdot M\cdot  \log(T+p_{\max})/\varepsilon)^{C_1}$.
	For the root we have $\ri(A)-\lef(A)=T$ and there are $0$ rays outside of $A$, so this is correct.
	So suppose the the hypothesis holds for a call for $A$ and we show that it also holds for the right and left subproblem.
	The rays in $\mathcal{L}'_{\operatorname{out}}$ are passed on to both subproblems (and do not intersect with $A_\lsub$ and $A_\rsub$).
	These are at most $(\log T - \log(\ri(A)-\lef(A)))\cdot (K\cdot M\cdot  \log(T+p_{\max})/\varepsilon)^{C_1}$.	
	The other rays not intersecting $A_\lsub$ in the left subproblem are the rays $\mathcal{L}^+_{\rsub}$.
	By Lemma~\ref{lem:guess-rays}, these are at most $(K\cdot M \cdot \log(T+p_{\max})/\varepsilon)^{C_1}$. This in total yields $(\log T - \log(\lef(A)-\ri(A))+1)\cdot(K\cdot M \cdot \log(T+p_{\max})/\varepsilon)^{C_1}=(\log T - \log((\lef(A)-\ri(A))/2))\cdot(K\cdot M\cdot  \log(T+p_{\max})/\varepsilon)^{C_1}$  not intersecting $A_\lsub$ in the left subproblem.
	The same argument yields the same bound for the right subproblem right subproblem.
	So we always have that the size of $\mathcal{L}'_{\operatorname{out}}$ is at most $\log T \cdot  (K\cdot M \log( T\cdot p_{\max})/\varepsilon)^{C_1}$.

	Using this, we can now prove that the running time of the algorithm is bounded by $2^{(K\cdot M \cdot \log(n\cdot T\cdot P)/\varepsilon)^{O(1)}}$.
	The recursion depth of our algorithm is $O(\log T)$. At each recursion step, we need to guess the numbers for each pair $c', p')$ and $\operatorname{set}\in \{\lsub, \spn, \rsub\}$ we guess whether $n^{c',p'}_{\operatorname{set}}<\frac{2K\cdot \log T}{\varepsilon}$ and up to $\frac{2K\cdot \log T}{\varepsilon}$ rows. For each row, there are at most $n$ options.
	As there are at most $M$ values for $c'$ and at most $O(\log p_{\max}/\varepsilon)$ possible values for $p'$, there are at most $O(M \log p_{\max}/\varepsilon)$ pairs $(c', p')$.
	So this step takes at most $2^{O((M \log T \log p_{\max}/\varepsilon)^2)}$.
	
	Furthermore, we need to guess the function $f$. The function has at most $(K\cdot M \cdot  \log(T\cdot p_{\max})/\varepsilon)^{O(1)}$ steps. 
	For each step $Q\in \Q$, we need to guess the function value in $Q$, which is bounded by $n\cdot p_{\max}$, and the boundaries of $Q$, for which there are at most $n$ options for the top and bottom one and $T$ options for the left and right one.
	So altogether, the guessing of $f$ can be done in $2^{(K\cdot M\cdot \log(n\cdot T\cdot p_{\max})/\varepsilon)^{O(1)}}$.
	
	The last step is to guess the demand for the rays in $\L'_{\operatorname{out}}$. 
	There are at most $(K\cdot M \cdot\log( T\cdot p_{\max})/\varepsilon)^{O(1)}$ such rays as shown above. And for each such ray, the guessed demand can be bounded by $n\cdot p_{\max}$, so this step takes $2^{(K\cdot M \cdot\log(n \cdot T\cdot p_{\max})/\varepsilon)^{O(1)}}$
	Altogether, at each recursion step there are at most $2^{(K\cdot M \log(n+T+p_{\max})/\varepsilon)^{O(1)}}$ recursive calls.
	
	For fixed guesses, the computation of $\APX$ can be done in time $O(n)$.
	The base case can be solved in time $(n \cdot p_{\max})^{O(|\mathcal{L'}_{\operatorname{out}}|)}\leq 2^{(K\cdot M \cdot \log(n\cdot T\cdot p_{\max})/\varepsilon)^{O(1)}}$.
	Altogether, this yields a running time of $ 2^{(K\cdot M \cdot \log(n\cdot T\cdot p_{\max})/\varepsilon)^{O(1)}}$.
\end{proof}

Altogether, we can now prove Lemma~\ref{lem:approx-RCP}.
\begin{proof}[Proof of Lemma~\ref{lem:approx-RCP}]
	By Lemma~\ref{lem:feasibilityRoot} the computed solution $\APX^{\operatorname{root}}$ is feasible.
	By Lemma~\ref{lem:approximationRatioRoot} we have $c(\APX)\leq (2+O(\varepsilon))c(S)$where $S$ denotes the optimal solution to the RCP instance.
	By Lemma~\ref{lem:runningTime}, the running time is bounded by $2^{(K\cdot M \cdot \log(n\cdot T\cdot P)/\varepsilon)^{O(1)}}$. So we can apply Lemma~\ref{lem:preprocessing}, which yields the claimed result by rescaling $\varepsilon$.
\end{proof}

\subsection{Proof of Lemma~\ref{lem:guess-rays}}\label{sec:ConstructionStepFunction}
We construct the step function $f$ separately for the left and right subproblem.
First, we introduce some notation.
Let $\mathcal{G}$ be the set of all pairs $(c', p')\in \N_0^2$, for which there exists $\operatorname{set}\in \{\lsub, \rsub, \spn\}$ such that $\rows^{c', p'}_{\operatorname{set}}\neq \emptyset$.
For each $\operatorname{set}\in \{\lsub, \rsub, \spn\}$ and a set of rows $\rows^{c', p'}_{\operatorname{set}}$ for which $W^{c', p'}_{\operatorname{set}}$ was defined let $\rows^{c', p'}_{\operatorname{set},\operatorname{top}}$ denote all rows in $\rows^{c', p'}_{\operatorname{set}}$ above $W^{c', p'}_{\operatorname{set}, \filled}$.
Also let $\mathcal{H}_{\operatorname{set}}$ denote all pairs $(c', p')$ for which $\rows^{c', p'}_{\operatorname{set},\operatorname{top}}\neq \emptyset$.
For each $\operatorname{set}\in \{\lsub, \rsub, \spn\}$, we apply the following to lemma to $\bar{W}:=\bigcup_{c', p'\in \mathcal{H}_{\operatorname{set} }}\rows^{c', p'}$, to the rectangles $\bar{S}:=S^+\cap  \bar{W}$ and the area $\bar{A}=A_\lsub$ if $\operatorname{set}=\lsub$ and the area $\bar{A}=A_\rsub$ if $\operatorname{set}\in \{\spn, \rsub\}$.
\begin{lemma}\label{lem:rayApproximation}
	Let $\bar{A}=[\bar{a}, \bar{b})\times [0, \infty)$ be an area and let $\bar{\rows} \subseteq \rows$ be rows spanning $\bar{A}$, i.e. for each $W\in \bar{\rows}$ there exist rectangles $R, R'\in W$ with $\lef(R)\leq \bar{a}$ and $\ri(R') \geq \bar{b}$.
	Let $\bar{\rows}=\bigcup_{c', p'} \bar{\rows}^{c',p'}$ be a partition, where a row $W\in \bar{\rows}$ if and only if the leftmost rectangle $R\in W$ has a cost of $c(R)=c'$ and satisfies $(1+\varepsilon)^{p'}\leq p(R)\leq (1+\varepsilon)^{p'+1}$.
	In addition, let $\bar{S}\subseteq \bigcup_{W\in \bar{\rows}}W$ be a set of rectangles that contains a prefix of the rectangles in each row $W\in \bar{\rows}$.
	For each $(c', p')$ let $\bar{n}^{c',p'}:=|\{W\in \bar{\rows}^{c',p'}: W\cap \bar{S} \neq \emptyset\}|$. There exists a step function $\bar{f}:A\rightarrow \{0,\dots,\sum_Rp(R)\}$ with $O((K|\mathcal{G}| \log T/\varepsilon)^2)$ steps such that for each $s, t\in \bar{A}$ we have
	\begin{equation}\label{eq:functionInterpolation}
		\bar{f}(t,s)\leq \sum_{R\in \bar{S}:R\cap L(s, t)\neq \emptyset}p(R)\leq \bar{f}(t,s)+\sum_{(c',p'):\exists R\in \bigcup_{W\in\bar{\rows}^{c',p'}}W \text{ with }R\cap L(s, t)\neq \emptyset}(1+\varepsilon)^{p'}\left\lfloor\frac{\varepsilon}{2\cdot K\log T}\bar{n}^{c',p'}\right\rfloor.
	\end{equation}
\end{lemma}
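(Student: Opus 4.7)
The plan is to define \(\bar{f}\) as a conservative lower bound on
\(g(t,s):=\sum_{R\in\bar{S}:\,R\cap L(s,t)\neq\emptyset}p(R)\) that is constant on each cell of a suitably refined rectangular partition of \(\bar{A}\). Because every row in \(\bar{\rows}\) spans \(\bar{A}\), for any \((t,s)\in\bar{A}\) an active row \(W\in\bar{\rows}\) with \(\bar{S}\cap W\neq\emptyset\) contributes exactly \(p_{W}\) to \(g(t,s)\) iff \(j_{W}\le s\) and \(r_{W}>t\), where \(j_{W}\) is the row's integer height, \(r_{W}:=\max\{\ri(R):R\in\bar{S}\cap W\}\) is the right edge of its \(\bar{S}\)-prefix, and \(p_{W}\in[(1+\varepsilon)^{p'},(1+\varepsilon)^{p'+1})\) is the common value of rectangles in \(W\). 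The spanning hypothesis guarantees \(\lef(W)\le \bar{a}\le t\), which is why the intersection condition reduces to just \(j_W\le s\) and \(r_W>t\).

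For each class \((c',p')\in\G\) I would set \(B^{c',p'}:=\lfloor \frac{\varepsilon}{c K\log T}\bar{n}^{c',p'}\rfloor\) for a suitably large constant \(c\) and group the \(\bar{n}^{c',p'}\) active rows
\(\bar{\rows}^{c',p'}_{\mathrm{act}}:=\{W\in\bar{\rows}^{c',p'}:W\cap\bar{S}\neq\emptyset\}\)
into consecutive buckets of at most \(B^{c',p'}\) rows, once sorted by \(j_W\) and once (independently) by \(r_W\). This yields \(O(K\log T/\varepsilon)\) horizontal threshold lines and \(O(K\log T/\varepsilon)\) vertical threshold lines per class, hence \(O(|\G|K\log T/\varepsilon)\) in each direction across all classes. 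Let \(\mathcal{Q}\) be the common refinement of all these lines inside \(\bar{A}\); then \(|\mathcal{Q}|=O((|\G|K\log T/\varepsilon)^{2})\), matching the claimed step count.

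On a cell \(Q=[t_{L},t_{R})\times[s_{B},s_{T})\in\mathcal{Q}\) I would put
\[
\bar{f}|_{Q}\;:=\;\sum_{(c',p')\in\G}\ \sum_{W\in\bar{\rows}^{c',p'}_{\mathrm{act}}:\,j_{W}\le s_{B},\,r_{W}\ge t_{R}} p_{W}.
\]
For every \((t,s)\in Q\) each row in this sum has \(j_W\le s_B\le s\) and \(r_W\ge t_R>t\), so it contributes \(p_W\) to \(g(t,s)\); this is the lower bound. Conversely, a row counted by \(g(t,s)\) but not by \(\bar{f}|_{Q}\) has \(j_W\le s<s_T\), \(r_W>t\ge t_L\), and (\(j_W>s_B\) or \(r_W<t_R\)), so it lies in \(\{W:j_W\in(s_B,s_T)\}\cup\{W:r_W\in(t_L,t_R)\}\). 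Since \(Q\) is contained in a single \(s\)-bucket and a single \(t\)-bucket of every class \((c',p')\), each of those two sets meets \(\bar{\rows}^{c',p'}_{\mathrm{act}}\) in at most \(B^{c',p'}\) rows, so the per-class value error is at most \(2B^{c',p'}(1+\varepsilon)^{p'+1}\). Choosing \(c\) large enough (say \(c=8(1+\varepsilon)\)) bounds this by the allowed slack \((1+\varepsilon)^{p'}\lfloor\frac{\varepsilon}{2K\log T}\bar{n}^{c',p'}\rfloor\). A class contributes to the error only when it owns an active row with \(j_W\le s\); by the spanning property that row contains a rectangle intersecting \(L(s,t)\), so the class is indeed present in the slack sum on the right-hand side of \eqref{eq:functionInterpolation}.

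The main obstacle is simply the constant-matching between the per-class error \(2B^{c',p'}(1+\varepsilon)^{p'+1}\) and the prescribed slack \((1+\varepsilon)^{p'}\lfloor\frac{\varepsilon}{2K\log T}\bar{n}^{c',p'}\rfloor\): the extra factor \((1+\varepsilon)\) coming from the per-row value range and the factor \(2\) coming from the two bucketing directions have to be absorbed into the constant \(c\) in the definition of \(B^{c',p'}\), which only affects the hidden constant in \(|\mathcal{Q}|\). Everything else reduces to routine counting from the two bucket invariants applied independently per class.
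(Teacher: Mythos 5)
Your proof is correct, and it takes a genuinely different route from the paper's in the horizontal direction and in how $\bar f$ is defined. The paper also uses per-class quantile thresholds in the vertical direction (every $\approx \varepsilon\bar n^{c',p'}/(K\log T)$-th row of each class), but then, within each horizontal strip, it chooses the $t$-breakpoints \emph{adaptively}: it walks along the bottom row of the strip and cuts whenever the coverage function $g(\cdot,s')$ has dropped by more than the per-strip slack, bounding the number of cuts by a charging argument against $g(t_0',s')$; finally it sets $\bar f$ on each step to the value of $g$ at the bottom-right integer corner, so the lower bound follows from monotonicity of $g$ and the upper bound from the two "few rows between horizontal cuts / small drop between vertical cuts" invariants. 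You instead build a purely combinatorial product grid: per-class quantiles of the row heights $j_W$ \emph{and} of the prefix right-endpoints $r_W$ of the active rows, and you define $\bar f$ on a cell as the total value of rows dominating the cell ($j_W\le s_B$, $r_W\ge t_R$), so that the error rows at any point are exactly the per-class "straddlers" in one bucket per direction. Both constructions give $O((K|\mathcal G|\log T/\varepsilon)^2)$ steps and the same per-class slack; your symmetric grid avoids reasoning about monotonicity of $g$ and the adaptive cut-counting, at the price of a slightly larger constant $c$ in the bucket size (your choice $c=8(1+\varepsilon)$ does absorb the factor $2(1+\varepsilon)$ even after the floors, since either the small floor is $0$ or $x\ge 4$ gives $\lfloor x\rfloor\ge 3x/4\ge$ LHS). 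Two small points you should make explicit: the degenerate case $B^{c',p'}=0$ (i.e.\ $\bar n^{c',p'}<cK\log T/\varepsilon$), where you must place a threshold at every active row's $j_W$ and $r_W$ of that class to get zero error there — this keeps the per-class threshold count at $O(K\log T/\varepsilon)$, so the step bound is unaffected; and the tie-handling for equal $r_W$ values when arguing that a threshold-free interval meets only one bucket (routine, since a shared value would itself be a threshold).
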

\begin{proof}
	Let $g(t, s):=p(\{R\in \bar{S}: R\cap L(s,t)\neq \emptyset\})$ be the total amount covered on a ray $L(s,t)$ for each $(t,s)\in \bar{A}$. 
	Note that as we select a prefix in each row and all rows are spanning $\bar{A}$, the function $g(t,s)$ is non-increasing in $t$ (for each fixed $s$). And as the rays are downward oriented, the function is non-decreasing in $s$ (for fixed $t$).
	Now we need to show that we can approximate $g$ by a function $\bar{f}$ with few steps.
	Let $X=\frac{8 \cdot K \cdot \log T}{\varepsilon}$. For each row $W\in \rows$ and $R=[\lef(R), \ri(R))\times [j, j+1)\in W$ let $\projy(W):=j$ be the $y$-coordinate of the rectangles in $W$.
	Consider a pair $(c',p')$.
	Intuitively, we first split $A$ at certain horizontal lines, such that between two consecutive lines, the value of $g$ does not change a lot because there are not a lot of rows between two consecutive lines.
	Formally we show that there exists a set $\mathcal{V}^{c',p'}=\{V_0, \dots V_k\}\subseteq \bar{\rows}^{c',p'}$ with $k\leq X$ such that $V_0$ is the bottom row in $\bar{\rows}^{c',p'}$, $V_k$ is the top row in $\bar{\rows}^{c',p'}$ and for each $k'<k$ we have $|\{W\in \bar{\rows}^{c',p'}: \projy(V_{k'})<\projy(W)<\projy(V_{k'+1})\}|\leq \frac{\bar{n}^{c',p'}}{X}$.
	If $|\bar{\rows}^{c,p}|\leq X$, we chose $\mathcal{V}^{c,p}=\bar{\rows}^{c,p}$. 
	Otherwise choose $V_0$ $V_0$ as the bottom row in $\bar{\rows}^{c',p'}$ and then recursively choose $V_{k'+1}$ with maximal $\projy(V_{k'+1})$ such that $|\{W\in \bar{\rows}^{c',p'}: \projy(V_{k'})<\projy(W)<\projy(V_{k'+1})\}|\leq \frac{\bar{n}^{c',p'}}{X}$. 
	As we chose $V_{k'+1}$ with maximal $\projy(V_{k'+1})$, when we add one more row to this set, we violate the inequality, i.e., we have $|\{W\in \bar{\rows}^{c',p'}: \projy(V_{k'})<\projy(W)\leq \projy(V_{k'+1})\}|\geq \frac{\bar{n}^{c',p'}}{X}$ for all $k'<k-1$, which shows $k\leq X$.
	To simplify notation let $\projy(\infty):=\infty$.
	Let $\mathcal{V}:=\bigcup_{c',p'}\mathcal{V}^{c',p'}\cup\{\infty\}=\{W_1, \dots, W_{\ell'}\}$ be such that $\projy(W_\ell)<\projy(W_{\ell+1})$ for all $\ell<\ell'$.
	The $y$-coordinates of the steps of $f$ are always $y^B_Q=\projy(W_\ell)$ and $y^T_Q=\projy(W_{\ell+1})$ for some $\ell<\ell'$.
	
	Let $W_\ell \in \mathcal{V}$ with $\ell<\ell'$. We will define $f$ such that each step $Q$ of $f$ fulfills $y^B_Q=\projy(W_\ell)$ and $y^T_Q=\projy(W_{\ell+1})$.
	Towards this, let $s':=\projy(W_\ell)$ and let $\mathcal{H}:=\{(c',p'):\exists W\in \bar{\rows}^{c',p'} \text{ with }\projy(W)\leq s'\}$ denote all pairs for which there exists a row below $W_\ell$.
	We show that there exists a set $\{(t_0',s'), \dots ,( t_k',s')\}$ with $k\leq X\cdot |\mathcal{G}|$ such that $t'_0=\lef(\bar{A})$, $t'_k=\ri(\bar{A})$ and for each $k'<k$ we have $g(t'_{k'},s')-g(t'_{k'+1}-1, s')\leq \sum_{(c',p')\in \mathcal{H}}(1+\varepsilon)^{p'+1}\lfloor \frac{\bar{n}^{c',p'}}{X}\rfloor$.
	If $\ri(\bar{A})-\lef(\bar{A})\leq X$, we can just chose $t_k'=\lef(\bar{A})+k$.
	Otherwise choose $t_0'$ as required and then recursively choose $t_{k'+1}$ maximal such that $g(t'_{k'}, s')-g(t'_{k'+1}-1, s')\leq \sum_{(c',p')\in \mathcal{H}}(1+\varepsilon)^{p'+1}\lfloor \frac{\bar{n}^{c',p'}}{X}\rfloor$.
	By doing this, for each $k'\leq k-2$ we have $g(t'_{k'}, s')-g(t'_{k'+1}, s')\geq (1+\varepsilon)^{p'+1}\frac{\bar{n}^{c',p'}}{X}$ for some group $(c',p')\in \mathcal{H}$.
	As $g(t'_0, s')\leq \sum_{(c',p')\in \mathcal{H}}(1+\varepsilon)^{p'+1}\bar{n}^{c',p'}$, this shows $$k\leq \sum_{(c', p')\in \mathcal{H}}\frac{(1+\varepsilon)^{p'+1}\bar{n}^{c',p'}}{(1+\varepsilon)^{p'+1}\cdot \bar{n}^{c',p'}/X}= X\cdot |\mathcal{H}|\leq X\cdot |\mathcal{G}|.$$
	We have a step $Q=[t'_{k'}, t'_{k'+1})\times [s', \projy(W_{\ell+1}))$, i.e. for each $(t,s)$ with $s'\leq s< \projy(W_{\ell+1})$ and $t'_{k'}\leq t < t'_{k'+1}$ let $\bar{f}(t,s):=g( t'_{k'+1}-1, s')$ be the value of $g$ at the bottom-right integer point of $Q$.
	
	It remains to show that $\bar{f}$ fulfills the requirements of the lemma. First note that the number of steps is bounded by $(X\cdot|\mathcal{G}|+1)|V|\leq (X+1)^2|\mathcal{G}|^2\leq O((C|\mathcal{G}|\log T/\varepsilon)^2)$.
	
	Now fix a step $Q=[x^L_Q, x^R_Q)\times [y^B_Q, y^T_Q)$ and let $(t,s)\in Q$. Let $\ell<\ell'$ such that $y^B_Q=\projy(W_\ell)$ and $y^T_Q=\projy(W_{\ell+1})$.
	As $g( t', s')$ is non-increasing in $t'$ and non-decreasing in $s'$, we have that $\min_{(t', s')\in Q\cap \N_0^2}g(t', s')=g(x^T_Q-1, y^B_Q)$.
	So $g(t, s)\geq g(x^R_Q-1, y^B_Q)=\bar{f}(t, s)$, which is the left inequality in \eqref{eq:functionInterpolation}.
	Thus, only the right inequality in \eqref{eq:functionInterpolation} remains to be proven.
	By construction of $\mathcal{V}$, we know that the values $g(t, s)$ and $g(t,y^B_Q)$ do not differ by much, as there are not many rows between $s$ and $y^B_Q$.
	Formally, we have $g(t, s)-g(t,y^B_Q)\leq \sum_{(c',p')\in \mathcal{H}}(1+\varepsilon)^{p'+1}\lfloor \frac{\bar{n}^{c,p}}{X}\rfloor$ as rectangles intersecting with the ray $L(s,t)$, but not the ray $L(y^B_Q, t)$ can only be in rows belonging to groups in $\mathcal{H}$, from each group $(c', p')\in \mathcal{H}$ there can be only $\lfloor \frac{\bar{n}^{c',p'}}{X}\rfloor$ rows and within a row there can be only one rectangle intersecting with the ray.
	
	When we defined $t'_{k'}$, we ensured that $g(t,y^B_Q)-g(x^R_Q-1,y^B_Q)\leq \sum_{(c',p')\in \mathcal{H}}(1+\varepsilon)^{p'+1}\lfloor \frac{\bar{n}^{c',p'}}{X}\rfloor$. 
	Altogether, this implies 
	\begin{align*}
		g(t,s)-\bar{f}(t,s)&=g(t,s)-g(x^R_Q-1,y^B_Q)\\
		&\leq 2\sum_{(c',p')\in \mathcal{H}}(1+\varepsilon)^{p'+1}\lfloor \frac{\bar{n}^{c',p'}}{X}\rfloor\\
		&\leq \sum_{(c',p')\in \mathcal{H}}(1+\varepsilon)^{p'}\lfloor \frac{\varepsilon \cdot \bar{n}^{c',p'}}{2\cdot K \cdot \log T}\rfloor
	\end{align*}
	This completes the proof of the lemma.
\end{proof}
For each $\operatorname{set}\in \{\lsub, \rsub, \spn\}$ let $f_{\operatorname{set}}$ be the function obtained from applying Lemma~\ref{lem:rayApproximation} to $\bar{W}:=\bigcup_{(c', p')\in \mathcal{H}_{\operatorname{set} }}\rows^{c', p'}$, to the rectangles $\bar{S}:=S^+\cap  \bar{W}$ and the area $\bar{A}=A_\lsub$ if $\operatorname{set}=\lsub$ and the area $\bar{A}=A_\rsub$ if $\operatorname{set}\in \{\spn, \rsub\}$.
Let $$f(t,s)=\begin{cases}
	f_\lsub(t,s) &\text{ if } (t,s)\in A_\lsub)\\
	f_\rsub(t,s)+f_\spn(t,s) &\text{ if }(t,s)\in A_\rsub
\end{cases}$$
The function $f$ has at most $O((K|\mathcal{G}|\log T/\varepsilon)^2)$ steps in $A_\lsub$. Each step of $f$ in $A_\rsub$ is the intersection of a step of $f_\rsub$ and a step of $f_\spn$, so $f$ has at most $O((K|\mathcal{G}|\log T/\varepsilon)^4)$ steps in $A_\rsub$.
Therefore, we also only have $O((K|\mathcal{G}|\log T/\varepsilon)^4)$ rays in $\mathcal{L}^+_\lsub$ and $\L^+_\rsub$. For each $(c', p')\in \G$ we have that $c'$ is one of at most $M$ different values and $p'$ is one of at most $O(\log p_{\max}/\varepsilon)$ different values, thus $|\mathcal{G}|=O(M \log p_{\max}/\varepsilon)$. So the number of steps of $f$ (and thus the number of rays in $\L^+_\lsub\cup \L^+_\rsub$) is bounded by $(K\cdot M \log(T+p_{\max})/\varepsilon)^{O(1)}$ steps.

So it remains to show that $S^+_\lsub$ is a feasible solution for the left subproblem and $S^+_\rsub$ is a feasible solution for the right subproblem. 

\begin{lemma}\label{lem:coveringSLR}
	For each ray $L(s,t)\in \L'$ we have
	\begin{align*}
		p(\{R\in S^+:R\cap L(s,t)\neq \emptyset\})&= p(\{R\in S^+_\lsub:R\cap L(s,t)\neq \emptyset\}) \\
		&+p(\{R\in S^+_\rsub:R\cap L(s,t)\neq \emptyset\})\\
		&+p(\{R\in \APX_\mi:R\cap L(s,t)\neq \emptyset\})
	\end{align*}
\end{lemma}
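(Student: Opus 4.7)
The plan is to establish the identity rectangle-by-rectangle: fix the ray $L(s,t)\in\L'$ and, for each $R^*\in S^+$, show that $R^*$'s total contribution to the right-hand side equals $p(R^*)\cdot\mathbf{1}[R^*\cap L(s,t)\neq\emptyset]$. Summing over $R^*\in S^+$ then yields the claim. The key structural fact, which I would extract by inspecting the construction in Section~\ref{subsec:GSP-algorithm}, is that the algorithm processes each original rectangle $R^*\in\R'$ in exactly one of the following ways: either $R^*\in\APX_\mi$ and is excluded from $\R'_\lsub\cup\R'_\rsub$ entirely; or $R^*\not\in\APX_\mi$ and is assigned as a single intact rectangle to exactly one of $\R'_\lsub,\R'_\rsub$; or $R^*\not\in\APX_\mi$ and is split along the line $x=\mi(A)$ into the two pieces $R^*\cap A_\lsub\in\R'_\lsub$ and $R^*\cap A_\rsub\in\R'_\rsub$, both inheriting the value $p(R^*)$. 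Because the rectangles of $\R'$ are pairwise disjoint, the rectangles of $\R'_\lsub\cup\R'_\rsub$ that are subsets of $R^*$ are exactly those produced from $R^*$ itself in the above process.

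The case analysis is then short. If $R^*\in\APX_\mi$, no rectangle of $\R'_\lsub\cup\R'_\rsub$ is contained in $R^*$, so by the definition of $S^+_\lsub$ and $S^+_\rsub$ those two terms contribute nothing, while the $\APX_\mi$ term contributes exactly $p(R^*)\cdot\mathbf{1}[R^*\cap L(s,t)\neq\emptyset]$. If $R^*\not\in\APX_\mi$ is whole-assigned to $\R'_\lsub$, then $R^*$ itself is the unique element of $\R'_\lsub$ contained in $R^*$ and therefore lies in $S^+_\lsub$; it contributes $p(R^*)\cdot\mathbf{1}[R^*\cap L(s,t)\neq\emptyset]$ to the $S^+_\lsub$ term and nothing elsewhere, and the symmetric case for $\R'_\rsub$ is analogous.

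The only remaining case is when $R^*$ is split. The pieces $R^*\cap A_\lsub$ and $R^*\cap A_\rsub$ are each contained in $R^*$, so they lie in $S^+_\lsub$ and $S^+_\rsub$ respectively, both with value $p(R^*)$. Since these pieces partition the $x$-range of $R^*$ at $\mi(A)$ and $L(s,t)$ is a vertical ray at the single $x$-coordinate $t$, the ray meets at most one of the pieces, and it meets exactly one iff $L(s,t)\cap R^*\neq\emptyset$. Hence the combined contribution of $R^*$ to the $S^+_\lsub$ and $S^+_\rsub$ terms is $p(R^*)\cdot\mathbf{1}[R^*\cap L(s,t)\neq\emptyset]$, while the $\APX_\mi$ contribution is zero. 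I do not expect a real obstacle here; the only care needed is to verify that the trichotomy ``whole left / whole right / split at $\mi(A)$'' is indeed exhaustive across all the row types (centered, left- and right-sticking-in, spanning, and left-sticking-in treated as centered), which is immediate from reading the per-row-type rules in Section~\ref{subsec:GSP-algorithm}.
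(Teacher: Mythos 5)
Your proof is correct and follows essentially the same accounting argument as the paper: the paper sums row by row (using that all rectangles of a row have equal value and that a vertical ray meets at most one of the pieces $W'$ of a row), whereas you sum rectangle by rectangle over $S^+$ (using pairwise disjointness of $\R'$), which is the same bookkeeping organized differently. Both versions rest on exactly the facts you isolate — the trichotomy ``kept in $\APX_\mi$ / passed intact to one subproblem / split at $\mi(A)$'', value preservation of the split pieces, and (implicitly, just as in the paper) that $\APX_\mi\subseteq S^+$ under the correct guesses.
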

\begin{proof}
	Let $W\in \rows$ be a row, let $R=[\lef(R),\ri(R))\times[j, j+1)\in W$ be a rectangle in this row and let $W':=\{R'=[\lef(R'), \ri(R'))\times[j', j'+1)\in \R_\lsub \cup \R_\rsub \cup \APX_\mi:j'=j\}$ denote, intuitively,  the rectangles in row $W$ in the subproblems. 
	Let $L(s,t)\in \L'$.
	By construction there is at most one rectangle  $R\in (S^+_\rsub\cup S^+_\lsub \cup\APX_\mi)\cap W'$ with $\lef(R)\leq t<\ri(R)$.
	This implies
	\begin{align*}
		p(\{R\in S^+\cap W:R\cap L(s,t)\neq \emptyset\})&= p(\{R\in S^+_\lsub\cap W':R\cap L(s,t)\neq \emptyset\}) \\
		&+p(\{R\in S^+_\rsub\cap W':R\cap L(s,t)\neq \emptyset\})\\
		&+p(\{R\in \APX_\mi\cap W:R\cap L(s,t)\neq \emptyset\})
	\end{align*}
	And a union bound over all rows $W$ yields the lemma.
\end{proof}
Now we show that $S_\lsub$ and $S_\rsub$ are feasible solutions.
\begin{lemma}\label{lem:LRfeasibility}
	The set $S_{\lsub}$ is a feasible solution for the left subproblem and the set $S_\rsub$ is a feasible solution for the right subproblem.
\end{lemma}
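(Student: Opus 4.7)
The plan is to verify the two feasibility conditions for $S^+_\lsub$ in the left subproblem $(A_\lsub, \L'_\lsub \cup \L^+_\rsub \cup \L'_{\lsub, \mathrm{out}}, \R_\lsub)$; the case of $S^+_\rsub$ will follow by a symmetric argument. For the prefix constraint, I will observe that every original row $W \in \rows'$ is split by the assignment rules into at most two consecutive parts (one contributing to $\R_\lsub$, one to $\R_\rsub$), with a possible single middle rectangle $R_\mi$ cut into halves $R_{\mi, \lsub}, R_{\mi, \rsub}$ appearing on both sides. Because $S^+$ selects a prefix of every original row and the two halves enter $S^+_\lsub, S^+_\rsub$ exactly when $R_\mi \in S^+$, the definition $S^+_\lsub = \{R \in \R_\lsub : \exists R' \in S^+,\ R \subseteq R'\}$ immediately yields a prefix in each row of $\R_\lsub$.

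For the covering part, three families of rays must be checked. Rays $L \in \L'_{\lsub, \mathrm{out}}$ are trivially satisfied because their demand was defined to equal $p(S^+_\lsub \cap L)$. The essential case is a ray $L' \in \L'_\lsub$ coming from an original ray $L \subseteq A_\lsub$ with reduced demand $d(L) - f(t, s) - p(\APX_\mi \cap L)$. I would begin with feasibility of $S$ and then add the slack guaranteed by Lemma~\ref{lem:propertiesS+} to obtain $p(S^+ \cap L) \geq d(L) + \sigma$, where $\sigma$ sums $(1+\varepsilon)^{p'} \lfloor \tfrac{\varepsilon}{2K \log T} n_\lsub^{c', p'} \rfloor$ over those groups $(c', p') \in \mathcal{H}_\lsub$ whose top-part rows contain a rectangle intersecting $L$. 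Applying \Cref{lem:coveringSLR} then splits $p(S^+ \cap L)$ into contributions from $S^+_\lsub$, $S^+_\rsub$ and $\APX_\mi$; since every rectangle of $S^+_\rsub$ meeting $L$ must sit in a left-sticking-in row assigned to $\R_\rsub$, i.e., in $\bar{W}_\lsub$, the upper bound of \Cref{lem:rayApproximation} gives $p(S^+_\rsub \cap L) \leq p(\bar{S}_\lsub \cap L) \leq f_\lsub(t, s) + \eta$. Combining these inequalities and using that $\sigma$ dominates the error $\eta$ group by group will yield $p(S^+_\lsub \cap L) \geq d(L) - f(t, s) - p(\APX_\mi \cap L) = d(L')$.

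For an artificial ray $L_Q \in \L^+_\rsub$ located at the bottom-right integer point of a step $Q \subseteq A_\rsub$ with demand $f(t, s) = f_\rsub(t, s) + f_\spn(t, s)$, I would apply the lower bound of \Cref{lem:rayApproximation} for $\operatorname{set} \in \{\rsub, \spn\}$ to get $f_{\operatorname{set}}(t, s) \leq p(\bar{S}_{\operatorname{set}} \cap L_Q)$, hence $d(L_Q) \leq p((\bar{S}_\rsub \cup \bar{S}_\spn) \cap L_Q)$. Every rectangle in $\bar{S}_\rsub \cup \bar{S}_\spn$ meeting $L_Q$ lies in a row of $\rows^{c', p'}_\rsub$ or $\rows^{c', p'}_\spn$ whose $\mi(A)$-crossing rectangle also reaches into $A_\lsub$; by the assignment rule such a row is placed in $\R_\lsub$, so those rectangles belong to $S^+_\lsub$, yielding $p(S^+_\lsub \cap L_Q) \geq d(L_Q)$.

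The hard part will be the termwise comparison in the middle case: confirming that the slack $\sigma$ from \Cref{lem:propertiesS+} (whose coefficient is $n_\lsub^{c', p'}$, counting rows in which $S$ actually selects a rectangle intersecting $A$) suffices to absorb the error of \Cref{lem:rayApproximation} (whose coefficient is $\bar{n}^{c', p'}$, counting rows with any $S^+$-rectangle). This will require tracing carefully which rows can contribute an intersecting rectangle to the specific ray $L$, ruling out rows counted in $\bar{n}^{c', p'}$ but whose selected rectangles lie to the left of $A_\lsub$ and thus cannot meet $L$. A secondary concern is the separate handling of groups with $n_{\operatorname{set}}^{c', p'} < 2K \log T / \varepsilon$: their rectangles are guessed directly into $\APX_\mi$, and their coverage contribution must be bookkept through $p(\APX_\mi \cap L)$ rather than through the slack argument so that nothing is double-counted or left uncovered.
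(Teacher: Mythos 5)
Your proposal is correct and follows essentially the same route as the paper's proof: the prefix property from the construction of $S^+_\lsub$, the rays in $\L'_{\lsub,\mathrm{out}}$ by definition of their demand, the artificial rays via the lower bound of Lemma~\ref{lem:rayApproximation} together with the fact that the relevant (top) rows are assigned to the left subproblem, and the main case by combining Lemma~\ref{lem:coveringSLR}, feasibility of $S$, the slack from Lemma~\ref{lem:propertiesS+}, and the upper bound of Lemma~\ref{lem:rayApproximation}, compared group by group. The ``hard part'' you single out --- matching the slack coefficient $n^{c',p'}_{\lsub}$ against the error coefficient $\bar n^{c',p'}$ --- is precisely where the paper is terse (it uses the same quantity in both bounds), so your extra caution there goes beyond, rather than falls short of, the paper's own argument.
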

\begin{proof}
	We start with $S_\lsub$.
	It follows directly from the construction of $S_\lsub$ that it contains a prefix in each row.
	The other necessary property for feasibility is to show that all demands are satisfied.
	For that, consider a ray $L(s,t)\in \L'_\lsub\cup \L^+_\rsub\cup \L'_{\lsub, \operatorname{out}}$.
	
	If $L(s,t)\in \L'_{\lsub, \operatorname{out}}$, recall that $d(L(s,t))=p(\{R\in S^+_\lsub:R\cap L(s,t)\neq \emptyset\})$. So such a demand ray is covered by $S^+_\lsub$.
	And if $L(t,s)\in \L^+_\rsub$ the demand is $f(s,t)$ and as $f_\lsub$ was chosen according to Lemma~\ref{lem:rayApproximation}, the solution $S_\lsub$ covers this ray.
	
	So suppose that $L(s,t)\in \L'_\lsub$ and let $L'(s,t)\in \L'$ be the ray with the same coordinates as $L(s,t)$.
	Then $d(L(s,t))=d(L'(s,t))-f(t,s)-p(\{R\in \APX_\mi:R\cap L(s,t) \neq \emptyset\})$. So the ray is covered if and only if $p(\{R\in S^+_\lsub :R\cap L(s,t)\neq \emptyset\})\geq d(L'(s,t))-f(t,s)-p(\{R\in \APX_\mi:R\cap L(s,t) \neq \emptyset\})$.
	Using the equality from Lemma~\ref{lem:coveringSLR}, this can be rearranged to
	\begin{align}\label{eq:CoveringReformulation}
		p(\{R\in S^+ :R\cap L(s,t)\neq \emptyset\})-d(L'(s,t))&\geq p(\{R\in S_\rsub:R\cap L(s,t)\neq \emptyset\})-f(s, t)
	\end{align}
	Let $\mathcal{H}:=\{(c',p'):\exists R\in \bigcup_{W\in \rows^{c,p}_{\lsub, \text{top}}}W \text{ with }R\cap L(s,t)\neq \emptyset\}$ denote all $(c', p')$ for which there exists a left-sticking-in row intersecting with the ray. By the second inequality of Lemma~\ref{lem:rayApproximation}, we can upper bound the right hand side of \eqref{eq:CoveringReformulation} by $\sum_{(c',p')\in \mathcal{H}}(1+\varepsilon)^{p'}\left\lfloor\frac{\varepsilon}{2\cdot K\log T}\bar{n}^{c',p'}\right\rfloor$. Se we have to show that the latter value is also a lower bound for the left hand side.
	
	Intuitively, if we would have $S$ instead of $S^+$ on the left hand side of \eqref{eq:CoveringReformulation}, we would have a lower bound of $0$ for the left hand side. But as $S^+$ covers substantially more than $S$ on each ray, we will get the desired lower bound.
	For any group $(c',p')\in \mathcal{H}$, the ray $L(s,t)$ intersects with a rectangle from a row in $\rows^{c',p'}_{\lsub, \text{top}}$.
	So by the last property of Lemma~\ref{lem:propertiesS+} we have \begin{align*}
		p(\{R\in S^{+}\cap \R_{\lsub}^{c',p'}:R\cap L(s,t)\neq \emptyset\})& \geq p(\{R \in S\cap\R_{\lsub}^{c',p'}: R\cap  L(s,t)\neq \emptyset\})\\
		&+(1+\varepsilon)^{p'}\left\lfloor\frac{\varepsilon}{2\cdot K\log T}\bar{n}^{c',p'}\right\rfloor.
	\end{align*}
	So altogether, we obtain
	\begin{align*}
		p(\{R\in S^{+}:R\cap L(s,t)\neq \emptyset\})
		&= p(\{R \in S\setminus\bigcap_{(c', p')\in \mathcal{H}}\R_{\lsub}^{c',p'}: R\cap  L(s,t)\neq \emptyset\})\\
		&+\sum_{(c', p')\in \mathcal{H}}p(\{R\in S^{+}\cap \R_{\lsub}^{c',p'}:R\cap L(s,t)\neq \emptyset\})\\
		&\geq p(\{R \in S\setminus\bigcap_{(c', p')\in \mathcal{H}}\R_{\lsub}^{c',p'}: R\cap  L(s,t)\neq \emptyset\})\\
		&+\sum_{(c', p')\in \mathcal{H}} p(\{R \in S\cap\R_{\lsub}^{c',p'}: R\cap  L(s,t)\neq \emptyset\})\\
		&+\sum_{(c', p')\in \mathcal{H}}(1+\varepsilon)^{p'}\left\lfloor\frac{\varepsilon}{2\cdot K\log T}\bar{n}^{c',p'}\right\rfloor.\\
		&\geq p(\{R\in S:R\cap L(s,t)\neq \emptyset\})\\
		&+\sum_{(c', p')\in \mathcal{H}}(1+\varepsilon)^{p'}\left\lfloor\frac{\varepsilon}{2\cdot K\log T}\bar{n}^{c',p'}\right\rfloor.\\
		&\geq d(L'(s,t)) 
		+\sum_{(c', p')\in \mathcal{H}}(1+\varepsilon)^{p'}\left\lfloor\frac{\varepsilon}{2\cdot K\log T}\bar{n}^{c',p'}\right\rfloor.
	\end{align*}
	This yields \eqref{eq:CoveringReformulation} and thus shows that the ray $L(s,t)$ is covered. So $S^+_\lsub$ is a feasible solution for the left subproblem.
	The proof for $S^+_\rsub$ being a feasible solution for the right subproblem is analogous.
\end{proof}
This also completes the proof of Lemma~\ref{lem:guess-rays}.
\subsection{Proof of Lemma~\ref{lem:approximationRatioRoot}}\label{sec:ProofApproxRatio}
First, we show that the constructed reference solutions $S^+_\lsub$ and $S^+_\rsub$ are not too expensive. For that, we use the introduced cost function $c_\APX$. Whether a row is centered, left-sticking-in or right-sticking-in depends on the area $A$ of the subproblem, so also $c_\APX$ depends on $A$. Therefore, let \begin{align*}
	c_{\APX}(A',S'):=\:&2\cdot c(S'\cap\R_{\ce}(A'))+c(S'\cap\R_{\ri}(A'))+c(S'\cap\R_{\spn}(A'))\\
	&+c(\{R\in S'\cap\R_{\lef}(A'):\lef(R)\le\lef(A')\})
	+2\cdot c(\{R\in S'\cap\R_{\lef}(A'):\lef(R)>\lef(A')\})
\end{align*}
where $\R_{\ce}(A')$, $\R_{\ri}(A')$, $\R_{\spn}(A')$ and $\R_{\lef}(A')$ denotes the centered, right-sticking-in, spanning and left-sticking-in rows w.r.t. the area $A'$.
During the algorithm, we redefine the cost of some rectangles in centered rows (i.e., the rectangle $\R_{\operatorname{leftmost}}$) or left-sticking-in rows $W$ (when we treat $W\setminus \APX_\mi$ as a centered row), therefore we denote by $c'(R)$ the cost of a rectangle $R$ after redefining the costs.
\begin{lemma}\label{lem:costLRsubproblem}
	We have $c_\APX(A, S^+)\geq  c'_\APX(A_\lsub, S^+_\lsub) +c'_\APX(A_\rsub, S^+_\rsub)+c(\APX_\mi)$.
\end{lemma}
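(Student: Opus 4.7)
The plan is to prove the inequality row by row: for each row $W\in \rows'$, I will show that its contribution to the LHS $c_\APX(A,S^+)$ is at least its combined contribution to the RHS $c'_\APX(A_\lsub,S^+_\lsub)+c'_\APX(A_\rsub,S^+_\rsub)+c(\APX_\mi)$, and summing over all rows will complete the proof. Throughout, I use that $\APX_\mi\subseteq S^+$, that $\APX_\mi\cap W$ is counted exactly once on the RHS (in $c(\APX_\mi)$, not in $S^+_\lsub$ or $S^+_\rsub$, since the algorithm removes $\APX_\mi$ before passing the remaining rectangles to the subproblems), and that $S^+_\lsub\cap W$ together with $S^+_\rsub\cap W$ covers precisely $(S^+\cap W)\setminus \APX_\mi$ (possibly with $R_\mi$ split in two).

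For centered rows we have $\APX_\mi\cap W=\emptyset$, so only the $c'_\APX$ terms contribute to the RHS. If $W\subseteq A_\lef$ or $W\subseteq A_\ri$ the row remains centered in its sub-area and both sides equal $2c(S^+\cap W)$. If $W$ straddles $\mi(A)$, the algorithm either cuts the unique $\mi(A)$-crossing rectangle $R_\mi$ into two pieces with costs $c(R_{\mi,\lef})=c(R_\mi)$ and $c(R_{\mi,\ri})=c(W_\lef)+c(R_\mi)$, or, if no such $R_\mi$ exists, inflates the cost of the leftmost rectangle $R_{\text{leftmost}}\in W_\ri$ by $c(W_\lef)$. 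In either situation the resulting $A_\lef$-part of the row is right-sticking-in (factor $1$) and the $A_\ri$-part is left-sticking-in with the border-crossing rectangle contributing at factor $1$ and the remaining rectangles at factor $2$. A direct sub-case analysis on where the prefix $S^+\cap W$ terminates (inside $W_\lef$, at $R_\mi$ or $R_{\text{leftmost}}$, or strictly past it) then yields equality, with the extra $c(W_\lef)$ in the redefined cost exactly compensating for the loss of the factor-$2$ treatment on the right side.

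For right-sticking-in and spanning rows the LHS factor attached to $W$ is $1$. The algorithm's case distinction for assigning $W\setminus \APX_\mi$ is designed so that the row's classification in whichever sub-area it lands also has factor $1$: in the simple case $W\setminus \APX_\mi$ consists only of rectangles with $\lef(R)\ge \mi(A)$ and becomes right-sticking-in or spanning in $A_\ri$; in the alternative case $W\setminus \APX_\mi$ retains a right-sticking-in structure in $A_\lef$. Hence the per-row inequality reduces to
\[
c(S^+\cap W) \;\ge\; c(\APX_\mi\cap W) + c((S^+\cap W)\setminus \APX_\mi),
\]
which holds with equality. For left-sticking-in rows I distinguish the three algorithmic sub-cases: when $W\setminus \APX_\mi$ is treated as centered, the centered-row analysis above applies to its rectangles, and the $\APX_\mi\cap W$ part contributes at factor $\ge 1$ on the LHS but only at factor $1$ on the RHS; when $W\setminus \APX_\mi$ is sent to $\R'_\lef$, the factors are preserved since $\lef(A)=\lef(A_\lef)$; and when $W\setminus \APX_\mi$ is sent to $\R'_\ri$, each rectangle $R$ has a factor in $A_\ri$ that is at most its factor in $A$, because the condition $\lef(R)\le \lef(A_\ri)=\mi(A)$ is weaker than $\lef(R)\le \lef(A)$. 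In each sub-case a rectangle-by-rectangle comparison shows that the per-row LHS contribution is at least the per-row RHS contribution.

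The main obstacle is the intricate per-case bookkeeping, in particular ensuring that $\APX_\mi\cap W$ is never double-counted (we rely on its exclusion from $\R_\lef\cup \R_\ri$, so that $S^+_\lsub$ and $S^+_\rsub$ are effectively defined only over the rectangles actually passed to the subproblems) and verifying that the split rectangles' redefined costs exactly cover the lost mass in the centered case. The asymmetric factors $1/2$ in the definition of $c_\APX$ for left-sticking-in rows, together with the cost modifications $c(R_{\mi,\ri})=c(W_\lef)+c(R_\mi)$ and the boost to $R_{\text{leftmost}}$, were chosen precisely to make this per-row balance go through.
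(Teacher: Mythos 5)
Your proposal is correct and follows essentially the same route as the paper's proof: a row-by-row comparison over the four row types and the algorithm's assignment cases, using $\APX_\mi\subseteq S^+$, the exclusion of $\APX_\mi$ from the subproblem rectangle sets to avoid double counting, and the redefined costs of the split/boosted rectangles to balance the centered case. The only caveats are cosmetic: in some subcases the per-row relation is an inequality rather than the claimed equality (e.g.\ when the prefix stops before $R_\mi$ or $R_{\mathrm{leftmost}}$), and a row passed to a sub-area may become spanning or right-sticking-in rather than ``remain centered,'' but in every such instance the factor only decreases, so the direction you need still holds.
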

\begin{proof}
	Let $W\in \rows$ be a row and let $R=[\lef(R),\ri(R))\times[j, j+1)\in W$ be a rectangle in this row. Furthermore let $W':=\{R'=[\lef(R'), \ri(R'))\times[j', j'+1)\in \R_\lsub \cup \R_\rsub \cup \APX_\mi:j'=j\}$ denote the corresponding rectangles in the left and right subproblem and the selected rectangles.
	We show that
	\begin{equation}\label{eq:costS+perRow}
		c_\APX(A, S^+\cap W)\geq  c'_\APX(A_\lsub, S^+_\lsub\cap W') +c'_\APX(A_\rsub, S^+_\rsub\cap W')+c(\APX_\mi\cap W').
	\end{equation}
	We make a case distinction by the type of row $W$.
	
	First consider a centered row $W$.
	If $W_\lsub=W$ or $W_\rsub=W$ we have $S^+_\lsub\cap W'=W$ or $S^+_\rsub\cap W'=W$, which directly yields \eqref{eq:costS+perRow}.
	So now assume that $W_\lsub\neq W \neq W_\rsub$.
	Then the row $W_\lsub\cup \{R_{\mi, \lsub}\}$ is right-sticking-in in $A_\lsub$ and the row $W_\rsub \cup \{R_{\mi, \lsub}\}$ is left-sticking-in in $A_\rsub$.
	Suppose that $R_\mi$ exists and $\mi(W)\in S^+$.
	Then $S^+_\lsub \cap W'=W_\lsub \cup \{R_{\mi, \lsub}\}$ and $S^+_\rsub\cap W'=S^+\cap W_\rsub \cup \{R_{\mi, \rsub}\}$.
	Recall that $c(R_{\mi, \rsub})=c(W_\lsub\cup \{R_{\mi, \lsub}\})$ and $c(R_{\mi, \lsub})=c(R_\mi)$.
	So \begin{align*}
		&c_\APX(A_\lsub, S^+_\lsub\cap W')+c_\APX(A_\rsub, S^+_\rsub\cap W')\\
		&=c(W_\lsub \cup \{R_{\mi, \lsub}\})+c(R_{\mi, \rsub})+2c( S^+\cap W_\rsub)\\
		&=2c(W_\lsub \cup \{R_\mi\} \cup (S^+\cap W_\rsub))=2c(S^+\cap W)\\
		&=c_\APX(A, S^+\cap W)
	\end{align*}
	And if $\mi(W)\not \in S^+$, we have $S_\lsub \cap W'=S^+\cap W$. This yields $c_\APX(S^+\cap W)=2c(S^+\cap W)\geq c(S^+\cap W)=c_\APX(A_\lsub, S_\lsub\cap W_\rsub)$ and thus yields \eqref{eq:costS+perRow} as well.
	So suppose that $R_\mi $ does not exist and $R_{\mathrm{leftmost}}\in S^+$. Recall that $c'(R_{\mathrm{leftmost}})=c(R_{\mathrm{leftmost}})+c(W_\lsub)$, which yields
	\begin{align*}
		&c'_\APX(A_\lsub, S^+_\lsub\cap W')+c'_\APX(A_\rsub, S^+_\rsub\cap W')\\
		&=c(W_\lsub )+c'(R_{\mathrm{leftmost}})+2c( S^+\cap W_\rsub\setminus R_{\mathrm{leftmost}})\\
		&=2c(W_\lsub )+c(R_{\mathrm{leftmost}})+2c( S^+\cap W_\rsub\setminus R_{\mathrm{leftmost}})\\
		&\leq 2c(W_\lsub  \cup (S^+\cap W_\rsub))=2c(S^+\cap W)\\
		&=c_\APX(A, S^+\cap W)
	\end{align*}
	And if $R_{\mathrm{leftmost}}\in S^+$, we again have $S_\lsub \cap W'=S^+\cap W$ which yields $c_\APX(S^+\cap W)=2c(S^+\cap W)\geq c(S^+\cap W)=c_\APX(A_\lsub, S_\lsub\cap W_\rsub)$ and thus completes the proof of \eqref{eq:costS+perRow} for centered rows.
	
	Next suppose that $W$ is a right-sticking-in row.
	Furthermore, suppose that each rectangle in $R\in W\setminus \APX_\mi$ satisfies $\mi(A)\leq \lef(R)$. Note that $W'\cap R'_\rsub$ is also right-sticking-in.
	Then $S^+_\lsub\cap W'=(S^+\cap W )\setminus \APX_\mi$ and thus
	\begin{align*}
		c_\APX(A, S^+\cap W)&=c(S^+\cap W)=c(\APX_\mi\cap W')+c(S^+_\lsub \cap W')\\
		&=c'_\APX(A_\rsub, S^+_\rsub\cap W')+c(\APX_\mi\cap W')
	\end{align*}
	So suppose that there is a rectangle $R\in W \setminus \APX_\mi$ intersecting $A_\lsub$.
	For such a row $S^+_\lsub\cap W'=S^+\cap W$ and $W_\lsub$ is right-sticking-in in $A_\lsub$.
	Thus $c_\APX(A, S^+\cap W)=c(S^+\cap W)=c(S^+_\lsub\cap W')=c_\APX(A_\lsub, S^+_\lsub\cap W')$, completing the proof of \eqref{eq:costS+perRow} for right-sticking-in rows.
	The proof for spanning rows is exactly the same as for right-sticking-in rows.
	
	Finally suppose that $W$ is a left-sticking-in row.
	Furthermore suppose that each rectangle $R\in W\setminus \APX_\mi$ is contained in the interior of $A$. The same argument as for centered rows shows that $c_\APX'(A_\lsub,  S^+_\lsub\cap W')+c_\APX'(A_\rsub,  S^+_\rsub\cap W')\leq 2c(W\setminus \APX_\mi)$.
	As $W$ is left-sticking-in there must be a rectangle $R\in W$ with $\lef(R)\leq \lef(A)$, which is therefore not contained in the interior of $A$.
	So $\APX_\mi \cap W\neq \emptyset$ and thus $c_\APX(A,S^+\cap W)\geq c(\APX_\mi\cap W)+2c(W\setminus \APX_\mi)$, which yields \eqref{eq:costS+perRow}.
	So suppose that there is a rectangle $R\in W\setminus \APX_\mi$ with $\lef(R)\leq \lef(A)$. In this case, we either assign all rectangles in $ W\setminus \APX_\mi$ to the left subproblem or assign them all to the right subproblem. In both cases, the row is left-sticking-in in the respective subproblem. So in both cases \begin{align*}
		&c'_\APX(A_\lsub, S^+_\lsub\cap W')+c'_\APX(A_\rsub, S^+_\rsub\cap W')+c(\APX_\mi\cap W')\\
		&\leq c(\{R\in W\setminus\APX:\lef(R)\leq \lef(A)\})+2c(\{R\in W\setminus\APX:\lef(R)> \lef(A)\})+c(\APX_\mi\cap W')\\
		&=c_\APX(A, S^+\cap W)
	\end{align*}
	This completes the proof of \eqref{eq:costS+perRow} for left-sticking-in rows.
	And adding up \eqref{eq:costS+perRow} for each row $W$ yields the lemma.
\end{proof}
Now, we show that $\APX$ is not more expensive than the selected rectangles $\APX_\mi$ and the solutions $\APX_\lsub$ and $\APX_\rsub$ combined.
\begin{lemma}\label{lem:costAPX}
	We have $c(\APX)\leq c'(\APX_\lsub) +c'(\APX_\rsub)+c(\APX_\mi)$.
\end{lemma}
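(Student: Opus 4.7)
The plan is to reduce the lemma to a row-by-row inequality. For each row $W\in\rows$, let $W'$ denote the set of (possibly split) rectangles of the subproblems $(A_\lef,\cdot,\R'_\lef)$ and $(A_\ri,\cdot,\R'_\ri)$ that correspond to row $W$; I would establish
$$c(\APX\cap W)\le c'(\APX_\lsub\cap W')+c'(\APX_\rsub\cap W')+c(\APX_\mi\cap W')$$
and then sum over all $W\in\rows$ to conclude. By the definition of $\APX$, a rectangle $R\in W$ lies in $\APX$ iff some rectangle in $\APX_\mi\cup\APX_\lsub\cup\APX_\rsub$ in the same row (identifying $R_{\mi,\lef}$ and $R_{\mi,\ri}$ with $R_\mi$) extends strictly to the right of $\lef(R)$. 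Hence $\APX\cap W$ is exactly the prefix of $W$ reaching the rightmost such witness, and since each of $\APX_\mi\cap W$, $\APX_\lsub\cap W'$, and $\APX_\rsub\cap W'$ is itself a prefix of its respective row, locating the rightmost witness only requires a small case analysis.

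The substantive case is a centered row with $W_\lef\ne W\ne W_\ri$. Assume first that $R_\mi$ exists: then $\R'_\ri$ contains $W_\ri\cup\{R_{\mi,\ri}\}$ with $R_{\mi,\ri}$ as its leftmost rectangle, so the prefix constraint forces $\APX_\rsub\cap W'$ to be either empty or to contain $R_{\mi,\ri}$. The key subcase is $R_{\mi,\ri}\in\APX_\rsub$: there $\APX\cap W=W_\lef\cup\{R_\mi\}\cup(\APX_\rsub\cap W_\ri)$, whose cost is exactly $c'(R_{\mi,\ri})+c(\APX_\rsub\cap W_\ri)=c'(\APX_\rsub\cap W')$ by the very definition $c'(R_{\mi,\ri}):=c(W_\lef)+c(R_\mi)$, and any contribution from $\APX_\lsub$ only helps. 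If $R_{\mi,\ri}\notin\APX_\rsub$ but $R_{\mi,\lef}\in\APX_\lsub$, then $\APX\cap W=W_\lef\cup\{R_\mi\}$ and the definition $c'(R_{\mi,\lef}):=c(R_\mi)$ again yields equality; the remaining subcases collapse to a pure prefix of $W_\lef$ shared by $\APX\cap W$ and $\APX_\lsub\cap W'$. The variant where $R_\mi$ does not exist is handled identically using the inflation $c'(R_{\mathrm{leftmost}}):=c(R_{\mathrm{leftmost}})+c(W_\lef)$. Finally, if $W_\lef=W$ or $W_\ri=W$ the entire row belongs to one subproblem and the inequality is trivial.

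The remaining row types involve no cost modifications. For right-sticking-in and spanning rows, $W\setminus\APX_\mi$ is passed entirely to a single subproblem, and the selected prefix there pins down precisely the matching prefix of $W$ in $\APX$ at equal cost, while $\APX_\mi\cap W$ (a prefix of $W$ by construction) appears on both sides. For left-sticking-in rows, either $W\setminus\APX_\mi$ sticks into the left boundary, in which case it is handed to a single subproblem without cost modification, or $W\setminus\APX_\mi$ lies entirely in the interior of $A$ and the centered-row analysis above applies verbatim to $W\setminus\APX_\mi$, with $\APX_\mi\cap W$ contributing $c(\APX_\mi\cap W)$ on both sides. The main work is simply keeping the bookkeeping straight across these subcases; crucially, no factor of $2$ is incurred anywhere in this argument, since every rectangle in a row is charged to exactly one side of the $A_\lef/A_\ri$ partition.
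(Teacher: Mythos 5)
Your proof is correct and follows essentially the same route as the paper's: a row-by-row decomposition $c(\APX\cap W)\le c'(\APX_\lsub\cap W')+c'(\APX_\rsub\cap W')+c(\APX_\mi\cap W)$, verified by a case analysis over row types using the prefix structure of $\APX$ and the fact that the redefined costs $c'(R_{\mi,\ri})$ (respectively $c'(R_{\mathrm{leftmost}})$) absorb exactly the cost of the forced left-hand rectangles. The only difference is cosmetic bookkeeping: the paper organizes the cases by whether $R_\mi$ is split rather than by row type, but the argument is the same.
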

\begin{proof}
	Let $W\in \rows$ be a row and let $R=[\lef(R),\ri(R))\times[j, j+1)\in W$ be a rectangle in this row. Furthermore let $W':=\{R'=[\lef(R'), \ri(R'))\times[j', j'+1)\in \R_\lsub \cup \R_\rsub \cup \APX_\mi:j'=j\}$ denote the corresponding rectangles in the left and right subproblem and the selected rectangles in $\APX_\mi$. We show that $c(\APX\cap W)\leq c(\APX_\lsub\cap W') +c(\APX_\rsub\cap W')+c(\APX_\mi\cap W)$.
	Note that $\APX_\mi \cap W$ is always a prefix of row $W$.
	
	First suppose that the rectangle $R_\mi$ exists and the algorithm splits this rectangle into $R_{\mi, \lsub}$ and $R_{\mi, \rsub}$, which happens for some centered or left-sticking-in rows. In this case we have $c(R_{\mi, \lsub})=c(R_\mi)$ and $c'(R_{\mi, \rsub})=c(R'_\lsub\cap W')$. So if $\APX_\rsub \cap W'=\emptyset$ and $R_{\mi, \lsub} \not \in \APX_\lsub$, we have $\APX\cap W=(\APX_\mi\cap W)\cup (\APX_\lsub\cap W')$ and thus $c(\APX\cap W)=c(\APX_\mi\cap W)+ c(\APX_\lsub\cap W')$. And if $\APX_\rsub \cap W'=\emptyset$ and $R_{\mi, \lsub} \in \APX_\lsub$, we have $\APX\cap W=(\APX_\mi\cap W)\cup (\APX_\lsub\cap W')\setminus \{R_{\mi, \lsub}\} \cup \{R_\mi\}$ and thus $c(\APX\cap W)=c(\APX_\mi\cap W)+ c(\APX_\lsub\cap W')-c(R_{\mi, \lsub})+c(R_\mi)=c(\APX_\mi\cap W)+ c(\APX_\lsub\cap W')$.
	And if $\APX_\rsub \cap W'\neq \emptyset$ we have $\APX\cap W=(\APX_\mi\cap W)\cup (\R'_\lsub\cap W') \cup (\APX_\rsub\cap W') \setminus \{R_{\mi, \lsub},R_{\mi, \rsub}\} \cup \{R_\mi\}$ and thus \begin{align*}
		c(\APX\cap W)&=c(\APX_\mi\cap W)+ c(\R'_\rsub\cap W'\setminus\{R_{\mi, \lsub}\})+c(\APX_\rsub\cap W'\setminus\{R_{\mi, \rsub}\})+c(R_\mi)\\
		&=c(\APX_\mi\cap W)+ c'(R_{\mi, \rsub})+c(\APX_\rsub\cap W'\setminus\{R_{\mi, \rsub}\})\\
		&=c(\APX_\mi\cap W)+c'(\APX_\rsub\cap W')
	\end{align*}
	This yields the lemma in the case where  $R_\mi$ exists and the algorithm splits this rectangle into $R_{\mi, \lsub}$ and $R_{\mi, \rsub}$.
	
	So suppose that this is not the case. Note that by construction we have $W=(\APX_\mi\cap W')\cup (\R'_\lsub \cap W') \cup (\R'_\rsub \cap W')$. If $\APX_\rsub \cap W'=\emptyset$ then $\APX\cap W=(\APX_\mi\cap W')\cup (\APX_\lsub \cap W')$, which yields the lemma. So suppose that $\APX_\rsub\cap W'\neq \emptyset$ and let $R$ be the leftmost rectangle of $\APX_\rsub$. Then $R$ is also the leftmost rectangle of $\R'_\rsub \cap W'$ and  by construction we have $c'(R)=c(R)+c(\R'_\lsub\cap W')$. So
	\begin{align*}
		c(\APX\cap W)&=c(\APX_\mi\cap W')+c(\R'_\lsub \cap W') c(\APX_\rsub \cap W')\\
		&=c(\APX_\mi\cap W')+c'(\APX_\rsub \cap W')
	\end{align*}
	which completes the proof.
\end{proof}
Now we can combine Lemmas~\ref{lem:propertiesS+},\ref{lem:costLRsubproblem} and \ref{lem:costAPX} to obtain the following approximation ratio.
\begin{lemma}\label{lem:approximationFactorRecursion}
	Let $S$ be any feasible solution for a subproblem with region $A$. Then
	$$c(\APX)\leq \left(1+\frac{\varepsilon \log(\ri(A)-\lef(A))}{\log T}\right)c_{\APX}(A, S)$$
\end{lemma}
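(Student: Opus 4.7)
The plan is to prove the inequality by induction on the size $\ri(A) - \lef(A)$ of the area, which is always a power of $2$ and halves at each recursive call. For the base case $\ri(A) - \lef(A) = 1$, Lemma~\ref{lem:BaseCase} tells us that the dynamic program returns an optimal solution, so $c(\APX) \leq c(S) \leq c_\APX(A, S)$, since every rectangle carries weight at least $1$ in $c_\APX$; this matches the claimed bound because $\log 1 = 0$. Throughout the recursion, the ``cost function'' appearing in the statement is understood to be the possibly redefined cost $c'$ of the current subproblem, so that the inductive hypothesis applied to a subproblem automatically speaks about the costs that subproblem actually uses.

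For the inductive step I would first invoke Lemma~\ref{lem:guess-rays} to conclude that $S^+_\lsub$ and $S^+_\rsub$ are feasible for the left and right subproblems, whose areas have size $(\ri(A)-\lef(A))/2$. Applying the inductive hypothesis to each of them yields
\begin{equation*}
c'(\APX_\lsub) + c'(\APX_\rsub) \leq \left(1 + \frac{\varepsilon \log((\ri(A)-\lef(A))/2)}{\log T}\right)\bigl(c'_\APX(A_\lsub, S^+_\lsub) + c'_\APX(A_\rsub, S^+_\rsub)\bigr).
\end{equation*}
Then Lemma~\ref{lem:costAPX} bounds $c(\APX)$ by $c'(\APX_\lsub) + c'(\APX_\rsub) + c(\APX_\mi)$, while Lemma~\ref{lem:costLRsubproblem} bounds $c'_\APX(A_\lsub, S^+_\lsub) + c'_\APX(A_\rsub, S^+_\rsub) + c(\APX_\mi)$ from above by $c_\APX(A, S^+)$. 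Combining these (and using that the multiplicative factor is at least $1$, so the additive $c(\APX_\mi)$ term can be absorbed) gives
\begin{equation*}
c(\APX) \leq \left(1 + \frac{\varepsilon(\log(\ri(A)-\lef(A))-1)}{\log T}\right) c_\APX(A, S^+).
\end{equation*}

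To finish, I need to replace $c_\APX(A, S^+)$ by $c_\APX(A, S)$. For this I would strengthen the first item of Lemma~\ref{lem:propertiesS+} from $c(S^+) \leq (1+\varepsilon/\log T)c(S)$ to $c_\APX(A, S^+) \leq (1+\varepsilon/\log T)c_\APX(A, S)$: the proof there bounds $c(\R^{c',p'}_{\mathrm{set},\add}) \leq (\varepsilon/(2\log T))\, c(S \cap \R^{c',p'}_{\mathrm{set}})$, and since each rectangle has $c_\APX$-weight at most twice its $c$-weight and the groups $\R^{c',p'}_{\mathrm{set}}$ are pairwise disjoint across $(c',p')$ and $\mathrm{set} \in \{\lef,\ri,\spn\}$, exactly the same accounting gives the matching bound with $c_\APX$ in place of $c$. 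Multiplying the two factors yields
\begin{equation*}
c(\APX) \leq \left(1 + \frac{\varepsilon(\log(\ri(A)-\lef(A))-1)}{\log T}\right)\left(1 + \frac{\varepsilon}{\log T}\right) c_\APX(A, S),
\end{equation*}
which equals $1 + \varepsilon\log(\ri(A)-\lef(A))/\log T$ up to a second-order term of order $\varepsilon^2/\log T$; this extra term is absorbed by an initial constant rescaling of $\varepsilon$ (as is done in the outer proofs as well), giving the claimed bound.

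The main obstacle is precisely this bookkeeping in the inductive step: the multiplicative factor must grow by only an additive $\varepsilon/\log T$ per level, so that after the $O(\log T)$ levels of the recursion the total blow-up remains $1 + O(\varepsilon)$. This is the very reason Lemma~\ref{lem:propertiesS+} was designed to spend only a $\varepsilon/\log T$ fraction of the budget per level. The appearance of $c_\APX(A, S^+)$ (rather than $c_\APX(A, S)$) inside the recursion is also essential: $S^+$ is the enriched solution for which Lemma~\ref{lem:guess-rays} can guarantee feasibility of $S^+_\lsub$ and $S^+_\rsub$ in the subproblems, and the slack it provides is exactly what makes halving the area possible without losing another constant factor per level.
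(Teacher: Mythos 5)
Your proof follows essentially the same route as the paper's: induction on the width of $A$, the base case via Lemma~\ref{lem:BaseCase}, and an inductive step that combines the feasibility of $S^+_\lsub,S^+_\rsub$ from Lemma~\ref{lem:guess-rays} with Lemmas~\ref{lem:costAPX} and~\ref{lem:costLRsubproblem} and the cost comparison between $S^+$ and $S$. Your explicit strengthening of Lemma~\ref{lem:propertiesS+} to a $c_\APX$-version (accounting for the factor-$2$ weights) and your absorption of the resulting second-order term by an initial constant rescaling of $\varepsilon$ is precisely the bookkeeping the paper performs implicitly when it multiplies $\left(1+\frac{\varepsilon(\log(b-a)-1)}{\log T}\right)$ by $\left(1+\frac{\varepsilon}{2\log T}\right)$ in its final chain of inequalities.
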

\begin{proof}
	We prove the lemma by induction on $b-a$. If $b-a=1$, the problem can be solved exactly according to Lemma~\ref{lem:BaseCase}. So suppose $b-a>1$. Then $S^+_\lsub$ and $S^+_\rsub$ are feasible solutions for the left and right subproblem. By the induction hypothesis, we have
	\begin{equation*}
		c(\APX_\lsub)\leq \left(1+\frac{\varepsilon \log((b-a)/2)}{\log T}\right)c_{\APX}(A_\lsub, S^+_\lsub)
	\end{equation*}
	\begin{equation*}
		c(\APX_\rsub)\leq \left(1+\frac{\varepsilon \log((b-a)/2)}{\log T}\right)c_{\APX}(A_\rsub, S^+_\rsub)
	\end{equation*}
	Together with Lemmas~\ref{lem:costAPX} and \ref{lem:costLRsubproblem} we obtain
	\begin{align*}
		c(\APX)&\leq c(\APX_\lsub)+c(\APX_\rsub)+c(\APX_\mi)\\
		&\leq \left(1+\frac{\varepsilon \log((b-a)/2)}{\log T}\right)\cdot \left(c_{\APX}(A_\lsub, S^+_\lsub)+c_{\APX}(A_\rsub, S^+_\rsub)+c(\APX_\mi)\right)\\
		&\leq \left(1+\frac{\varepsilon \log((b-a)/2)}{\log T}\right)\cdot c_\APX(A, S^+)\\
		&\leq \left(1+\frac{\varepsilon( \log(b-a)-1)}{\log T}\right)\cdot \left(1+\frac{\varepsilon}{2\log T}\right)c_\APX(A, S)\\
		&\leq \left(1+\frac{\varepsilon \log(b-a)}{\log T}\right)\cdot c_\APX(A, S)
	\end{align*}
	In the second to last step, we used Lemma~\ref{lem:propertiesS+}. This completes the proof.
\end{proof}
This directly the approximation ratio stated in Lemma~\ref{lem:approximationRatioRoot}.
\begin{proof}[Proof of Lemma~\ref{lem:approximationRatioRoot}.]
	Recall that for the root subproblem $A=[0, T)\times[0, \infty)$. By Lemma~\ref{lem:approximationFactorRecursion}, we know that $c(\APX^{\operatorname{root}})\leq \left(1+\frac{\varepsilon \log(\ri(A)-\lef(A))}{\log T}\right)c_{\APX}(A, S)\leq (1+\varepsilon)c_{\APX}(A, S)\leq (2+2\varepsilon)c(S)$.
\end{proof}

\section{Weighted Tardiness}

\label{apx:weighted-tardiness} In this section prove Theorem~\ref{thm:main-tardiness},
i.e., we present a $(1+\epsilon)$-approximation algorithm for the
weighted tardiness objective for any constant $\epsilon>0$ with a
running time of $2^{\mathrm{poly}(\log(n+p_{\max}))}$. Recall
that in this setting each job has a due date $d_{j}\in\N$ and a weight
$w_{j}\in\N$ and its cost function is $\cost_{j}(t)=w_{j}(t-d_{j})$
for each $t\in\RR$. First, we reduce the problem to a special case
of RCP in which each instance has certain structural properties. Then
we give a recursive algorithm which is a $(1+\epsilon)$-approximation
algorithm for this setting of RCP. In particular, due to the additional
structural properties this algorithm is much simpler than the algorithm
for the general case of RCP (see Section~\ref{subsec:GSP-algorithm})
and at the same time achieves an approximation ratio of only $1+\epsilon$,
rather than $2+\epsilon$.

Formally, in our special case of RCP, each instance, given by a set
of rectangles $\R$ and a set of rays $\L$, is \emph{$\delta$-well-structured}
(for some $\delta>0$ that our running time will depend on) which means that for each rectangle $R=[\lef(R),\ri(R))\times[j,j+1)\in\R$
and the (unique) value $h\in\N$ with $2^{h}\le\ri(R)-\lef(R)<2^{h+1}$
we have that $\lef(R)$ and $\ri(R)$ are integral multiples of $\delta\cdot 2^{h}$.
We reduce the weighted tardiness problem to this special case of RCP
in the following lemma,
which we will prove in Section~\ref{sec:TardinessReduction}.

In the following, assume without loss of generality that $\epsilon = 2^{-k}$ for some $k\in\mathbb N$.
\begin{lemma}
\label{lem:tardiness-reduction}Given an $\alpha$-approximation algorithm
for $\delta$-well-structured instances of RCP with a running time of $f(n,K, \max_R\ri(R), p_{\max},\delta)$,
there is a $\alpha(1+\varepsilon)$-approximation algorithm for minimizing
	weighted tardiness in time $f((n p_{\max})^{O(1)}, (1/\epsilon)^{O(1/\epsilon^3)}, (n p_{\max})^{O(1)}, \epsilon/32)$. 
\end{lemma}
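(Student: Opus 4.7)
The plan is to mimic the reduction of \Cref{lem:reduction-to-RCP} while exploiting the special structure of weighted tardiness to place milestones on a dyadic grid. For each job $j$ the cost $\cost_j(t) = w_j \max\{t - d_j, 0\}$ vanishes on $[r_j, d_j]$ and then grows linearly, so after the due date the cost-growth properties of \Cref{lem:milestones} translate directly into bounds on the ratios of tardiness offsets $a_i := m_i(j) - d_j$. I would use this freedom to force every post-due-date milestone onto the global grid $\delta \cdot 2^h \cdot \Z$, where $h$ is the scale of the rectangle $[m_i(j), m_{i+1}(j))$.

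Concretely, I would take $m_0(j) = r_j$, $m_1(j) = d_j$ (rounded down to a sufficiently fine grid), and then partition the post-due-date horizon into scale regimes $a \in [2^h/\epsilon, 2^{h+1}/\epsilon)$ for $h \ge 0$, placing milestones within the $h$-th regime at consecutive values of the form $d_j^{(h)} + k \cdot 2^h$, where $d_j^{(h)}$ is $d_j$ rounded to a multiple of $\delta \cdot 2^h$ and $k$ is an integer index. Since $2^h = (32/\epsilon) \cdot \delta \cdot 2^h$ is itself an integer multiple of $\delta \cdot 2^h$, the resulting rectangle $R(j,i) = [m_i(j), m_{i+1}(j)) \times [\cdot)$ has width exactly $2^h$ and both endpoints multiples of $\delta \cdot 2^h$, giving the $\delta$-well-structured property. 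A direct calculation shows that the relative cost ratio of consecutive milestones inside the $h$-th regime is between $1 + \epsilon/2$ and $1 + \epsilon$, so properties~(1) and~(2) of \Cref{lem:milestones} hold after a constant rescaling of $\epsilon$.

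With the new milestones, I would push the chain ILP1$\,\to\,$ILP2$\,\to\,$ILP3$\,\to\,$RCP through verbatim. \Cref{lem:reduction1,lem:reduction2,lem:reduction3,lem:reduction4,lem:reduction5} and the bound $K \le (1/\epsilon)^{O(1/\epsilon^3)}$ derived in the proof of \Cref{lem:reduction-to-RCP} only use the milestone properties as stated in \Cref{lem:milestones}, so they transfer without change. The subsampling to the sequences $(\tau^j_k(S))$ only deletes milestones, which preserves the well-structured property. The final RCP instance has size polynomial in $n\cdot p_{\max}$, parameter $K \le (1/\epsilon)^{O(1/\epsilon^3)}$, and $\max_R \ri(R) = O(T) = O(n^2 p_{\max})$, matching the claimed running-time parameters. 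Applying the given $\alpha$-approximation algorithm for $\delta$-well-structured RCP and inverting the reductions then yields the $\alpha(1+\epsilon)$-approximation.

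The main obstacle is the boundary between scale regimes $h$ and $h+1$: the local grids $\delta \cdot 2^h \cdot \Z$ and $\delta \cdot 2^{h+1} \cdot \Z$ use different roundings of $d_j$, so a naive construction either violates property~(2) of \Cref{lem:milestones} (two consecutive milestones too close in cost) or produces a rectangle whose two endpoints lie on incompatible grids. I would resolve this by inserting at each scale transition a single transition milestone that is simultaneously a multiple of $\delta \cdot 2^{h+1}$ (and hence also of $\delta \cdot 2^h$), and by absorbing the at most $O(\log(n p_{\max}))$ transitions per job into a $1+O(\epsilon)$ multiplicative slack, which is compatible with the tolerances already introduced in the proof of \Cref{lem:reduction-to-RCP}.
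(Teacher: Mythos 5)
Your overall strategy is the same as the paper's: replace \Cref{lem:milestones} by a tardiness-specific milestone construction whose post-due-date milestones are aligned to dyadic grids, push the chain ILP1$\to$ILP2$\to$ILP3$\to$RCP through unchanged, dispose of the wide zero-cost rectangle $[r_j,d_j)$ via the cost-zero preprocessing, and inherit the bounds on $K$, on the instance size, and on $\max_R \ri(R)$. The difference lies in \emph{how} the grid alignment is achieved, and this is where your argument has a genuine gap.

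You anchor the milestones at a rounded due date, using grids $d_j^{(h)}+2^h\cdot\Z$, one per scale regime, and you yourself identify the resulting problem at regime boundaries: either property~(2) of \Cref{lem:milestones} fails for the transition milestone, or the transition rectangle has endpoints on incompatible grids. Your resolution---insert transition milestones and ``absorb the $O(\log(np_{\max}))$ violations of property~(2) into a $1+O(\epsilon)$ slack''---is in direct tension with your claim that \Cref{lem:reduction1,lem:reduction2,lem:reduction3,lem:reduction4,lem:reduction5} ``transfer without change.'' The proof of \Cref{lem:reduction4} invokes property~(2) for \emph{every} index: it is what yields the geometric decay $\sum_{h}(1+\epsilon/4)^{-h(1/\epsilon)^3}$ across blocks and the lower bound on $\cost_j(m_{\ell+1})$ when $\ell\ge\tau_k+(1/\epsilon)^2$ in the probability argument. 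With exceptional indices this lemma does not apply verbatim; it would have to be re-proved under a weakened hypothesis (e.g., bounding the number of exceptional indices per block of length $(1/\epsilon)^2$, which in your scheme is about $1/\epsilon$), and you assert rather than carry out this step---which is exactly where the work lies. The paper avoids the issue entirely by a different construction (\Cref{lem:milestones-tardiness}): $m_{i+1}(j)$ is taken to be the \emph{second} smallest point of the absolute grid $\tfrac{\epsilon}{2}2^{h'}\cdot\Z$ above $m_i(j)$, where $2^{h'}\le m_i(j)-d_j<2^{h'+1}$. Taking the second grid point guarantees property~(2) with no exceptions, and the bounded-growth argument $m_{i+1}(j)-m_i(j)<16\,(m_i(j)-m_{i-1}(j))$ shows that both endpoints lie on $\tfrac{\epsilon}{32}2^{h}\cdot\Z$ for the width scale $h$ of the current rectangle---this factor $16$ is precisely why $\delta=\epsilon/32$ rather than $\epsilon/2$. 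Two smaller points: rounding $d_j$ is unnecessary and slightly delicate (the paper keeps $m_1(j)=d_j$ exactly; only the zero-cost rectangle $[r_j,d_j)$ escapes rounding, and it is deleted in preprocessing), and you should verify that your milestones remain integral and that the offsets in $[1,1/\epsilon)$ immediately after the due date are covered. Your construction can likely be repaired, either along the paper's lines or by redoing \Cref{lem:reduction4} with a controlled number of exceptional indices, but as written the crucial step is missing.
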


Using this reduction,
it suffices to solve $\delta$-well-structured instances of RCP. We
will present an algorithm for this task, corresponding to the following
lemma, in Section~\ref{sec:TardinessRCP}. 
\begin{lemma} \label{lem:tardiness-algorithm}There
is a $(1+\varepsilon)$-approximation algorithm for $\delta$-well-structured
instances of RCP with a running time of $2^{(1/\delta \cdot K \cdot \log (n \cdot p_{\max}\cdot \max_R\ri(R))/\varepsilon)^{O(K)}}$. 
\end{lemma}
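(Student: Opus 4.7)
The plan is to adapt the recursive splitting algorithm of Section~\ref{subsec:GSP-algorithm} and exploit $\delta$-well-structuredness to eliminate the factor-of-$2$ loss on rows straddling the midpoint. At each recursive call on an area $A$ of width $2^k$, I split at $\mi(A):=\lef(A)+2^{k-1}$ into $A_\lef$ and $A_\ri$; since every ray is oriented vertically downward, each ray is already entirely contained in one half. The rows requiring special treatment are those whose selected prefix might include rectangles on both sides of $\mi(A)$: a row containing a crossing rectangle ($\lef(R)<\mi(A)<\ri(R)$), or a centered row with rectangles in both halves.

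The central structural observation is that every rectangle of width $w\in[2^h,2^{h+1})$ has $\lef(R),\ri(R)$ at multiples of $\delta\cdot 2^h$. Assuming without loss of generality that $\delta$ is a negative power of $2$, $\mi(A)$ lies on the grid at every level $h\le k-1+\log_2(1/\delta)$, and rectangles of larger level are wider than $A$ and intersect $A$ at only $O(1/\delta)$ positions. Summing over the $O(\log\max_R\ri(R))$ relevant height levels gives only $O(\log(\max_R\ri(R))/\delta)$ distinct rectangle footprints that can appear in a non-trivial row. Defining the \emph{type} of a row to be the sequence of $(h,\lef(R))$ pairs of its rectangles intersecting $A$, the number of types is at most $(1/\delta\cdot K\cdot\log(n\cdot p_{\max}\cdot\max_R\ri(R))/\varepsilon)^{O(K)}$.

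Two rows of the same type are interchangeable with respect to the split: for every prefix length $\ell\in\{0,\dotsc,K\}$, the prefix's cost, value, partition of selected rectangles between $A_\lef$ and $A_\ri$, and intersection with every ray in $A$ depend only on the type. Hence for each type $T$ it suffices to guess the number $n_T^\ell$ of rows of $T$ whose optimal prefix length is $\ell$; these counts take $O(K\log n)$ bits per type, so the total number of guesses is $2^{(1/\delta\cdot K\cdot\log(n\cdot p_{\max}\cdot\max_R\ri(R))/\varepsilon)^{O(K)}}$, matching the target running time. Once the counts are guessed, we select from each type the cheapest $n_T^\ell$ rows with prefix length $\ell$, which fully determines the rectangles chosen from non-trivial rows. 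We subtract their contribution from every ray's demand and recurse on $A_\lef$ and $A_\ri$; for rays in $\mathcal{L}'_{\mathrm{out}}$ we additionally guess (up to factor $1+\varepsilon/\log T$, as in Section~\ref{subsec:GSP-algorithm}) how their residual demand splits between the two subproblems.

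Because every non-trivial row's prefix is guessed exactly, each selected rectangle is charged to exactly one subproblem and no factor-of-$2$ loss is incurred. The only approximation error is the $(1+\varepsilon/\log T)$ factor per level from rounding residual demands of $\mathcal{L}'_{\mathrm{out}}$, which telescopes over the $O(\log T)$ recursion depth to $(1+\varepsilon)$. The base case $\ri(A)-\lef(A)=1$ is handled by the dynamic program of Lemma~\ref{lem:BaseCase}. The hard part will be the combinatorial argument behind the type definition: one must verify that same-type rows are genuinely interchangeable across both halves and across every ray in $A$, and that $|\mathcal{L}'_{\mathrm{out}}|$ remains polylogarithmic throughout the recursion so that the overall running time bound holds.
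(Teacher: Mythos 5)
Your overall plan is the same as the paper's: split recursively at $\mi(A)$, use $\delta$-well-structuredness to show that the rows straddling the midpoint fall into quasi-polynomially many classes, guess per class how many rows get each prefix length, select representatives, subtract their contribution from the ray demands, and recurse. However, the step you defer as ``the hard part'' is exactly where your argument breaks, in two ways. First, your type (the sequence of $(h,\lef(R))$ pairs) records only geometry, so your claim that the prefix's cost and value ``depend only on the type'' is false; the paper's groups additionally record cost classes $c_1,\dots,c_k$ and a value class $p$ (powers of $1+\epsilon$), and even then rows of a group are only \emph{approximately} interchangeable. Second, and more seriously, after guessing the counts you propose to ``select from each type the cheapest $n_T^\ell$ rows with prefix length $\ell$'' and assert that coverage is preserved with no loss. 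That selection rule does not preserve coverage: a ray $L(s,t)$ only intersects rows below its starting height $s$, so if the optimum selected low rows of a group and your cheapest rows happen to lie above $s$, the ray loses coverage and the residual-demand recursion becomes infeasible/incorrect. The paper's fix is the content of Lemmas~\ref{lem:tardinessAPXgCost} and~\ref{lem:tardinessAPXgRays}: process prefix lengths $i$ from largest to smallest, take the \emph{bottom-most} $\lfloor(1+2\varepsilon)n_{g,i}\rfloor$ still-unused rows (the $1+2\varepsilon$ inflation compensates for the $1+\varepsilon$ value rounding within a group), and for small counts $n_{g,i}\le 1/\varepsilon$ guess the rows exactly (since the floor may add nothing). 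None of this is present or implied in your proposal, and with ``cheapest'' in place of ``bottom-most'' the interchangeability claim is simply wrong.

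A smaller but genuine discrepancy: you keep the rays $\mathcal{L}'_{\mathrm{out}}$ and propose to guess how their residual demand splits ``up to factor $1+\varepsilon/\log T$'' per level, with a telescoping error over $O(\log T)$ levels. In the paper's tardiness recursion this machinery is unnecessary (the invariant is that all rectangles of a subproblem lie inside $A$ and all rays inside its strip, because crossing rows are fully resolved at the level where they first cross, so no out-rays or artificial rays arise), and your rounding-and-telescoping claim is unsupported: rounding a covering demand by a multiplicative factor does not translate into a multiplicative cost loss without an extra slack argument, which you do not provide. As a consequence, the paper obtains a flat $(1+5\varepsilon)$ factor (Lemma~\ref{lem:tardinessApproximationfactor}) with no per-level accumulation, whereas your error accounting rests on an unproven mechanism.
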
 
Together,
Lemmas~\ref{lem:tardiness-reduction} and \ref{lem:tardiness-algorithm}
yield Theorem~\ref{thm:main-tardiness}.

\subsection{Reduction}\label{sec:TardinessReduction}
Recall that the $x$-coordinates for the rectangles in RCP are derived from the milestones
from \Cref{lem:milestones}. We modify the construction here, achieving the same properties,
but additionally an equivalent of $\delta$-well-structuredness.
Recall also that $T = O(\max_{j\in J} r_j + \sum_{j\in J} p_j)$ is an upper bound on the time when the last job is finished.
\begin{lemma}\label{lem:milestones-tardiness}
	Consider an instance of weighted tardiness minimization.
	For each job $j$ we can in polynomial time construct a sequence $m_0(j),m_1(j),m_2(j),\dotsc,m_{f_j}(j)\in\mathbb N$
	where
	\begin{enumerate}
		\item $\cost_j(m_{i+1}(j)) \le (1 + \epsilon) \cdot \cost_j(m_{i}(j) + 1)$ for all $0 \le i < f_j$;
		\item $\cost_j(m_{i+1}(j) + 1) > (1 + \epsilon / 4) \cdot \cost_j(m_{i}(j) + 1)$ for all $0 \le i < f_j - 1$;
		\item $m_0(j) = r_j$;
		\item $m_1(j) = d_j$;
		\item $m_{f_j}(j) = T$;
		\item let $h\in \mathbb N$ with $2^{h} \le m_{i+1}(j) - m_{i}(j) < 2^{h+1}$. Then $m_{i}(j), m_{i+1}(j) \in \delta 2^h \cdot \mathbb Z$, where $\delta = \epsilon / 32$.
	\end{enumerate}
\end{lemma}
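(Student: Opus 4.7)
Set $m_0(j) := r_j$ and $m_1(j) := d_j$, which immediately yields properties~(3) and~(4). Since $\cost_j(t) = w_j \max\{t - d_j, 0\}$ vanishes on $[r_j, d_j]$ and equals $w_j$ at $d_j + 1$, property~(1) for $i = 0$ reduces to $0 \le 0$ and property~(2) for $i = 0$ to $w_j > 0$; both are immediate. Property~(6) on the initial pair $(r_j, d_j)$ involves only the inputs, so I will arrange, as a preprocessing step in the proof, to round each $r_j$ down and each $d_j$ up to a multiple of $\delta \cdot 2^h$ for the appropriate $h$; this perturbation loses only a factor $1 + O(\epsilon)$, which will be absorbed into the outer $1+\epsilon$ guarantee of \Cref{lem:tardiness-reduction}.

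For $i \ge 1$, linearity of $\cost_j$ on $[d_j, \infty)$ reduces properties~(1) and~(2) for the pair $(m_i, m_{i+1})$ to the condition that $m_{i+1} - d_j$ lies in the window
\[
    I_i := \bigl( (1 + \epsilon/4)\,t_i - 1,\; (1+\epsilon)\,t_i \bigr], \qquad t_i := m_i(j) + 1 - d_j \ge 1,
\]
which has length at least $(3\epsilon/4)\, t_i$. Define the target scale $h_i := \max\{0, \lfloor \log_2 (\epsilon t_i / 2) \rfloor\}$. A short calculation using $2^{h_i} \le \epsilon t_i / 2 < 2^{h_i+1}$ shows that the feasible gap region $(\epsilon t_i /4,\, \epsilon t_i + 1] \cap [2^{h_i}, 2^{h_i+1})$ coincides with $[2^{h_i}, 2^{h_i+1})$, an interval of length $2^{h_i}$. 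Since $\delta = \epsilon/32$, this interval contains $1/\delta = 32/\epsilon$ multiples of $\delta \cdot 2^{h_i}$. I pick $m_{i+1}$ as any such multiple, so that the resulting gap $g_i := m_{i+1} - m_i$ is in $[2^{h_i}, 2^{h_i+1})$ and is a multiple of $\delta \cdot 2^{h_i}$. Finally, I override the last milestone to $m_{f_j}(j) := T$, which is a power of $2$ and hence a multiple of any relevant $\delta \cdot 2^h$, yielding property~(5) and preserving (6) for the last pair.

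To complete property~(6), I maintain the inductive invariant $m_i \in \delta \cdot 2^{h_i} \cdot \Z$. When $h_i = h_{i-1}$ this transfers for free from $m_{i-1} \in \delta \cdot 2^{h_{i-1}} \cdot \Z$ and $g_{i-1} \in \delta \cdot 2^{h_{i-1}} \cdot \Z$. At a scale transition $h_i > h_{i-1}$, I instead snap $m_i$ to the coarser grid $\delta \cdot 2^{h_i} \cdot \Z$ by re-selecting within $I_{i-1}$: since $|I_{i-1}| = \Theta(\epsilon t_{i-1})$ while $\delta \cdot 2^{h_i} = O(\epsilon^2 t_{i-1})$, the preceding window admits $\Omega(1/\epsilon)$ valid snap targets. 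Computing $h_i$ requires $t_i$ which depends on $m_i$, but the snap moves $m_i$ by at most $\delta \cdot 2^{h_i} \ll t_i$, so a single fixed-point iteration stabilizes $h_i$. The main obstacle is precisely this coordination of~(6) across scale transitions, where $m_i \in \delta \cdot 2^{h_{i-1}} \Z$ does not imply $m_i \in \delta \cdot 2^{h_i} \Z$; it is resolved by the $\Omega(1/\epsilon)$ headroom between the window size and the grid spacing. A secondary subtlety is property~(6) for the pair $(r_j, d_j)$ itself, which is not controllable by the inductive construction and must be offloaded to the input-rounding preprocessing.
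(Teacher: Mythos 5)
Your core construction for $i\ge 1$ is in the same spirit as the paper's: both place the next milestone on a dyadic grid of spacing $\Theta(\epsilon)\cdot 2^{h}$ inside the cost window forced by properties~(1) and~(2). The difference is bookkeeping: you enforce the grid membership of $m_i$ as an explicit invariant and repair it by a ``snap'' at scale transitions, whereas the paper simply takes the second point of the grid $\tfrac{\epsilon}{2}2^{h'}\Z$ (with $h'$ the scale of $m_i(j)-d_j$) and then \emph{proves} property~(6) a posteriori, using that consecutive gaps grow by at most a factor $16$, so the grid inherited from the previous step is at most a factor $16$ finer than the grid required by the current gap -- this is exactly why $\delta=\epsilon/32=\tfrac{\epsilon}{2}\cdot\tfrac1{16}$. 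Your snap argument as written has a gap: after re-selecting $m_i$ inside $I_{i-1}$, the gap $m_i-m_{i-1}$ may move into the next dyadic class $[2^{h_{i-1}+1},2^{h_{i-1}+2})$, and then property~(6) for the pair $(m_{i-1},m_i)$ requires $m_{i-1}\in\delta 2^{h_{i-1}+1}\Z$, which your invariant (membership in the \emph{finer} grid $\delta 2^{h_{i-1}}\Z$) does not give. This is fixable -- constrain the snap so that the gap stays in $[2^{h_{i-1}},2^{h_{i-1}+1})$, which your own counting bound permits -- but it must be said; similarly, integrality of the milestones when $\delta 2^{h_i}<1$ and the stabilization of $h_i$ under the snap are asserted rather than argued (the paper handles the first issue via the trivial case of gaps at most $32/\epsilon$ and the assumption $\epsilon=2^{-k}$).

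The genuine flaw is your treatment of the pair $(m_0,m_1)=(r_j,d_j)$. Rounding $r_j$ down and $d_j$ up to multiples of $\delta 2^h$ perturbs them by up to $\Theta(\epsilon\,(d_j-r_j))$, a quantity unrelated to $\OPT$; for the tardiness objective this can change the optimum (and the cost, under the original due dates, of a solution computed for the rounded instance) by an unbounded factor -- e.g.\ when $\OPT$ is zero or tiny -- and rounding release times additionally affects feasibility. So the claimed $1+O(\epsilon)$ loss absorbed into Lemma~\ref{lem:tardiness-reduction} does not hold; moreover, modifying $r_j,d_j$ is incompatible with properties~(3) and~(4), which must refer to the actual input values. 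The paper resolves this differently and without any rounding: property~(6) is only needed for $i\ge1$, because the first rectangle $[r_j,d_j)$ has cost zero ($\cost_j(d_j)=0$) and is removed/merged in the preprocessing of the reduction in Section~\ref{sec:Reduction}; the text after the lemma makes this explicit. You correctly spotted that $(r_j,d_j)$ cannot be gridded, but the right move is to drop the requirement for that pair (as the paper does), not to perturb the instance.
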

\begin{proof}
	We set $m_0(j) = r_j$ and $m_1(j) = d_j$.
	Then for $i \ge 1$ suppose we have already defined $m_{i}(j)$.
	Let $h'\in\mathbb N$ with $2^{h'} \le m_i(j) - d_j < 2^{h'+1}$. We define $m_{i+1}(j) \in\{m_{i}(j)+1, m_{i}(j)+2,\dotsc,T\}$ to be
	the minimal number that is at least as large as the \emph{second} smallest $t\in \frac{\epsilon}{2} 2^{h'} \cdot \mathbb Z$ with $t > m_{i}(j)$;
	or $T$ if this does not exists (because it would exceed $T$). Let $f_j \in\mathbb N$ be the smallest index $i$ where $m_i(j)$ equals $T$.

	It is easy to check that Properties~1 and~2 hold for $i=0$.
	Let $1 \le i < f_j$.
	Then
	\begin{multline*}
		\cost_j(m_{i+1}(j)) = w_j (m_{i+1}(j) - d_j) \le w_j (m_{i}(j) + 1 + \epsilon 2^{h'} - d_j) \\
		\le (1 + \epsilon)(m_{i}(j) + 1 - d_j) = (1 + \epsilon)\cost_j(m_{i}(j) + 1) .
	\end{multline*}
	Furthermore, for $1 \le i < f_j - 1$,
	\begin{multline*}
		\cost_j(m_{i+1}(j) + 1) = w_j (m_{i+1}(j) + 1 - d_j) > w_j (m_{i}(j) + 1 + \epsilon 2^{h'-1} - d_j) \\
		\ge (1 + \epsilon/4)(m_{i}(j) + 1- d_j) = (1 + \epsilon/4)\cost_j(m_{i}(j) + 1) .
	\end{multline*}
	Consider now Property~6.
	If $m_{i+1}(j) - m_i(j) \le 32/\epsilon$ then the claim is trivial, since $\epsilon / 32 \cdot 2^h \le 1$,
	and $m_i(j), m_{i+1}(j)\in\mathbb Z$.
	Hence, assume otherwise.
	Let $h\in\mathbb N$ with $2^{h} \le m_{i+1}(j) - m_i(j) < 2^{h+1}$.
	It holds that that $m_{i}(j) - d_j < 2^{h}$,
	since otherwise $m_{i+1}(j) < m_{i}(j) + \epsilon 2^h$, a contradiction to the definition of $h$.
	It follows that $m_{i+1}(j) \in \frac{\epsilon}{2} 2^{h} \cdot \mathbb Z \subseteq \frac{\epsilon}{32} 2^{h} \cdot \mathbb Z$.

	For the discreteness of $m_i(j)$, we will argue that the differences $m_{i+1}(j) - m_{i}(j)$ for $i > 1/\epsilon$ do not grow too quickly. Note that we have
	\begin{equation*}
		m_{i+1}(j) - m_i(j) > \epsilon (m_i(j) - d_j) / 4 .
	\end{equation*}
	Furthermore, for all $i > 2/\epsilon$ we have
	\begin{equation*}
		m_{i+1}(j) - m_i(j) \le 2 \epsilon (m_i(j) - d_j) .
	\end{equation*}
	Assuming that $\epsilon$ is sufficiently small it follows that
	\begin{equation*}
		m_{i+1}(j) - d_j \le 2(m_i(j) - d_j) .
	\end{equation*}
	Thus,
	\begin{equation*}
		m_{i+1}(j) - m_i(j) \le 2 \epsilon (m_i(j) - d_j) \le 4 \epsilon (m_{i-1} - d_j) < 16 (m_{i}(j) - m_{i-1}(j)) .
	\end{equation*}
	It follows that $2^{h''} \le m_{i}(j) - m_{i-1}(j)$, where $2^{h''} \ge 2^h / 16$.
	By the argument used to show $m_{i+1}(j) \in \frac{\epsilon}{2} 2^h \cdot \mathbb Z$
	we obtain $m_{i}(j) \in \frac{\epsilon}{2} 2^{h''} \cdot \mathbb Z \subseteq \frac{\epsilon}{32} 2^{h} \cdot \mathbb Z$.
\end{proof}
Since this construction satisfies all properties of \Cref{lem:milestones}, we can use it
with the reduction in \Cref{sec:Reduction}.
Recall that the rectangles $[a, b) \times [j', j'+1)\in\R$ all had the form that there
is some $j\in J$ and $i$ with $a = m_i(j)$ and $b = m_{i+1}(j)$.
Assuming that $i \ge 1$, the previous lemma ensures that for $h\in\mathbb N$ with $2^h \le b - a < 2^{h+1}$ we have
$a, b\in \frac{\epsilon}{32} 2^h \cdot \mathbb Z$.
The first rectangle of each job with $x$-coordinates $[m_0(j), m_1(j)) = [r_j, d_j)$ may be very
wide and $r_j, d_j$ cannot be rounded. This rectangle, however, has cost zero, so the reduction in
\Cref{sec:Reduction} anyway removes it in the preprocessing.
The instance size and $K$ are bounded as in \Cref{lem:reduction-to-RCP}. The maximum right hand side of
a rectangle, i.e., $\max_{R} \ri(R)$, is bounded by $T$ and $T$ is bounded by $(n + p_{\max})^{O(1)}$
via the preprocessing in \Cref{sec:Reduction}.
Hence, this implies \Cref{lem:tardiness-reduction}.

\subsection{Algorithm}\label{sec:TardinessRCP}

In this section we prove Lemma~\ref{lem:tardiness-algorithm}. Suppose
that we are given a $\delta$-well-structured instance of RCP. Our
approach to solve it is similar to our algorithm for general case of RCP.
As before, let $p_{\max}=\max_Rp(R)$. We start with a preprocessing step. This time, we argue that it suffices to construct an algorithm for which $C:=\frac{\max_{R\in \R}c(R)}{\min_{R\in \R}c(R)}=O(K\cdot n/\varepsilon)$.
\begin{lem}\label{lem:tardinessPreprocessing}
	Assume that there is an $\alpha$-approximation algorithm for $\delta$-well-structured instances of RCP
	with a running time of $2^{( 1/\delta \cdot\log C \cdot K \cdot \log (n \cdot p_{\max}\cdot \max_{R}\ri(R)))^{O(K)}}$. Then there
	is an $(1+\epsilon)\alpha$-approximation algorithm for $\delta$-well-structured
	instances of RCP with a running time of $2^{(1/\delta \cdot K \cdot \log (n \cdot p_{\max}\cdot \max_{R}\ri(R))/\varepsilon)^{O(K)}}$.
\end{lem}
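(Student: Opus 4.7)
The plan is to mirror the proof of Lemma~\ref{lem:preprocessing}, but omit the cost-rounding step: here we only need to bound the ratio $C = \max_R c(R)/\min_R c(R)$, not the number of distinct costs. Crucially, the entire reduction will only delete rectangles and reduce ray demands, so no rectangle coordinates are modified and the $\delta$-well-structured property is trivially preserved.

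First I would guess the rectangle $R_{\max}$ of maximum cost used by a fixed optimal solution $\OPT$; there are at most $|\R|$ options. All rectangles with cost strictly larger than $c(R_{\max})$, together with any rectangle appearing to their right in the same row, can be discarded without harming $\OPT$, since the prefix constraint forces any solution selecting one of the latter to also select a forbidden expensive rectangle. Next, let $\APX_{\text{cheap}}$ be the union of all rows in which \emph{every} rectangle has cost at most $\varepsilon c(R_{\max})/|\R|$. I would include $\APX_{\text{cheap}}$ verbatim in the output; its cost is at most $|\R| \cdot \varepsilon c(R_{\max})/|\R| = \varepsilon c(R_{\max}) \le \varepsilon c(\OPT)$. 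Reduce each ray's demand by the coverage it receives from $\APX_{\text{cheap}}$.

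Let $\R'$ denote the rectangles that survive both deletions. Every $R \in \R'$ lies in a row containing at least one rectangle of cost exceeding $\varepsilon c(R_{\max})/|\R|$, and since costs within a row differ by a factor of at most $K$, we obtain $c(R) \ge \varepsilon c(R_{\max})/(|\R| \cdot K)$. Together with the upper bound $c(R_{\max})$ from the discard step, this gives $C \le |\R| \cdot K / \varepsilon$ on the residual instance. Since $\OPT \setminus \APX_{\text{cheap}}$ remains feasible there, applying the hypothesized algorithm yields a solution $\APX$ with $c(\APX) \le \alpha \cdot c(\OPT \setminus \APX_{\text{cheap}}) \le \alpha \cdot c(\OPT)$, so $\APX \cup \APX_{\text{cheap}}$ has cost at most $(1 + \varepsilon)\alpha \cdot c(\OPT)$; rescaling $\varepsilon$ by a constant yields the claimed approximation ratio.

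For the running time, substituting $\log C \le \log(|\R|\cdot K / \varepsilon) = O(\log n + \log K + \log(1/\varepsilon))$ into the hypothesized running time, multiplying by a factor of $|\R|$ for the outer guess of $R_{\max}$, and absorbing the polylog factors into the exponent via $\log(1/\varepsilon) \le 1/\varepsilon$ and $\log n \le \log(n\cdot p_{\max}\cdot \max_R \ri(R))$, yields the claimed bound of $2^{(1/\delta \cdot K \cdot \log(n \cdot p_{\max} \cdot \max_R \ri(R))/\varepsilon)^{O(K)}}$. The main obstacle, mild here, is confirming that the deletions preserve $\delta$-well-structuredness; this is immediate, since no coordinate is altered and a row's prefix remains a prefix after removing a suffix of it.
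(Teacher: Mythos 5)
Your proposal is correct and follows essentially the same route as the paper's proof: guess the most expensive rectangle $R_{\max}$ of the optimum, discard costlier rectangles together with the suffixes behind them, buy all sufficiently cheap rows outright and subtract their coverage from the ray demands, then run the assumed algorithm on the residual instance where $C=O(nK/\varepsilon)$. The only (cosmetic) difference is where the factor $K$ enters — the paper uses the threshold $\varepsilon c(R_{\max})/(nK)$ on the cheapest rectangle of a row, while you threshold every rectangle at $\varepsilon c(R_{\max})/|\R|$ and invoke the within-row ratio $K$ for the lower bound — and both yield the same guarantees.
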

\begin{proof}
	Let $\R$, $\L$ denote the rectangles and rays of a  $\delta$-well-structured instance $I$ and let $S$ be an optimal solution to this instance. 
	Then we guess the most expensive rectangle $R_{\max}\in S$ in the optimal solution. 
	Let $\W^{\operatorname{cheap}}$ denote all rows in which there exists a rectangle of cost at most $\varepsilon\cdot c(R_{\max})/(n \cdot K)$. 
	Let $\R^{\operatorname{cheap}}$ denote all rectangles in the rows $\W^{\operatorname{cheap}}$. 
	We select $\R^{\operatorname{cheap}}$. 
	As for each row $W\in \W^{\operatorname{cheap}}$ we have $\sum_{R\in W}c(R)\leq K\cdot \varepsilon\cdot c(R_{\max})/(n \cdot K)\leq \varepsilon c(R_{\max})/n$, this implies $c(\R^{\operatorname{cheap}})\leq \varepsilon c(R_{\max})$.
	
	Let $\R^{\operatorname{discard}}$ denote all rectangles $R$ with $c(R)>c(R_{\max})$ and all rectangles in the same row on the right of such a rectangle. Note that the optimal solution cannot select any rectangle in $\R^{\operatorname{discard}}$ as it contains a prefix from each row. 
	Next, we define a set of rays $\L'$. For each ray $L(s,t)\in \L$ we add a ray $L'(s,t)\in \L'$ with a demand of $d(L'(s,t))=\max\{0,d(L(s,t))-p(\{R\in \R^{\operatorname{cheap}}: R\cap L(s,t)\neq \emptyset\})\}$. Then we use the given algorithm for the instance $I'$ with rectangles $\R\setminus(\R^{\operatorname{cheap}}\cup \R^{\operatorname{discard}})$ and the rays $\L'$. 
	Given a solution $\APX$ to this instance, we output $\APX\cup \R^{\operatorname{cheap}}$.
	The most expensive rectangle in $I'$ has a cost of $c(R_{\max})$ and the cheapest rectangle has a cost of at least  $\varepsilon\cdot c(R_{\max})/(n \cdot K)$, so we have $C\leq K\cdot n/\varepsilon$ for the instance $I'$.
	
	Clearly $\APX\cup \R^{\operatorname{cheap}}$ is a feasible solution for the instance $I$. Also $S\setminus \R^{\operatorname{cheap}}$ is a feasible solution for $I'$. So $c(\APX)\leq \alpha \cdot c(S\setminus \R^{\operatorname{cheap}})$, which implies $c(\APX\cup \R^{\operatorname{cheap}})\leq \alpha \cdot c(S\setminus \R^{\operatorname{cheap}}) +  \varepsilon c(R_{\max})\leq c(S\setminus \R^{\operatorname{cheap}}) +  \varepsilon c(S)\leq (1+\varepsilon)\alpha c(S)$.
	This shows that we obtain a $(1+\varepsilon)\alpha$-approximation. 
	As we need to call the given algorithm at most $n$ times (once for each choice of $R_{\max}$) and $C\leq K\cdot n/\varepsilon$ for the instance $I'$, we obtain the claimed running time.
\end{proof}

Our algorithm is based
on a recursion in which the input of each recursive call consists
of
\begin{itemize}
\item \label{tardinessPropertyMaintained1}an instance of RCP defined by
a set of rays $\L'$ and a set of rectangles $\R'$; we denote by $\rows'$
be the partition of $\R'$ into rows,
\item an area $A$ of the form $A=[\lef(A),\ri(A))\times[0,\infty)$ for
two values $\lef(A),\ri(A)\in\N_{0}$ such that $\ri(A)-\lef(A)=2{}^{k}$
for some $k\in\mathbb{N}_{0}\cup\{-1\}$ and $\lef(A)$ and $\ri(A)$
are integral multiples of $2^{k}$,
\item each rectangle $R\in\R'$ is contained in $A$, i.e., $R\subseteq A$,
\item each ray $L\in\L'$ is contained in $[\lef(A),\ri(A))\times\R$.
\end{itemize}
The recursive call returns a solution to the RCP instance defined
by $\R'$ and $\L'$. At the end of our algorithm, we output the solution
returned by the (main) recursive call in which $\R':=\R$, $\L':=\L$,
$\lef(A)=0$, and $\ri(A):=T$ where $T$ is the smallest power of
2 that is at least $\max_{R\in\R}\ri(R)$. Any solution to this subproblem
is a solution to our given instance.

Assume we are given a recursive call as defined above. The base case
of our recursion arises when $\ri(A)-\lef(A)=1/2$. Then $\R'=\emptyset$
since by assumption each rectangle in $\R'$ has integer coordinates
and is contained in $A$. Therefore, the subproblem is feasible if
an only if for each ray $L\in\L'$ we have that $d(L)=0$. Assume
now that $\ri(A)-\lef(A)\geq1$. As for the general case of RCP, we
split the problem into a left and a right subproblem at the vertical
line $\mi(A)\times\RR$ where $\mi(A):=(\ri(A)-\lef(A))/2$. Formally,
the areas for the left and right subproblem are defined by $A_{\lsub}:=[\lef(A),\mi(A))\times[0, \infty)$
and $A_{\rsub}:=[\mi(A),\ri(A))\times[0, \infty)$, respectively. In
the following, we will define the rectangles $\R'_{\lsub}$ and $\R'_{\rsub}$
for the left and right subproblem, respectively, and select certain
rectangles from $\R'\setminus(\R'_{\lsub}\cup\R'_{\rsub})$. Finally
define the rays $\L'_{\lsub}$ and $\L'_{\rsub}$ for the left and
right subproblem, respectively,

First, we define that $\R'_{\lsub}$ contains all rectangles from each row
$W\in\rows'$ in which \emph{each }rectangle $R\in W$ is contained
in $A_{\lsub}$, i.e., $R\subseteq A_{\lsub}$. Symmetrically, we
define that $\R'_{\rsub}$ contains all rectangles from each row $W\in\rows'$
in which each rectangle $R\in W$ is contained in $A_{\rsub}$, i.e.,
$R\subseteq A_{\rsub}$. Consider now all remaining rows in $\rows'$,
i.e., let $\rows^{\cross}\subseteq\rows'$ be the set of all rows
$W\in\rows'$ which contain a rectangle $R\in W$ with $R\cap A_{\lsub}\ne\emptyset$
and a rectangle $R'\in W$ with $R\cap A_{\rsub}\ne\emptyset$.

We partition the rows in $\rows^{\cross}$ into groups. Intuitively,
the rectangles of rows in the same group are vertical translates of
each other, have approximately the same cost, and the same value. Formally,
a group $\rows_{g}^{\cross}\subseteq\rows^{\cross}$ is specified by a tuple
$g=(c_{1},\dots c_{k},p,t_{0},\dots,t_{k})$ of integers where $k\leq K$
(recall that $K$ is an upper bound on the number of rectangles in
a row). Consider a row $W\in\rows^{\cross}$ and assume that $R_{1},\dots R_{k'}$
are its rectangles such that $\lef(R_{i})<\lef(R_{i+1})$ for each
$i\in[k'-1]$. We define that the row $W$ is in group $\rows_{g}^{\cross}$
if and only if
\begin{itemize}
\item $k=k'$,
\item $(1+\varepsilon)^{p}\leq p(R_{1})<(1+\varepsilon)^{p+1}$, and
\item for $i\leq k$ we have that $(1+\varepsilon)^{c_{i}}\leq c(R_{i})\leq(1+\varepsilon)^{c_{i}+1}$,
$\lef(R_{i})=t_{i-1}$ and $\ri(R_{i})=t_{i}$.
\end{itemize}
Let $\mathcal{G}:=\{g\in\Z^{4}\cup\Z^{6}\cup...\cup\Z^{2K+2}:\rows_{g}^{\cross}\neq\emptyset\}$
denote all vectors $g$ specifying a non-empty group $\rows_{g}^{\cross}$.
Next, we show that there is only a quasi-polynomial number of groups
in $\mathcal{G}$. 
\begin{lemma} \label{lem:tardinessGroupsNumber}
	We have that $|\mathcal{G}|\le(1/\delta\cdot \log C \cdot \log n \cdot \log p_{\max} \cdot \log T/\varepsilon)^{O(K)}$.
\end{lemma}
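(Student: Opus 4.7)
The plan is to bound the number of tuples $g = (c_1, \dots, c_k, p, t_0, \dots, t_k) \in \mathcal{G}$ by independently counting the choices for each component and multiplying the bounds together. I will split this into two parts: the coefficients $(k, p, c_1, \dots, c_k)$ and the coordinate sequence $(t_0, \dots, t_k)$.

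For the coefficients, I first observe that $k \le K$ gives $K$ options. Since $p(R_1) \in [1, p_{\max}]$, rounding into powers of $1+\varepsilon$ yields $O(\log p_{\max}/\varepsilon)$ options for $p$. Similarly, since the ratio between any two rectangle costs is bounded by $C$, there are $O(\log C/\varepsilon)$ options for each $c_i$, so together the cost exponents contribute $(\log C/\varepsilon)^{O(K)}$.

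The bulk of the work is bounding the number of admissible coordinate sequences $t_0 < t_1 < \dots < t_k$. Here I will use two facts: (i) since $W \in \rows^{\cross}_g$, the row crosses the middle of $A$, so there is an index $i^{\ast} \in \{1,\ldots,k\}$ with $t_{i^{\ast}-1} < \mi(A) \le t_{i^{\ast}}$; and (ii) by $\delta$-well-structuredness, if $h_i$ is the unique integer with $2^{h_i} \le w_i < 2^{h_i+1}$ where $w_i := t_i - t_{i-1}$, then both $t_{i-1}$ and $t_i$ are multiples of $\delta \cdot 2^{h_i}$. I will then enumerate: the index $i^{\ast}$ ($K$ options); for each $i$ the level $h_i \in \{0,\dots,\log(\ri(A)-\lef(A))\}$ ($O(\log T)$ options per $i$); for each $i$ the width $w_i$, being a multiple of $\delta\cdot 2^{h_i}$ in $[2^{h_i}, 2^{h_i+1})$ ($O(1/\delta)$ options per $i$); and the anchor $t_{i^{\ast}-1}$, which is a multiple of $\delta\cdot 2^{h_{i^{\ast}}}$ in the length-$2^{h_{i^{\ast}}+1}$ interval $(\mi(A)-2^{h_{i^{\ast}}+1}, \mi(A)]$ ($O(1/\delta)$ options). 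Once these data are fixed, the remaining $t_i$ are determined by accumulating widths from $t_{i^{\ast}-1}$ in both directions, so the coordinate sequence contributes $(\log T/\delta)^{O(K)}$ in total.

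Multiplying both estimates gives the desired bound $(1/\delta \cdot \log C \cdot \log n \cdot \log p_{\max} \cdot \log T / \varepsilon)^{O(K)}$, where the $\log n$ factor comfortably absorbs the polynomial-in-$K$ prefactors via $K \le n$. I do not expect major obstacles; the only point requiring care is the bookkeeping around the straddling rectangle when $\mi(A)$ coincides with some $t_i$, which is resolved uniformly by the strict/non-strict inequality convention chosen above.
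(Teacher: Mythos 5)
Your proof is correct and follows essentially the same route as the paper: count the discrete choices for $k$, $p$, and the $c_i$'s, then use $\delta$-well-structuredness together with the adjacency of rectangles in a row to anchor the coordinate sequence at the rectangle straddling $\mi(A)$, giving $(\log T/\delta)^{O(K)}$ options for the $t_i$'s. Your bookkeeping (per-rectangle levels and widths plus one anchor) is just a mild rearrangement of the paper's enumeration (length and offset of $R_{\mi}$, its index, and the lengths of the remaining rectangles), so there is nothing substantive to add.
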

\begin{proof}
	Let $g\in \G$ and consider a row $W\in \rows^{\cross}_g$.  Let $R_\mi\in W$ be the rectangle intersecting $\mi(A)\times \RR$, i.e.\ with $\lef(R)\le\mi(A)$ and $\ri(R)> \mi(A)$.
	We can limit the number of options for $g$ as follows:
	\begin{itemize}
		\item There are $K$ options for the number $k$ of rectangles in row $W$.
		\item Recall that $C=\frac{\max_{R\in \R}c(R)}{\min_{R\in \R}c(R)}$. So there are only only $O((\log C)/\varepsilon)$ different integers $c_i$ for which there exists a rectangle $R\in \R$ with $(1+\varepsilon)^{c_i}\leq c(R)\leq (1+\varepsilon)^{c_i+1}$. Thus there are $O(((\log C)/\varepsilon)^K)$ options for $c_1, \dots, c_k$.
		\item There are $O((\log p_{\max})/\varepsilon)$ options for $p$ as $p(R)\leq p_{\max}$ for all $R\in \R$.
		\item The length of any rectangle $R\in W$ is at most $T$. So there are $\log T$ possible values of $h$ such that $2^h\leq \ri(R)-\lef(R)\leq 2^{h+1}$. As the instance is $\delta$-well-structured, there are $O((\log T)/\delta)$ possible lengths for any rectangle. So there are $O((\log T)/\delta)$ options for the length of $R_\mi$.
		\item The rectangle $R_\mi$ fulfills $\lef(R)\le\mi(A)$ and $\ri(R)> \mi(A)$ and as the instance is $\delta$-well-structured, there are only $1/\delta$ options for $\lef(R_\mi)$ when the length of $R_\mi$ is fixed.
		\item There are $k$ options which of the rectangles $R_1, \dots, R_k$ is $R_\mi$.
		\item The rectangles in a row are adjacent. So given the exact position for one rectangle (in this case $R_\mi$), it suffices to enumerate all possible lengths of the other rectangles, as this determines their position. As there are $O(\log T/\delta)$ options for the length of a single rectangle, there are $O((\log T/\delta)^{K-1})$ options for the rectangles $W\setminus \{R_\mi\}$.
	\end{itemize}
	So it suffices to know $k, c_1, \dots, c_k, p$, the lengths of the rectangles $R_1, \dots, R_k$, the position $\lef(R_\mi)$ and the information which of the rectangles $R_1, \dots, R_k$ is $R_\mi$ to determine the group. So $|\G|=O(K\cdot ((\log C)/\varepsilon)^K \cdot (\log p_{\max})/\varepsilon \cdot  ((\log T)/\delta)^K \cdot 1/\delta \cdot K)=(1/\delta\cdot \log C \cdot \log p_{\max} \cdot (\log T)/\varepsilon)^{O(K)}$.
\end{proof}
Let $S$ be an optimal solution to the given subproblem. Intuitively,
we want to guess the rectangles in $S$ in each row $\rows_{g}^{\cross}$
for some $g\in\mathcal{G}$. Consider a group $\rows_{g}^{\cross}$
for some $g\in\mathcal{G}$. Essentially, the main difference of the
rows within a group is their $y$-coordinate, since for the rectangles
in such a row the costs and the values are almost the same by the
definition of the groups. When two rectangles differ only by their
$y$-coordinate, every ray $L\in\L'$ intersecting with the top rectangle
also intersects with the bottom one. Therefore, intuitively it is
sufficient to select the same number of rectangles from the group $g$
as $S$ and select them greedily from bottom to top, if we select
by a factor $1+\epsilon$ more rectangles to compensate for the fact
that their values may differ by a factor $1+\epsilon$.

Formally for the group $g$, we do the following: For each $0\leq i\leq k$
let $\rows_{g,i}^{\cross}$ denote the rows in $\rows_{g}^{\cross}$
from which $S$ contains exactly $i$ rectangles. Let $n_{g,i}:=|\rows_{g,i}^{\cross}|$
denote the number of such rows. First, we guess the number $n_{g,i}$
for each $i\in[K]$. For each $i\in[K]$ with $n_{g,i}\leq1/\varepsilon$
we simply guess the corresponding rows $\rows_{g,i}^{\cross}$ and
for each row $W\in\rows_{g,i}^{\cross}$ we select its leftmost $i$
rectangles. Let $\APX_{g,i}$ denote the resulting rectangles. Let
$\rows_{g}^{\done}=\bigcup_{i\in [k]:n_{g,i}\leq1/\varepsilon}\rows_{g,i}^{\cross}$
denote all rows, from which we already guessed the rectangles in $S$.

We go through the indices $i$ with $n_{g,i}\geq1/\varepsilon$ in
decreasing order, i.e., from $K$ to 1. Consider an index $i\in[K]$.
Intuitively, in the $\lfloor(1+2\varepsilon)n_{g,i}\rfloor$ bottom
rows where no rectangle is selected yet, we add the first $i$ rectangles
to our solution. Formally, let $\rows_{g,i}^{*}$ denote the bottom-most
$\lfloor(1+2\varepsilon)n_{g,i}\rfloor$ rows in $\rows_{g}\setminus(\rows_{g}^{\done}\cup\bigcup_{i':i<i'\leq K,n_{g,i'}>1/\varepsilon}\rows_{g,i'}^{*})$.
Now let $\APX_{g,i}$ denote all rectangles that are within the first
$i$ rectangles in some row $W\in\rows_{g,i}^{*}$. We also select
$\APX_{g,i}$. We define $\APX_{g}:=\bigcup_{1\leq i\leq k}\APX_{g,i}$
and $S_{g}:=\bigcup_{W\in\rows_{g}}S\cap W$. In the following lemma
we prove that $\APX_{g}$ is not much more expensive than $S_{g}$. 

\begin{lemma}\label{lem:tardinessAPXgCost}
We have that $c(\APX_{g})\leq(1+5\varepsilon)c(S_{g})$.
\end{lemma}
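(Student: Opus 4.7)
The plan is to bound $c(\APX_g)$ index by index (over $i \in \{1,\dots,k\}$), exploiting that within a group $\rows_g^{\cross}$ all rows look alike up to $(1+\varepsilon)$ factors. Concretely, by the definition of $g = (c_1,\dots,c_k,p,t_0,\dots,t_k)$, the $j$-th rectangle of any row $W \in \rows_g^{\cross}$ has cost in $[(1+\varepsilon)^{c_j}, (1+\varepsilon)^{c_j+1})$. Hence if I set $M_i := \max_{W \in \rows_g^{\cross}} \sum_{j=1}^i c(R_j^W)$ and $m_i := \min_{W \in \rows_g^{\cross}} \sum_{j=1}^i c(R_j^W)$, then $M_i \le (1+\varepsilon) m_i$ for every $i \in [k]$.

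I would then analyze the two regimes of the algorithm separately. For each $i \in [k]$ with $n_{g,i} \le 1/\varepsilon$, the algorithm guesses $\rows_{g,i}^{\cross}$ correctly and takes the leftmost $i$ rectangles of each such row; since $S$ also takes a prefix of length exactly $i$ from every such row, the selections coincide, giving $c(\APX_{g,i}) = c(S \cap \bigcup_{W \in \rows_{g,i}^{\cross}} W)$. For each $i$ with $n_{g,i} > 1/\varepsilon$, the algorithm takes the leftmost $i$ rectangles of $\lfloor (1+2\varepsilon) n_{g,i} \rfloor$ rows, so
\[
c(\APX_{g,i}) \le (1+2\varepsilon)\, n_{g,i}\, M_i \le (1+2\varepsilon)(1+\varepsilon)\, n_{g,i}\, m_i,
\]
while $c(S \cap \bigcup_{W \in \rows_{g,i}^{\cross}} W) \ge n_{g,i}\, m_i$ because $S$ selects $i$ rectangles of cost at least $m_i$ from each of the $n_{g,i}$ rows in $\rows_{g,i}^{\cross}$. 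Using $(1+2\varepsilon)(1+\varepsilon) = 1 + 3\varepsilon + 2\varepsilon^2 \le 1 + 5\varepsilon$ (for $\varepsilon \le 1$), I conclude $c(\APX_{g,i}) \le (1 + 5\varepsilon)\, c(S \cap \bigcup_{W \in \rows_{g,i}^{\cross}} W)$ in both regimes.

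Summing over $i = 1,\dots,k$ and using that $\rows_g^{\cross}$ partitions into the $\rows_{g,i}^{\cross}$ (with $i=0$ contributing nothing to $c(S_g)$), I obtain
\[
c(\APX_g) = \sum_{i=1}^k c(\APX_{g,i}) \le (1+5\varepsilon) \sum_{i=1}^k c\bigl(S \cap \textstyle\bigcup_{W \in \rows_{g,i}^{\cross}} W\bigr) = (1+5\varepsilon)\, c(S_g).
\]
The main subtlety I expect is verifying feasibility of the greedy step, namely that when processing the $i$'s in decreasing order the pool $\rows_g^{\cross} \setminus (\rows_g^{\done} \cup \bigcup_{i' > i} \rows_{g,i'}^*)$ always contains at least $\lfloor (1+2\varepsilon) n_{g,i} \rfloor$ rows. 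This relies on rows in $\rows_{g,0}^{\cross}$ (and the not-yet-consumed $\rows_{g,i'}^{\cross}$ with $i' < i$) providing the extra $2\varepsilon$-fraction; if it ever fails, one can fall back to taking all remaining rows in the pool, which only shrinks $\APX_{g,i}$ and preserves the upper bound.
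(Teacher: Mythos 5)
Your proposal is correct and follows essentially the same argument as the paper: handle the guessed case ($n_{g,i}\le 1/\varepsilon$) by exact coincidence with $S$, and in the greedy case compare $(1+2\varepsilon)n_{g,i}$ rows at the maximum prefix cost against $n_{g,i}$ rows at the minimum prefix cost, using that within a group these differ by at most a $(1+\varepsilon)$ factor, so $(1+\varepsilon)(1+2\varepsilon)\le 1+5\varepsilon$. Your extra remark about the pool possibly running short does not affect this cost bound, since taking fewer rows only decreases $c(\APX_{g,i})$.
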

\begin{proof}
	Let $g=(c_{1},\dots c_{k},p,t_{0},\dots,t_{k})\in \G$. By construction we have $\APX_g\cap \bigcup_{W\in \rows^{\done}_g}=S_g\cap \bigcup_{W\in \rows^{\done}_g}W$. So we can intuitively ignore the rows $\rows^{\done}_g$.
	Furthermore for each $i\leq k$ with $n_{g, i}>1/\varepsilon$ we have that $\APX_{g, i}$ contains the first $i$ rectangles from $\lfloor(1+2\varepsilon)n_{g,i}\rfloor$ rows. Due to the definition of $g$, we have that the cost of the first $i$ rectangles in $g$ is at least $\sum_{i'\leq i}(1+\varepsilon)^{c_{i'}}$ and at most $\sum_{i'\leq i}(1+\varepsilon)^{c_{i'}+1}$.
	So $c(\APX_{g, i})\leq (1+2\varepsilon)n_{g,i}\sum_{i'\leq i}(1+\varepsilon)^{c_{i'}+1}$ and  $c(S\cap \bigcup_{W\in\rows^{\cross}_{g, i}}W)\geq n_{g,i}\sum_{i'\leq i}(1+\varepsilon)^{c_{i'}}$. 
	This implies $c(\APX_{g, i})\leq (1+\varepsilon)(1+2\varepsilon)c(S\cap \bigcup_{W\in\rows^{\cross}_{g, i}}W)\leq (1+5\varepsilon)c(S\cap \bigcup_{W\in\rows^{\cross}_{g, i}}W)$.
	So we obtain
	\begin{align*}
		c(\APX_{g})&\leq c(\APX_g\cap \bigcup_{W\in \rows^{\done}_g})+\sum_{i\in [k]:n_{g, i}>1/\varepsilon}c(\APX_{g, i})\\
		&\leq c(S_g\cap \bigcup_{W\in \rows^{\done}_g})+\sum_{i\in [k]:n_{g, i}>1/\varepsilon}(1+5\varepsilon)c(S\cap \bigcup_{W\in\rows^{\cross}_{g, i}}W)\\
		&\leq (1+5\varepsilon)c(\sum_{i\leq k}\rows^{\cross}_{g, i})\\
		&=(1+5\varepsilon)c(S_g)
	\end{align*}
This completes the proof.
\end{proof}
Next, we also show that it covers the same amount on any ray as $S_g$, so it is a good substitute for $S_g$.
\begin{lemma}\label{lem:tardinessAPXgRays}
	For any ray $L(s,t)\in \L'$, we have that $\sum_{R\in\APX_{g}:R\cap L(s,t)\neq\emptyset}p(R)\geq\sum_{R\in S_{g}:R\cap L(s,t)\neq\emptyset}p(R)$.
\end{lemma}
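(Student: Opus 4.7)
The plan is to compare $\APX_g$ and $S_g$ column by column. Fix $L(s,t)\in\L'$. Since the tuple defining $g$ fixes the $x$-coordinates $t_0<t_1<\cdots<t_k$ of every row $W\in\rows_g^{\cross}$, either the ray intersects no rectangle of $g$ (and both sides of the claim vanish) or there is a unique column index $i^*\in[k]$ with $t_{i^*-1}\le t<t_{i^*}$. In the latter case, for every $W\in\rows_g^{\cross}$ with $\projy(W)\le s$ the only rectangle of $W$ met by $L(s,t)$ is its $i^*$-th rectangle, whose value lies in $[(1+\epsilon)^p,(1+\epsilon)^{p+1})$. Thus the coverage of $L(s,t)$ from group $g$ equals the sum of $p(W)$ over those rows at height $\le s$ from which at least $i^*$ rectangles are selected.

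On the rows in $\rows_g^{\done}$ the algorithm selects the same prefixes as $S$ by construction, so these contributions cancel and it suffices to compare contributions from $\rows_g\setminus\rows_g^{\done}$. Let $B(s)$ be the set of rows of $\rows_g\setminus\rows_g^{\done}$ with $\projy(W)\le s$, set $M(s):=|B(s)|$, $I^*:=\{i:n_{g,i}>1/\epsilon\}$, $N_{i^*}:=\sum_{i\in I^*,\,i\ge i^*}n_{g,i}$, and $N^+_{i^*}:=\sum_{i\in I^*,\,i\ge i^*}\lfloor(1+2\epsilon)n_{g,i}\rfloor$. The key structural observation I would establish next is that, because the indices $i\in I^*$ are processed in decreasing order and each $\rows_{g,i}^{*}$ is taken as the bottom-most still-unassigned rows of $\rows_g\setminus\rows_g^{\done}$, the union $\bigcup_{i\in I^*,\,i\ge i^*}\rows_{g,i}^{*}$ coincides with the bottom-most $N^+_{i^*}$ rows of $\rows_g\setminus\rows_g^{\done}$; these are exactly the rows from the big part from which $\APX_g$ selects at least $i^*$ rectangles.

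From here the proof splits into two elementary cases. If $M(s)\le N^+_{i^*}$, then every row of $B(s)$ lies in that bottom prefix, so $\APX_g$'s big-part coverage of $L(s,t)$ equals $\sum_{W\in B(s)}p(W)$, which trivially dominates $S$'s big-part coverage. If instead $M(s)>N^+_{i^*}$, the bottom $N^+_{i^*}$ rows all belong to $B(s)$, and $\APX_g$'s big-part coverage is at least $N^+_{i^*}(1+\epsilon)^p$, whereas $S$'s big-part coverage is at most $N_{i^*}(1+\epsilon)^{p+1}$ by the value bounds of the group; the desired inequality thus reduces to $N^+_{i^*}\ge(1+\epsilon)N_{i^*}$. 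This last inequality is where the choice of the $(1+2\epsilon)$-expansion is calibrated: since $n_{g,i}>1/\epsilon$ for $i\in I^*$ gives $\epsilon n_{g,i}>1$, we obtain $\lfloor(1+2\epsilon)n_{g,i}\rfloor\ge(1+\epsilon)n_{g,i}$ term by term, and summing over $i\in I^*$ with $i\ge i^*$ finishes the argument. The main subtlety is verifying that the bottom-up greedy truly delivers the bottom prefix of $\rows_g\setminus\rows_g^{\done}$ across the various selected indices; once this is pinned down, the two cases are routine.
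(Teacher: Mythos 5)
Your proposal is correct and follows essentially the same argument as the paper: cancel the $\rows_g^{\done}$ rows, observe that the ray meets exactly the $i^*$-th rectangle of each group row at height at most $s$, use that the bottom-up greedy makes $\bigcup_{i\ge i^*}\rows_{g,i}^{*}$ a bottom prefix of $\rows_g\setminus\rows_g^{\done}$, and close with the value bounds together with $\lfloor(1+2\epsilon)n_{g,i}\rfloor\ge(1+\epsilon)n_{g,i}$ for $n_{g,i}>1/\epsilon$. The only cosmetic difference is that you split cases by comparing $M(s)$ with $N^+_{i^*}$, whereas the paper splits on whether some pool-row rectangle meeting the ray is missing from $\APX_g$ and argues via a witness row $W'$; the two case distinctions are equivalent and both handle the boundary situations correctly.
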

\begin{proof}
Let $g=(c_{1},\dots c_{k},p,t_{0},\dots,t_{k})\in \G$. By construction we have $\APX_g\cap \bigcup_{W\in \rows^{\done}_g}=S_g\cap \bigcup_{W\in \rows^{\done}_g}W$. So we can intuitively ignore the rows $\rows^{\done}_g$.
Let $L(s,t)\in \L'$ be a ray. 
In each row $W\in\rows^{\done}_g$ the sets $\APX_g$ and $S_g$ contain the same rectangles, so they also cover the same amount on $L(s,t)$. 
Suppose that there is no row $W\in \rows_g \setminus \rows_g^{\done}$ with a rectangle $R\in W\setminus \APX_g$ and $R\cap L(s,t)\neq \emptyset$. 
Then every rectangle $\{R\in S_g:R\cap L(s,t)\neq \emptyset\}$ is also in $\APX_g$, which yields the lemma.
So suppose now that there is a row $W'\in \rows_g \setminus \rows_g^{\done}$ such that there exists a rectangle $R'\in W'\setminus \APX_g$ with $R\cap L(s,t)\neq \emptyset$.
Let $i'\leq k$ such that $t_{i'-1}\leq t <t_{i'}$. 
Then $\lef(R')=t_{i'-1}$ and $\ri(R')=t_{i'}$.
For each $i\geq i'$, we have $W'\not \in \rows_{g, i}^*$ as $\APX_g$ contains the first $i$ rectangles from each row in $\rows_{g, i}^*$, but not the $i$-th rectangle $R'$ in row $W'$.
Recall that $\rows_{g,i}^{*}$ denotes the bottom-most
	$\lfloor(1+2\varepsilon)n_{g,i}\rfloor$ rows in $\rows_{g}\setminus(\rows_{g}^{\done}\cup\bigcup_{i':i<i'\leq K,n_{g,i'}>1/\varepsilon}\rows_{g,i'}^{*})$. This implies that all rows in $\rows_{g, i}^*$ are below $W'$.
As the $i'$-th rectangle in each such row is in $\APX_g$ and intersects $L(s,t)$, we have
$|\{R\in \APX_g \setminus \bigcup_{W\in \rows^{\done}_g }W: R\cap L(s,t)\neq \emptyset\}|\geq \sum_{i\in \{i', \dots, k\}: n_{g, i}>1/\varepsilon}\lfloor(1+2\varepsilon)n_{g,i}\rfloor$. 
By definition of $n_{g,i}$ we have $|\{R\in S_g \setminus \bigcup_{W\in \rows^{\done}_g }W:R\cap L(s,t)\neq \emptyset\}|\leq \sum_{i\in \{i', \dots, k\}: n_{g,i}>1/\varepsilon}n_{g,i}$. Recall that each rectangle $R$ in some row $W\in \rows_g$ fulfills $(1+\varepsilon)^p\leq p(R)<(1+\varepsilon)^{p+1}$. This implies
\begin{align*}
	p(\{R\in \APX_g \setminus \bigcup_{W\in \rows^{\done}_g }W: R\cap L(s,t)\neq \emptyset\}|)
	&\geq (1+\varepsilon)^{p}\sum_{i\in \{i', \dots, k\}: n_{g, i}>1/\varepsilon}\lfloor(1+2\varepsilon)n_{g,i}\rfloor\\
	&\geq (1+\varepsilon)^{p+1}\sum_{i\in \{i', \dots, k\}: n_{g, i}>1/\varepsilon}n_{g,i}\\
	&\geq p(\{R\in S_g \setminus \bigcup_{W\in \rows^{\done}_g }W:R\cap L(s,t)\neq \emptyset\})
\end{align*}
And as $\{R\in \APX_g \cap \bigcup_{W\in \rows^{\done}_g }W\}=\{R\in S_g \cap \bigcup_{W\in \rows^{\done}_g }W\}$, this completes the proof of the lemma.
\end{proof}
Finally, we define the rays $\mathcal{L}'_{\lsub}$ and $\mathcal{L}'_{\rsub}$
for the left and right subproblem, respectively. Intuitively, we define
that $\mathcal{L}'_{\lsub}$ contain all rays in $\L'$ that intersect
$A_{\lsub}$, but we adjust their demands if they intersect some rectangles
in $\APX_{g}$ for some groups $g\in\mathcal{G}$. Formally, for each
ray $L(s,t)\in\L'$ with $(t,s)\in A_{\lsub}$ we
add a
ray $L'(s,t)$ to $\L'_\lsub$
with a demand of $d(L'(s,t)):=\max\{0,d(L(s,t))-\sum_{g\in\G}\sum_{R\in\APX_{g}:R\cap L(s,t)\neq\emptyset}p(R)\}$.
We define the rays $\mathcal{L}'_{\rsub}$ analogously. We recursively
compute solutions $\APX_{\lsub}$ and $\APX_{\rsub}$ for the left
and right subproblem. If these solutions exist, we obtain the candidate solution  $\bigcup_{g\in\mathcal{G}}\APX_{g}\cup\APX_{\lsub}\cup\APX_{\rsub}$. We output the minimal cost candidate solution $\APX$ among all guesses.
This completes the description of the algorithm.

\subsection{Analysis}

We show that the computed solution is feasible, has small cost, and
that our algorithm has a running time of $2^{(1/\delta \cdot\log C \cdot  K \cdot \log (n \cdot p_{\max}\cdot T)/\varepsilon)^{O(K)}}$. 
First, we show that $S$ restricted to the rectangles
in $\R'_{\lsub}$ is a feasible solution for the left subproblem and
$S$ restricted to the rectangles in $\R'_{\rsub}$ is a feasible
solution for the right subproblem, as the following lemma shows.
\begin{lemma}\label{lem:left-right-subproblems} Let $S_{\lsub}=\R'_{\lsub}\cap S$
	and $S_{\rsub}=\R'_{\rsub}\cap S$. Then $S_{\lsub}$ is a feasible
	solution for the left subproblem and $S_{\rsub}$ is a feasible solution
	for the right subproblem. 
\end{lemma}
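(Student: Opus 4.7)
I will prove feasibility of $S_\lsub$ for the left subproblem; the argument for $S_\rsub$ is symmetric. Two things need to be verified: the prefix condition and the demand condition for every ray in $\L'_\lsub$.

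\textbf{Prefix condition.} Recall that $\R'_\lsub$ is exactly the union of those rows $W\in\rows'$ for which every rectangle of $W$ is contained in $A_\lsub$. For each such row $W$, the set $S\cap W$ is a prefix of $W$ because $S$ is feasible for the parent subproblem; and by definition $S_\lsub\cap W = S\cap W$. Hence $S_\lsub$ contains a prefix of every row of $\R'_\lsub$.

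\textbf{Demand condition.} Fix a ray $L'(s,t)\in\L'_\lsub$ arising from the original ray $L(s,t)\in\L'$ with $(t,s)\in A_\lsub$, so $t<\mi(A)$. I will first decompose the contribution of $S$ to $L(s,t)$ according to the three categories of rows. A rectangle $R$ with $R\cap L(s,t)\neq\emptyset$ satisfies $\lef(R)\le t<\mi(A)$, which rules out every rectangle coming from a row whose rectangles lie entirely in $A_\rsub$. Therefore, every rectangle of $S$ intersecting $L(s,t)$ is either in $S_\lsub$ or in $\bigcup_{g\in\G} S_g$, so by feasibility of $S$
\begin{equation*}
\sum_{R\in S_\lsub : R\cap L(s,t)\neq\emptyset}\!\!\!\!\!p(R)\ \ +\ \sum_{g\in\G}\sum_{R\in S_g : R\cap L(s,t)\neq\emptyset}\!\!\!\!\!p(R)\ \ \ge\ \ d(L(s,t)).
\end{equation*}
Now I would invoke Lemma~\ref{lem:tardinessAPXgRays}, which, under the assumption that the guesses of the $n_{g,i}$ and of $\rows_{g,i}^\cross$ (for $n_{g,i}\le 1/\varepsilon$) agree with $S$, gives for every group $g$ and every ray that $\APX_g$ covers at least as much as $S_g$ does. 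Summing over $g\in\G$ and combining with the displayed inequality yields
\begin{equation*}
\sum_{R\in S_\lsub : R\cap L(s,t)\neq\emptyset}\!\!\!\!\!p(R)\ \ \ge\ \ d(L(s,t))\ -\ \sum_{g\in\G}\sum_{R\in\APX_g : R\cap L(s,t)\neq\emptyset}\!\!\!\!\!p(R)\ \ =\ \ d(L'(s,t)),
\end{equation*}
using the definition of $d(L'(s,t))$ and the fact that the right-hand side is trivially bounded below by $0$ as well (covered by the $\max\{0,\cdot\}$ in the definition). This is exactly the demand condition for $L'(s,t)$.

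\textbf{Main obstacle.} There is no deep step; the only subtle point is the conceptual observation that any rectangle of $S$ intersecting a left-side ray lies in $\R'_\lsub\cup\bigcup_{g}\rows_g^\cross$ (i.e.\ rows completely in $A_\rsub$ contribute nothing to left-side rays), which is why the feasibility of $S$ transfers to $S_\lsub$ once the demand is reduced by the correctly guessed rectangles $\APX_g$. The symmetric argument for $S_\rsub$ uses the mirror observation that rows entirely in $A_\lsub$ do not intersect rays with $t\ge\mi(A)$.
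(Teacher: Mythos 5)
Your proposal is correct and follows essentially the same route as the paper: the prefix condition is inherited from $S$, and the demand condition combines the feasibility of $S$, the observation that rectangles from rows assigned to the other side cannot intersect the ray, and Lemma~\ref{lem:tardinessAPXgRays} to replace the coverage of $S_g$ by that of $\APX_g$ (the paper phrases the decomposition as an equality, you as an inequality, which is immaterial). Your remark that the sum is nonnegative, so the $\max\{0,\cdot\}$ in the definition of $d(L'(s,t))$ causes no issue, is a correct handling of a detail the paper leaves implicit.
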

\begin{proof}
	We prove that $S_{\lsub}$ is a feasible solution for the left subproblem, the proof that $S_{\rsub}$ is a feasible solution for the right subproblem is analogous to the proof for $S_\lsub$.
	From the definition of $S_{\lsub}$ and the fact that $S$ contains a prefix in each row, it follows that $S_\lsub$ contains a prefix in each row as well.
	So it suffices to show that $S_\lsub$ covers all rays an $\L'_\lsub$. Let $L'(s,t)\in \L'_\lsub$ be a ray. Then there is a ray $L(s,t)\in \L'$ and $d(L'(s,t)):=\max\{0,d(L(s,t))-\sum_{g\in\G}\sum_{R\in\APX_{g}:R\cap L(s,t)\neq\emptyset}p(R)\}$. Note that there is no rectangle $R\in \R'_\rsub$ with $R\cap L(s,t)\neq \emptyset$. So
	\begin{align*}
		p(\{R\in S_\lsub:R\cap L(s,t)\neq \emptyset\})
		&=p(\{R\in S:R\cap L(s,t)\neq \emptyset\})-\sum_{g\in \G}p(\{R\in S_g:R\cap L(s,t)\neq \emptyset\})\\
		&\geq d(L(s,t))-\sum_{g\in \G}p(\{R\in \APX_g:R\cap L(s,t)\neq \emptyset\})\\
		&= d(L'(s,t))\\
	\end{align*}
	where the inequality follows from Lemma~\ref{lem:tardinessAPXgRays}. This show that $S_\lsub$ is a feasible solution and thus completes the proof.
\end{proof}
So we actually obtain a solution $\APX$.
Now we prove that $\APX$ is feasible.
\begin{lemma}\label{lem:tardinessFeasibility} 
	The set $\APX$ is a feasible solution. 
\end{lemma}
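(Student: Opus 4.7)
The plan is to prove by induction on $\ri(A)-\lef(A)$ that every recursive call of the algorithm produces a feasible solution to its subproblem. The base case $\ri(A)-\lef(A)=1/2$ has $\R'=\emptyset$, so feasibility requires $d(L)=0$ for every $L\in\L'$; this will follow once we know that a feasible guess exists, which is handled below via \Cref{lem:left-right-subproblems}.

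For the inductive step I would first verify the prefix property. Each row $W\in\rows'$ falls into exactly one of three classes. If $W\subseteq A_\lsub$, then $W$ is passed to the left subproblem as an unchanged row of $\R'_\lsub$, so by induction $\APX_\lsub$ selects a prefix of $W$; the case $W\subseteq A_\rsub$ is symmetric. If $W\in\rows^\cross$, it belongs to exactly one group $\rows_g^\cross$ and, within that group, is assigned either to some $\rows_{g,i}^\cross$ with $n_{g,i}\le 1/\varepsilon$ or to some $\rows_{g,i}^*$ with $n_{g,i}>1/\varepsilon$; in either case the algorithm adds the leftmost $i$ rectangles of $W$, which is a prefix. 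Since $\APX_\lsub$, $\APX_\rsub$ and the $\APX_g$'s act on disjoint collections of rows, the full set $\APX$ contains a prefix in each row.

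Next I would verify ray coverage. Let $L(s,t)\in\L'$ and, by the invariants of the call, assume without loss of generality $(t,s)\in A_\lsub$ (the right case is symmetric). The algorithm inserts into $\L'_\lsub$ a ray $L'(s,t)$ with demand
\begin{equation*}
d(L'(s,t))=\max\bigl\{0,\,d(L(s,t))-\textstyle\sum_{g\in\G}p(\{R\in\APX_g:R\cap L(s,t)\ne\emptyset\})\bigr\}.
\end{equation*}
By the inductive hypothesis, $\APX_\lsub$ covers $L'(s,t)$. Since every rectangle of $\R'_\rsub$ is contained in $A_\rsub$ and hence disjoint from $L(s,t)$, a rectangle of $\APX$ intersecting $L(s,t)$ belongs to $\APX_\lsub\cup\bigcup_{g\in\G}\APX_g$, and these two contributions use disjoint rows. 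Therefore
\begin{align*}
p(\{R\in\APX:R\cap L(s,t)\ne\emptyset\})
&\ge p(\{R\in\APX_\lsub:R\cap L'(s,t)\ne\emptyset\})\\
&\quad+\sum_{g\in\G}p(\{R\in\APX_g:R\cap L(s,t)\ne\emptyset\})\\
&\ge d(L'(s,t))+\bigl(d(L(s,t))-d(L'(s,t))\bigr)=d(L(s,t)),
\end{align*}
so $L(s,t)$ is covered.

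The main obstacle is that, for the induction to go through, the recursive calls must actually return feasible solutions for at least one guess considered by the algorithm. To address this I would consider the guess induced by an optimal solution $S$: guess each $n_{g,i}$ correctly and, whenever $n_{g,i}\le 1/\varepsilon$, guess the rows $\rows_{g,i}^\cross$ correctly. \Cref{lem:left-right-subproblems} then says that $S\cap\R'_\lsub$ and $S\cap\R'_\rsub$ are feasible for the left and right subproblems, so by induction the recursive calls succeed on those subproblems. This guess therefore yields a feasible candidate, and since $\APX$ is chosen as the minimum-cost candidate over all guesses it is itself feasible, completing the induction.
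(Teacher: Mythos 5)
Your proposal is correct and follows essentially the same route as the paper: prefix feasibility is inherited from $\APX_\lsub$, $\APX_\rsub$ and the $\APX_g$'s on disjoint sets of rows, ray coverage follows from the reduced demands $d(L'(s,t))$ plus the contribution of $\bigcup_g\APX_g$, and the existence of successful recursive calls is guaranteed for the correct guess via \Cref{lem:left-right-subproblems}. The only difference is that you make the induction and the "at least one guess yields a candidate" step more explicit than the paper does, which is a matter of presentation rather than substance.
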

\begin{proof}
Suppose that we recursively obtained solutions $\APX_\lsub$ and $\APX_\rsub$ for the left and right subproblem. Note that by Lemma~\ref{lem:left-right-subproblems} this is at least the case for the correct guesses. Then $\APX_\lsub$ and $\APX_\rsub$ contain a prefix of the rectangles in each row, as they are feasible solutions to the respective subproblems. For each $g\in \G$ the set $\APX_g$ contains a prefix from each row by construction.
Thus, also $\APX$ contains a prefix from each row.
So it remains to show that every ray $L(s,t)\in \L'$ is covered. Suppose that $L(s,t)$ intersects $A_\lsub$, as the proof for the case when $L(s,t)$ intersects $A_\rsub$ is analogue. Then there is a ray $L'(s,t)\in \L'_\lsub$ with $d(L'(s,t)):=\max\{0,d(L(s,t))-\sum_{g\in\G}\sum_{R\in\APX_{g}:R\cap L(s,t)\neq\emptyset}p(R)\}$.
As $\APX_\lsub$ is a feasible solution for the left subproblem, we have
$p(\{R\in \APX_\lsub:R\cap L(s,t)\neq \emptyset\})\geq d(L'(s,t))$. And as $\APX_\lsub$ and the sets $\APX_g$ for $g\in \G$ are disjoint, we have
\begin{align*}
	p(\{R\in \APX:R\cap L(s,t)\neq \emptyset\})&=p(\{R\in \APX_\lsub:R\cap L(s,t)\neq \emptyset\})+\sum_{g\in \G}p(\{R\in \APX_g:R\cap L(s,t)\neq \emptyset\})\\
	&\geq d(L'(s,t))+\sum_{g\in \G}p(\{R\in \APX_g:R\cap L(s,t)\neq \emptyset\})\\
	&\geq d(L(s,t))
\end{align*}
This shows that $L(s,t)$ is covered and thus completes the proof.
\end{proof}
Next, we bound the cost of $\APX$. Lemma~\ref{lem:tardinessAPXgCost}
already bounded the cost of $\APX_{g}$ for each $g\in \G$. Thus, it remains only to
bound the cost of $\APX_{\lsub}$ and $\APX_{\rsub}$. 
By combining Lemmas~\ref{lem:tardinessAPXgCost} and \ref{lem:left-right-subproblems}
we obtain an approximation factor of $1+5\varepsilon$.
\begin{lemma}\label{lem:tardinessApproximationfactor}
We have that $c(\APX)\leq(1+5\varepsilon)c(S)$. 
\end{lemma}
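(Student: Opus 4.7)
The plan is to establish the claim by induction on $\ri(A)-\lef(A)$, strengthened to the statement that $c(\APX) \le (1+5\varepsilon)\,c(S)$ for \emph{every} feasible solution $S$ of the current subproblem, not only the optimum. The base case $\ri(A)-\lef(A)=1/2$ is immediate: then $\R'=\emptyset$, so both $S$ and $\APX$ are empty.

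For the inductive step I would consider the execution of the algorithm on the ``correct'' guesses for the given $S$, i.e., the values $n_{g,i} := |\{W\in\rows_g^{\cross} : |S\cap W|=i\}|$ together with the explicit row sets $\rows_{g,i}^{\cross}$ whenever $n_{g,i}\le 1/\varepsilon$. Let $\APX^{\star} := \bigcup_{g\in\G}\APX_g \cup \APX_\lsub \cup \APX_\rsub$ denote the candidate produced for these particular guesses. Since the algorithm outputs the minimum-cost candidate over all enumerated guesses, $c(\APX)\le c(\APX^{\star})$.

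I would then decompose $c(\APX^{\star})$ along its three sources of rectangles and bound each piece separately. For each group $g\in\G$, Lemma~\ref{lem:tardinessAPXgCost} gives $c(\APX_g)\le (1+5\varepsilon)\,c(S_g)$, where $S_g := S\cap\bigcup_{W\in\rows_g^{\cross}} W$. By Lemma~\ref{lem:left-right-subproblems}, the sets $S_\lsub := S\cap\R'_\lsub$ and $S_\rsub := S\cap\R'_\rsub$ are feasible for the left and right subproblems respectively, so the strengthened induction hypothesis yields $c(\APX_\lsub)\le (1+5\varepsilon)\,c(S_\lsub)$ and $c(\APX_\rsub)\le (1+5\varepsilon)\,c(S_\rsub)$. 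Because every row of $\rows'$ is in exactly one of the categories ``entirely in $A_\lsub$'', ``entirely in $A_\rsub$'', or ``crossing in some $\rows_g^{\cross}$'', the set $S$ partitions as $S_\lsub\sqcup S_\rsub\sqcup\bigsqcup_{g\in\G} S_g$, so $c(S)=c(S_\lsub)+c(S_\rsub)+\sum_{g\in\G} c(S_g)$. Summing the three bounds gives $c(\APX^{\star})\le (1+5\varepsilon)\,c(S)$ and hence $c(\APX)\le (1+5\varepsilon)\,c(S)$.

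The key point that makes the argument go through cleanly is that no additional factor accumulates across the $O(\log T)$ recursion levels: every rectangle of $S$ is ``charged'' exactly once, either inside some $\APX_g$ at the current level or strictly deeper inside $\APX_\lsub$ or $\APX_\rsub$. The only thing to verify carefully is that the correct-guess candidate is indeed among those enumerated (which holds since the algorithm tries all values of $n_{g,i}\in\{0,1,\dots,n\}$ and all subsets of rows of appropriate size) and that the strengthened inductive hypothesis really applies, which it does precisely because $S_\lsub,S_\rsub$ are feasible for the sub-instances defined by the algorithm via Lemma~\ref{lem:left-right-subproblems}.
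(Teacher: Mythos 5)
Your proof is correct and follows essentially the same route as the paper: induction on the width of $A$, applying Lemma~\ref{lem:tardinessAPXgCost} to each group, Lemma~\ref{lem:left-right-subproblems} plus the induction hypothesis to the two recursive solutions, and summing over the disjoint decomposition $S=S_\lsub\cup S_\rsub\cup\bigcup_g S_g$. Your explicit strengthening of the induction hypothesis to arbitrary feasible $S$ and your remark that the output is the cheapest candidate over all guesses merely make precise what the paper's proof leaves implicit.
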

\begin{proof}
We prove the lemma by induction on $\lef(A)-\ri(A)$. In the base case where $\lef(A)-\ri(A)=1/2$, we solve the problem optimally as then $\APX=S_g=\emptyset$.
So suppose $\lef(A)-\ri(A)>1/2$. Then by induction and Lemma~\ref{lem:left-right-subproblems}, we have that $c(\APX_\lsub)\leq (1+5\varepsilon)c(S_\lsub)$ and $c(\APX_\rsub)\leq (1+5\varepsilon)c(S_\rsub)$. By Lemma~\ref{lem:tardinessAPXgCost} we have $c(\APX_g)\leq (1+5\varepsilon)c(S_g)$. Altogether, we obtain 
\begin{align*}
	c(\APX)&=c(\APX_\lsub)+c(\APX_\rsub)+\sum_{g\in \G}c(\APX_g)\\
	&\leq (1+5\varepsilon)c(S_\lsub)+(1+5\varepsilon)c(S_\rsub)+\sum_{g\in \G}(1+5\varepsilon)c(S_g)\\
	&= (1+5\varepsilon)c(S)
\end{align*}
\end{proof}
Thus it only remains to bound the running time.
\begin{lemma} \label{lem:tardinessRunningtime}
	The running time of the algorithm can be bounded by $2^{(1/\delta \cdot \log C \cdot  K \cdot \log (n \cdot p_{\max}\cdot T)/\varepsilon)^{O(K)}}$. 
\end{lemma}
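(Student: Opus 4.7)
The plan is to bound the running time by multiplying the per-call branching factor by the recursion depth, and then verifying that the non-recursive work at each node of the recursion tree is polynomial. The recursion depth is $O(\log T)$: the area $A$ starts with width $T$ (a power of two) and its width is halved at every recursive call until the base case of width $1/2$, which is handled in constant time since $\R'=\emptyset$ and only the ray demands need to be inspected. Thus it remains to bound, at each recursive call, the number of guess combinations used before recursing, and to argue that constructing the subproblems from each guess takes polynomial time.

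For the per-call branching, I would analyze the guessing in two parts. First, the algorithm partitions the crossing rows $\rows^\cross$ into the groups $\G$, whose cardinality is bounded by Lemma~\ref{lem:tardinessGroupsNumber} as $|\G|\le(1/\delta\cdot\log C\cdot\log n\cdot\log p_{\max}\cdot\log T/\varepsilon)^{O(K)}$. For each group $g\in\G$ the algorithm guesses the tuple $(n_{g,0},\dots,n_{g,K})$, giving at most $(n+1)^{K+1}$ choices, and for each index $i$ with $n_{g,i}\le 1/\varepsilon$ it additionally guesses a subset of at most $1/\varepsilon$ rows from $\rows_{g,i}^\cross\subseteq\rows_g^\cross$, contributing at most $n^{O(1/\varepsilon)}$ options per $(g,i)$. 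Combined this gives at most $n^{O(K/\varepsilon)}$ choices per group, and the number of guess combinations across all groups is at most $n^{O(K|\G|/\varepsilon)}$. Crucially, once $(n_{g,i})_i$ is fixed, the sets $\rows_{g,i}^\ast$ (and hence $\APX_g$) are determined deterministically as the prescribed bottom-most rows, so no further guessing is needed to build $\APX_g$.

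Then I would multiply these factors. The recursion tree has branching factor at most $2\cdot n^{O(K|\G|/\varepsilon)}$ (the factor $2$ accounting for the left and right recursive calls) and depth $O(\log T)$, so its total size is at most $2^{O(K|\G|\log T\log n/\varepsilon)}$. Substituting the bound on $|\G|$ yields $2^{(1/\delta\cdot\log C\cdot K\cdot\log(n\cdot p_{\max}\cdot T)/\varepsilon)^{O(K)}}$. At each node of the tree, the non-guessing work—identifying $\rows^\cross$, forming the groups $\G$, building the rectangle sets $\APX_g$, computing the updated ray demands for $\L'_\lsub$ and $\L'_\rsub$, and selecting the cheapest candidate—is polynomial in $n$ and thus absorbed into the exponent. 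The main bookkeeping obstacle is verifying that the subproblems induced by a guess satisfy the invariants maintained along the recursion (areas of dyadic width with dyadic boundaries, rectangles contained in the area, rays contained in the vertical strip), which follows directly because the instance is $\delta$-well-structured, the splitting point $\mi(A)$ is a dyadic multiple, and by construction the rectangles passed to $\R'_\lsub$ and $\R'_\rsub$ are exactly those contained in the corresponding halves; combining this with Lemmas~\ref{lem:tardinessFeasibility} and \ref{lem:tardinessApproximationfactor} and the preprocessing of Lemma~\ref{lem:tardinessPreprocessing} then yields Lemma~\ref{lem:tardiness-algorithm}.
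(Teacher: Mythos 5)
Your proposal is correct and follows essentially the same route as the paper: bound the recursion depth by $O(\log T)$, bound the per-call guessing by $n^{O(K/\varepsilon)}$ per group times $|\G|$ groups via Lemma~\ref{lem:tardinessGroupsNumber}, note that the remaining per-node work is polynomial, and multiply, yielding $\left(n^{O(K|\G|/\varepsilon)}\right)^{O(\log T)}$ which matches the claimed bound after substituting the bound on $|\G|$.
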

\begin{proof}
	The recursion depth of our algorithm is $O(\log T)$. In each recursive call we have $|\G|\leq (1/\delta\cdot \log C\cdot \log p_{\max} \cdot \log T/\varepsilon)^{O(K)}$. 
	For each $g\in \G$ and each $i\in[K]$, we need to guess $n_{g, i}$ and if $n_{g, i}\leq 1/\varepsilon$ we also guess the rows $\rows^{\cross}_{g, i}$. 
	As $n_{g, i}\leq n$ and there are at most $n$ rows, this can be done in $n^{O(K/\varepsilon)}$ per group $g$. 
	For fixed guesses the remaining computation can be done in time $O(n)$. 
	Thus, the total running time is $((n^{O(K/\varepsilon)})^{|\G|})^{O(\log T)}$. By Lemma~\ref{lem:tardinessGroupsNumber}, this can be bounded by $2^{(1/\delta \cdot\log C \cdot  K \cdot \log (n \cdot p_{\max}\cdot T)/\varepsilon)^{O(K)}}$.
\end{proof}
Altogether, this implies Lemma~\ref{lem:tardiness-algorithm}.
\begin{proof}[Proof of Lemma~\ref{lem:tardiness-algorithm}]
	By Lemma~\ref{lem:tardinessFeasibility} the algorithm computes a feasible solution to every subproblem, so also to the root subproblem which is the output of the algorithm. 
	By Lemma~\ref{lem:tardinessApproximationfactor} we have $c(\APX)\leq (1+5\varepsilon)c(S)$, we obtain a $(1+5\varepsilon)$-approximation algorithm. 
	By Lemma~\ref{lem:tardinessRunningtime}, the running time is bounded by $2^{(1/\delta \cdot\log C \cdot  K \cdot \log (n \cdot p_{\max}\cdot T)/\varepsilon)^{O(K)}}$. So we can apply Lemma~\ref{lem:tardinessPreprocessing}, which yields the claimed result by rescaling~$\varepsilon$.
\end{proof}
\newpage

\bibliographystyle{plain}
\bibliography{references}
\end{document}